%
%
%
\documentclass[11pt]{article}
\usepackage[T1]{fontenc}
 
\usepackage{graphicx}
\graphicspath{{Fig/}}

\usepackage{amssymb,amsmath,fullpage,hyperref,supertabular}

\usepackage{scalerel}
\usepackage{tikz}
\usetikzlibrary{svg.path}

\definecolor{orcidlogocol}{HTML}{A6CE39}
\tikzset{
  orcidlogo/.pic={
    \fill[orcidlogocol] svg{M256,128c0,70.7-57.3,128-128,128C57.3,256,0,198.7,0,128C0,57.3,57.3,0,128,0C198.7,0,256,57.3,256,128z};
    \fill[white] svg{M86.3,186.2H70.9V79.1h15.4v48.4V186.2z}
                 svg{M108.9,79.1h41.6c39.6,0,57,28.3,57,53.6c0,27.5-21.5,53.6-56.8,53.6h-41.8V79.1z M124.3,172.4h24.5c34.9,0,42.9-26.5,42.9-39.7c0-21.5-13.7-39.7-43.7-39.7h-23.7V172.4z}
                 svg{M88.7,56.8c0,5.5-4.5,10.1-10.1,10.1c-5.6,0-10.1-4.6-10.1-10.1c0-5.6,4.5-10.1,10.1-10.1C84.2,46.7,88.7,51.3,88.7,56.8z};
  }
}

\newcommand\orcidicon[1]{\href{https://orcid.org/#1}{\mbox{\scalerel*{
\begin{tikzpicture}[yscale=-1,transform shape]
\pic{orcidlogo};
\end{tikzpicture}
}{|}}}}

\newtheorem{Property}{Property} 
\newtheorem{Remark}{Remark}
\newtheorem{Example}{Example}
\newtheorem{Proposition}{Proposition}
\newenvironment{proof}{\paragraph{Proof:}}{\hfill$\square$}

\def\hatP{{\hat{P}}}
\def\arccosh{\mathrm{arccosh}}
\def\bbP{\mathbb{P}}
\def\FR{\mathrm{FR}}
\def\SSPD{\mathrm{SSPD}}
\def\RieSEB{\mathrm{RieSEB}}
\def\calS{\mathcal{S}}
\def\calD{\mathcal{D}}
\def\supp{\mathrm{supp}}
\def\ball{\mathrm{ball}}
\def\calM{\mathcal{M}}
\def\calG{\mathcal{G}}
\def\SH{\mathbb{SH}}
\def\Sym{\mathrm{Sym}}
\def\dZ{\mathrm{d}Z}
\def\dY{\mathrm{d}Y}
\def\dP{\mathrm{d}P}
\def\dZbar{\mathrm{d}\bar Z}
\def\barZ{{\bar{Z}}}
\def\dtheta{\mathrm{d}\theta}
\def\SPC{\mathrm{SPC}}
\def\trace{\mathrm{trace}}
\def\proj{\mathrm{proj}}
\def\barN{{\overline{\mathcal{N}}}}
\def\dP{\mathrm{d}P}
\def\CO{\mathrm{CO}}

\def\GL{\mathrm{GL}}
\def\KL{\mathrm{KL}}

\def\Killing{\mathrm{Killing}}
\def\diag{\mathrm{diag}}

\def\SPD{\mathrm{SPD}}
\def\Aff{\mathrm{Aff}}
\def\bbR{\mathbb{R}}
\def\Fisher{\mathrm{Fisher}}
\def\vectortwo#1#2{{\left[\begin{array}{l}#1 \cr #2\end{array}\right]}}
\def\mattwotwo#1#2#3#4{{\left[\begin{array}{ll}#1 & #2\cr #3 & #4\end{array}\right]}}
\def\tr{\mathrm{tr}}

\def\inner#1#2{\langle #1, #2\rangle}
\def\calN{\mathcal{N}}

\def\SPD{\mathrm{SPD}}
\def\calP{\mathcal{P}}
 
\def\Length{\mathrm{Length}}
\def\dt{\mathrm{d}t}

\def\SS{\mathrm{SS}}
\def\dbeta{\mathrm{d}\beta}

\def\dim{\mathrm{dim}}
\def\Aff{\mathrm{Aff}}
\def\Var{\mathrm{Var}}
\def\vech{\mathrm{vech}}
\def\std{\mathrm{std}}

\def\barP{{\bar P}}
\def\dSigma{\mathrm{d}\Sigma}
\def\dmu{\mathrm{d}\mu}
\def\ds{\mathrm{d}s}
\def\st{\ :\ }
\def\bbR{\mathbb{R}}

\def\Cov{\mathrm{Cov}}
\def\SL{\mathrm{SL}}
\def\SO{\mathrm{SO}}

\sloppy
\begin{document}

\title{A numerical approximation method for the Fisher-Rao distance between multivariate normal distributions}

\author{Frank Nielsen\protect\orcidicon{0000-0001-5728-0726}\\
Sony Computer Science Laboratories Inc, Tokyo, Japan. }
 
\date{}

\sloppy 
\maketitle              
\begin{abstract}
We present a simple method to approximate Rao's distance between multivariate normal distributions based on discretizing curves joining normal distributions and approximating Rao's distances between successive nearby normal distributions on the curves by the square root of Jeffreys divergence, the symmetrized Kullback-Leibler divergence. 
We consider experimentally the linear interpolation curves in the ordinary, natural and expectation parameterizations of the normal distributions, and compare these curves with a curve derived from the Calvo and Oller's isometric embedding  of the Fisher-Rao $d$-variate normal manifold into the cone of $(d+1)\times (d+1)$ symmetric positive-definite  matrices [Journal of multivariate analysis 35.2 (1990): 223-242].
We report on our experiments and assess the quality of our approximation technique by comparing the numerical approximations with both lower and upper bounds.
Finally, we present several information-geometric properties of the Calvo and Oller's isometric embedding. 
\end{abstract}

\noindent {Keywords}: Fisher-Rao normal manifold;  symmetric positive-definite matrix cone; isometric embedding; information geometry

%
%
%
\section{Introduction}

\subsection{The Fisher-Rao normal manifold}

Let $\bbP(d)$ denote the set of symmetric positive-definite matrices, a convex regular cone, and let $\calN(d)=\{N(\mu,\Sigma) \st (\mu,\Sigma)\in\Lambda(d)=\bbR^d\times \bbP(d)\}$ denote the set of $d$-variate normal  distributions, MultiVariate Normals or  MVNs for short, also called Gaussian distributions.
A MVN distribution $N(\mu,\Sigma)$ has probability density function on the support $\bbR^d$:
$$
p_{\lambda=(\mu,\Sigma)}(x)= (2\pi)^{^\frac{d}{2}} |\Sigma|^{-\frac{1}{2}} \exp\left(-\frac{1}{2}(x-\mu)^\top\Sigma^{-1}(x-\mu)\right)
 ,\quad x\in\bbR^d.
$$

The statistical model $\calN(d)$ is of dimension $m=\dim(\Lambda(d))=d+\frac{d(d+1)}{2}=\frac{d(d+3)}{2}$ since it is identifiable, i.e.,
there is a one-to-one correspondence between $\lambda\in\Lambda(d)$ and $N(\mu,\Sigma)\in\calN(d)$ (i.e., $\lambda \leftrightarrow p_\lambda(x)$). 
The statistical model $\calN(d)$ is said regular since the second order derivatives $\frac{\partial^2 p_{\lambda}}{\partial\lambda_i\partial\lambda_j}$ and third order derivatives $\frac{\partial^3 p_{\lambda}}{\partial\lambda_i\partial\lambda_j\partial\lambda_k}$ are smooth functions (defining the metric and cubic tensors in information geometry~\cite{IG-2016}), and the set of first order partial derivatives $\{\frac{\partial p_{\lambda}}{\partial\lambda_1},\ldots,\frac{\partial p_{\lambda}}{\partial\lambda_1}\}$ are linearly independent.

Let $\Cov(X)$ denote the covariance of $X$ or variance when $X$ is scalar. 
The Fisher information matrix (FIM) is the symmetric semi-positive definite matrix:
$$
I(\lambda)=\Cov[\nabla\log p_{\lambda}(x)]\succeq 0.
$$
For regular statistical models $\{p_\lambda\}$, the FIM  is positive-definite: $I(\lambda)\gg 0$. 

\begin{Remark}
An example of non-regular statistical model is the set  $\calD=\{\delta_\theta(x)\st\theta\in\bbR\}$ of Dirac distributions where 
$\delta_\theta(x)=\left\{\begin{array}{ll} 1, & x=\theta,\cr 0, & x\not=\theta\end{array}\right.$.
Indeed, we have the following FIM of a Dirac distribution $\delta_\theta$: $I(\theta)=\Var[\delta_\theta]=E[X^2]-E[X]^2=0$.
The family of Dirac distributions is an example of non-regular model where the support of the distributions 
depends on the model parameter (i.e., $\supp(\delta_\theta)=\theta$).
Mathematical statistics of non-regular models are considered in~\cite{akahira2012non}.  
\end{Remark}

The FIM is covariant under reparameterization of the statistical model.
That is, $\theta(\lambda)$ be a new parameterization of the MVNs. Then we have
$$
I_\theta(\lambda)=\left(\frac{\partial\lambda}{\partial\theta}\right)^\top\, I_\lambda(\lambda(\theta)) \, \left(\frac{\partial\lambda}{\partial\theta}\right).
$$
For example, we may parameterize univariate normal distributions by $\lambda=(\mu,\sigma^2)$ or $\theta=(\mu,\sigma)$.
We obtain the following Fisher information matrices for these parameterizations:
$$
I_\lambda(\lambda(\mu,\sigma))=\mattwotwo{\frac{1}{\sigma^2}}{0}{0}{\frac{1}{2\sigma^4}}, \quad
I_\theta(\theta(\mu,\sigma))=\mattwotwo{\frac{1}{\sigma^2}}{0}{0}{\frac{1}{2\sigma^2}}.
$$
In higher dimensions, parameterization $\lambda$ corresponds to the parameterization $(\mu,\Sigma)$ 
while parameterization $\theta=(\mu,L)$ where $\Sigma=LL^\top$ is the unique Cholesky decomposition with $L\in\GL(d)$, the group of invertible $d\times d$ matrices. Another useful parameterization for optimization is the 
log-Cholesky parameterization~\cite{lin2019riemannian} ($\eta=(\mu,\log\sigma^2)\in\bbR^2$ for univariate normals) which ensures that gradient descent stay in the domain. The Fisher information matrix with respect to the log-Cholesky parameterization is 
$I_\eta(\eta(\mu,\sigma))=\mattwotwo{\frac{1}{\sigma^2}}{0}{0}{2}$ with $\eta(\mu,\sigma)\in\bbR^2$.

Since the statistical model $\calN(d)$ is identifiable and regular, the Fisher information matrix can be written equivalently using the first two Bartlett identities as
\begin{eqnarray}
I(\mu,\Sigma)=\Cov[\nabla\log p_{(\mu,\Sigma)}] &=& E\left[\nabla\log p_{(\mu,\Sigma)}\nabla\log p_{(\mu,\Sigma)}^\top\right],\\
&=&-E\left[\nabla^2\log p_{(\mu,\Sigma)}\right]. \label{eq:FIM2}
\end{eqnarray}

For multivariate distributions parameterized by a $m$-dimensional vector 
$$
\theta=(\theta_1,\ldots,\theta_d,\theta_{d+1},\ldots,\theta_{m})\in\bbR^m,
$$
with $\mu=(\theta_1,\ldots,\theta_d)$ and $\Sigma(\theta)=\vech(\theta_{d+1},\ldots,\theta_{m})$ (inverse half-vectorization of matrices), we have~\cite{Skovgaard-1984,malago2015information,herntier2022transversality}:
$$
I(\theta)=[I_{ij}(\theta)],\quad 
I_{ij}(\theta)=
\left(\frac{\partial\mu}{\partial\theta_i}\right)^\top\Sigma^{-1}\frac{\partial\mu}{\partial\theta_j}
+\frac{1}{2}\tr\left(\Sigma^{-1}\frac{\partial\mu}{\partial\theta_i}\Sigma^{-1}\frac{\partial\mu}{\partial\theta_j} \right).
$$

By equipping the regular statistical model $\calN(d)$ with the Fisher information metric  
$$
g^\Fisher_\calN(\mu,\Sigma)=\Cov[\nabla\log p_{(\mu,\Sigma)}(x)]
$$
we get a Riemannian manifold $\calM=\calM_{\calN}$ called the Fisher-Rao Gaussian~\cite{Skovgaard-1984}.
The induced Riemannian geodesic distance $\rho_\calN(\cdot,\cdot)$   is called the Rao distance~\cite{AtkinsonRao-1981} or the Fisher-Rao distance~\cite{Rao-1945,nielsen2013cramer,chen2021upper}:
$$
\rho_\calN(N(\lambda_1),N(\lambda_2))=\inf_c \left\{\Length(c) \st c(0)=p_{\lambda_1}, c(1)=p_{\lambda_2} \right\},
$$ 
where the Riemannian length of a smooth curve $c(t)$ is defined by
$$
\Length(c)=\int_0^1 \underbrace{\sqrt{\inner{\dot c(t)}{\dot c(t)}}_{c(t)}}_{\ds_\calN(t)} \dt.
$$
The minimizing curve $\gamma_\calN(p_{\lambda_1},p_{\lambda_2};t)$ is called the Fisher geodesic:
It is also an auto-parallel curve for the Levi-Civita connection $\nabla^\Fisher$ induced by the Fisher metric $g^\Fisher$.
See~\cite{IG-2016} for details.

\begin{Remark}
If we consider the Riemannian manifold $(\calM,\beta g)$ for $\beta>0$ then the length element is scaled by $\sqrt{\beta}$: $\ds_{\beta g}=\beta \ds_g$. It follows that the length of a curve $c$ is $\Length_{\beta g}(c)=\sqrt{\beta}\Length_{g}(c)$.
However, the geodesics are the same: $\gamma_{\beta g}(p_1,p_2;t)=\gamma_{g}(p_1,p_2;t)$ with $\gamma_{g}(p_1,p_2;0)=p_1$ and 
$\gamma_{g}(p_1,p_2;1)=p_2$.
\end{Remark}

Historically, Hotelling~\cite{Hotelling-1930} first used this Fisher Riemannian geodesic distance in the late 1920's.
From the viewpoint of information geometry~\cite{IG-2016}, the Fisher metric is the unique Markov invariant metric up to rescaling~\cite{cencov2000statistical,bauer2016uniqueness,fujiwara2022hommage} and the Fisher-Rao distance has been used to design statistical hypothesis testing~\cite{burbea1989rao,SDPMVN-1990,rios1992rao,park1994distances}, to measure the distance between the prior and posterior distributions in Bayesian statistics~\cite{gruber2008some}, clustering~\cite{strapasson2016clustering,le2019quantization},  in signal processing~\cite{said2015texture,legrand2016evaluating,halder2018gradient,collas2022riemannian}, and in deep learning~\cite{liang2019fisher} among others.

The squared line element  
$$
\ds^2_\calN(\mu,\Sigma)=g_{(\mu,\Sigma)}((\dmu,\dSigma),(\dmu,\dSigma))=\vectortwo{\dmu}{\dSigma}^\top\, I(\mu,\Sigma)\, \vectortwo{\dmu}{\dSigma}
$$ 
induced by the Fisher information metric  of the normal family is 
\begin{equation}
\ds^2_\calN(\mu,\Sigma) = \dmu^\top \Sigma^{-1} \dmu + \frac{1}{2}\tr\left(\left(\Sigma^{-1}\dSigma\right)^2\right).
\end{equation}

\begin{Remark}
The family $\calN(d)$ of normal distributions forms an exponential family~\cite{IG-MVN-1999}: 
$$
\calN(d)=\left\{p_{\theta(\lambda)}=\exp\left(\inner{\theta_v(\mu)}{x}+\inner{\theta_M(\Sigma)}{xx^\top}-F_\calN(\theta_v,\theta_M) \right)\right\},
$$
with $\theta(\lambda)=(\theta_v=(\Sigma^{-1}\mu,\theta_M=\frac{1}{2}\Sigma^{-1})$ the natural parameters and log-partition/cumulant function
$$
F_\calN(\theta) = \frac{1}{2}\left(d\log\pi-\log|\theta_M|+\frac{1}{2}\theta_v^\top\theta_M^{-1}\theta_v\right).
$$
The matrix inner product is $\inner{M_1}{M_2}=\tr(M_1M_2^\top)$.
Using Eq.~\ref{eq:FIM2}, it follows that the MVN FIM is 
$I_\theta(\theta)=-E[\nabla^2\log p_{\theta}]=\nabla^2 F(\theta)$.
As an exponential family~\cite{IG-2016}, we also have $I_\theta(\theta)=E[t(x)]$, where $t(x)=(x,xx^\top)$ is the sufficient statistic.
Thus the Fisher metric is a Hessian metric~\cite{shima2007geometry}.
Let $F_\calN(\theta_v,\theta_M)=F_v(\theta_v)+F_M(\theta_M)$ with 
$F_v(\theta_v)=\frac{1}{2}\left(d\log\pi+\frac{1}{2}\theta_v^\top\theta_M^{-1}\theta_v\right)$
and
$F_M(\theta_M)=-\frac{1}{2} \log|\theta_M|$.
We have
$$
I(\theta(\lambda))=\nabla^2 F_\calN(\theta(\mu,\Sigma))=\mattwotwo{\Sigma^{-1}}{0}{0}{\frac{1}{2}\nabla^2_{\theta_M}\log|\frac{1}{2}\Sigma^{-1}|}.
$$
Therefore $\ds^2_\calN(\mu,\Sigma) = \ds^2_{v} + \ds^2_M$ with
$\ds^2_v(\mu)=\dmu^\top \Sigma^{-1} \dmu$ and 
$\ds^2_M(\Sigma)=\frac{1}{2}\tr\left(\left(\Sigma^{-1}\dSigma\right)^2\right)$.
Let us note in passing that $\nabla^2_{\theta_M}\log|\theta_M|$ is a fourth order tensor~\cite{soen2021variance}.
\end{Remark}

The family $\calN(d)$ can also be considered as an elliptical family~\cite{SDPElliptical-2002}, thus highlighting the affine-invariance property of the Fisher information metric.
That is, the Fisher metric is invariant with respect to affine transformations~\cite{burbea1984informative}:
Let $(a,A)$ be an element of the affine group $\Aff(d)$ with $a\in\bbR^d$ and $A\in\GL(d)$.
The group identity element is $e=(0,I)$ and the group operations are $(a_1,A_1).(a_2,A_2)=(a_1+A_1a_2,A_1A_2)$ and $(a,A)^{-1}=(-A^{-1}a,A^{-1})$).
Then we have

\begin{Property}[Fisher-Rao affine-invariance]\label{prop:aifr}
For all $A\in\GL(d), a\in\bbR^d$, we have $\rho_{\calN}(N(A\mu_1+a,A\Sigma_1A^\top),N(A\mu_2+a,A\Sigma_2A^\top))=\rho_{\calN}(N(\mu_1,\Sigma_1),N(\mu_2,\Sigma_2))$.
\end{Property}

It follows that we have
\begin{eqnarray*}
\rho_{\calN}(N(\mu_1,\Sigma_1),N(\mu_2,\Sigma_2)) &=&
\rho_{\calN}(N_\std,N(\Sigma_1^{-\frac{1}{2}}(\mu_2-\mu_1),\Sigma_1^{-\frac{1}{2}}\Sigma_2\Sigma_1^{-\frac{1}{2}})),\\
&=&
\rho_{\calN}(N(\Sigma_2^{-\frac{1}{2}}(\mu_1-\mu_2),\Sigma_2^{-\frac{1}{2}}\Sigma_1\Sigma_2^{-\frac{1}{2}}),N_\std),
\end{eqnarray*}
where $N_\std=N(0,I)$ is the standard $d$-variate distribution.
The family of normal distributions can be obtained from the standard normal distribution by the action of the affine group $\Aff(d)$:
$$
N(\mu,\Sigma)=(\mu,\Sigma^{\frac{1}{2}}).N_\std=N((\mu,\Sigma^{\frac{1}{2}}).(0,I)).
$$

\subsection{Fisher-Rao distance between normal distributions}
In general, the Fisher-Rao distance $\rho_\calN(N_1,N_2)$ between two multivariate normal distributions $N_1$ and $N_2$ is not known in closed-form~\cite{Eriksen-1986,FRGeodesicEll-1997,imai2011remarks,inoue2015group}, and several lower and upper bounds~\cite{strapasson2015bounds}, and numerical techniques like costly and numerically instable geodesic shooting~\cite{MVNGeodesicShooting-2014,GeodesicShooting-2016,barbaresco2019souriau} have been investigated. See~\cite{FRMVNReview-2020} for a recent review.

Two difficulties to calculate the Fisher-Rao distance are 
\begin{itemize}
\item  to know explicitly the expression of the Riemannian Fisher-Rao geodesic $\gamma_\calN^{\FR}$ and
\item  to integrate in closed-form the length element $\ds_\calN$ along this Riemannian geodesic.
\end{itemize}

Note that the Fisher-Rao geodesics~\cite{IG-2016} $\gamma_\calN^{\mathrm{FR}}(t)$ are parameterized by constant speed (i.e., $\dot{\mu}(t)=\dot{\mu}(0)$ and $\dot\Sigma(t)=\dot\Sigma(0)$, proportional to arc length parameterization).
However, in several special cases, the Fisher-Rao distance between normal distributions belonging to restricted subsets of $\calN$ is known.

Three such prominent cases are 
\begin{itemize}
\item  when the normal distributions are univariate ($d=1$),  
\item  when we consider the set
$\calN_\mu=\{N(\mu,\Sigma)\st \Sigma\in\calP(d)\}\subset\calM_{\calN}$ of normal distributions share the same mean $\mu$ (with the embedded submanifold $\calS_\mu\in\calM$), and
\item  when  we consider the set
$\calN_\Sigma=\{N(\mu,\Sigma)\st \Sigma\in\calP(d)\}\subset\calN$ of normal distributions share the same covariance matrix $\Sigma$ 
(with the corresponding embedded submanifold $\calS_\Sigma\in\calM$)
\end{itemize}

Let us report the formula of the Fisher-Rao distance in these three cases:

\begin{itemize}
\item
In the univariate case $\calN(1)$, the Fisher-Rao distance between $N_1=N(\mu_1,\sigma_1^2)$ and $N_2=N(\mu_2,\sigma_2^2)$ can be derived from the hyperbolic distance expressed in the Poincar\'e upper space:
\begin{equation}\label{eq:FR1D}
\rho_{\calN}(N(\mu_1,\sigma_1^2),N(\mu_2,\sigma_2^2))=\sqrt{2}\log \left(\frac{1+\Delta(\mu_1,\sigma_1;\mu_2,\sigma_2)}{1-\Delta(\mu_1,\sigma_1;\mu_2,\sigma_2)}\right)
,
\end{equation}
with  
\begin{equation}
\Delta(a,b;c,d)=\sqrt{\frac{(c-a)^2+2(d-b)^2}{(c-a)^2+2(d+b)^2}}.
\end{equation}
Figure~\ref{fig:normal1d} displays four univariate normal distributions with their pairwise geodesics and Fisher-Rao distances.

\begin{figure}
\centering
\includegraphics[width=0.5\textwidth]{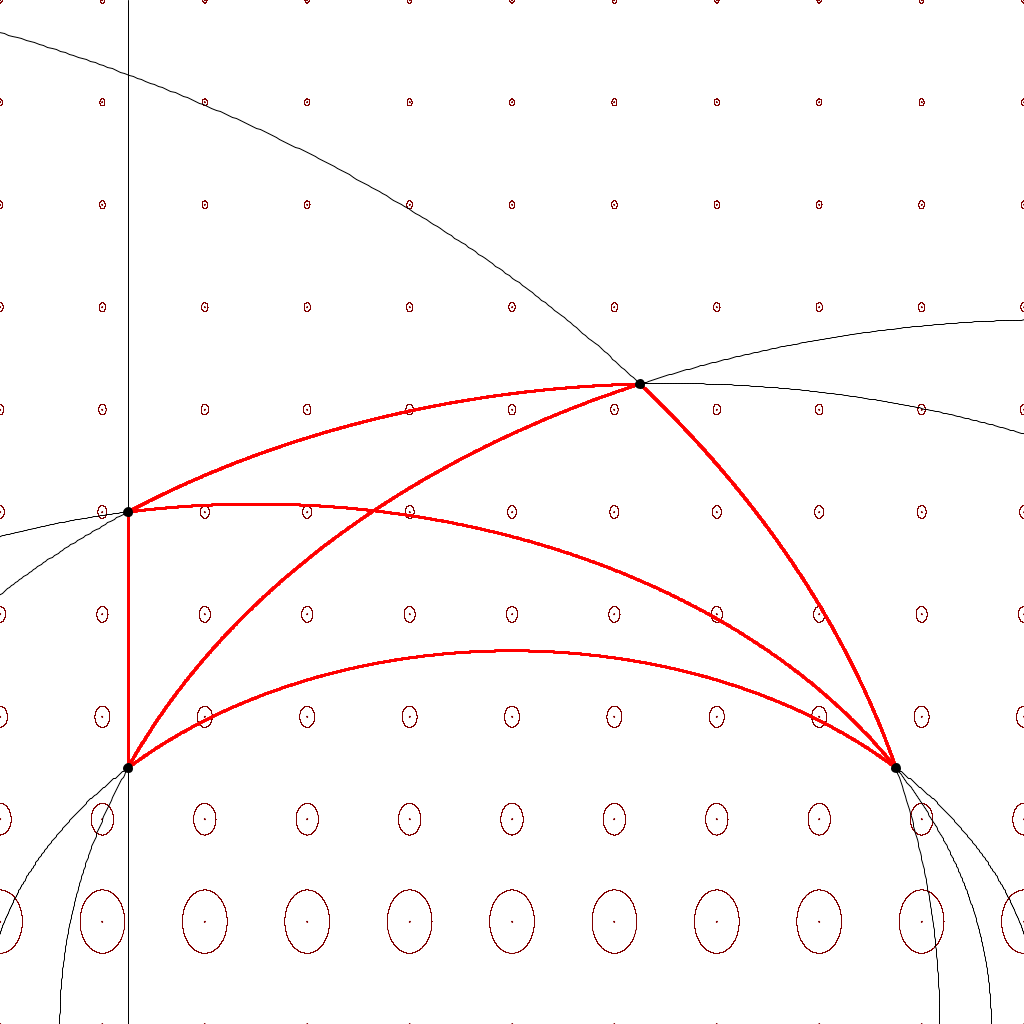}
\caption{Four univariate normal distributions 
$N_1=N(0,1)$, $N_2=N(3,1)$, $N_3=N(2,2.5)$, $N_4=N(0,2)$, and their pairwise full geodesics and geodesics linking them.
The Fisher-Rao distances are $\rho_\calN(N_1,N_2)=2.6124...$, $\rho_\calN(N_3,N_4)=0.9317...$,
$\rho_\calN(N_1,N_4)=0.9803...$, $\rho_\calN(N_2,N_3)=1.4225...$
$\rho_\calN(N_2,N_4)=2.1362...$, and $\rho_\calN(N_1,N_3)=1.7334...$.
 \label{fig:normal1d}}
\end{figure}


\item In the second case, the Rao distance between $N_1=N(\mu,\Sigma_1)$ and $N_2=N(\mu,\Sigma_2)$ has been reported in~\cite{siegel2014symplectic,DoubleCone-James-1973,Skovgaard-1984,wells2020fisher}:
\begin{eqnarray}
\rho_{\calN_\mu}(N_1,N_2)&=& \sqrt{\frac{1}{2} \sum_{i=1}^d \log^2 \lambda_i(\Sigma_1^{-1}\Sigma_2)},\\ \label{eq:RaoCMVN}
&=&\rho_{\calP}(\Sigma_1,\Sigma_2),
\end{eqnarray}
where $\lambda_i(M)$ denotes the $i$-th largest eigenvalue of matrix $M$.
Let us notice that  $\rho_{\calN_\mu}((\mu,\Sigma_1),(\mu,\Sigma_2))=\rho_{\calN_\mu}((\mu,\Sigma_1^{-1}),(\mu,\Sigma_2^{-1}))$
 since $\lambda_i(\Sigma_2^{-1}\Sigma_1)=\frac{1}{\lambda_i(\Sigma_1^{-1}\Sigma_2)}$ and 
 $\log^2 \lambda_i(\Sigma_2^{-1}\Sigma_1)=(-\log \lambda_i(\Sigma_1^{-1}\Sigma_2))^2=\log^2 \lambda_i(\Sigma_1^{-1}\Sigma_2)$.
Also, matrix $\Sigma_1^{-1}\Sigma_2$ may not be SPD: We may consider the SPD matrix $\Sigma_1^{-\frac{1}{2}}\Sigma_2\Sigma_1^{-\frac{1}{2}}$ which is SPD and such that $\lambda_i(\Sigma_1^{-1}\Sigma_2)=\lambda_i(\Sigma_1^{-\frac{1}{2}}\Sigma_2\Sigma_1^{-\frac{1}{2}})$.
The Rao distance of Eq.~\ref{eq:RaoCMVN} can be equivalently written~\cite{calvo1991explicit} as
$$
\rho_{\calN_\mu}(N_1,N_2)=\frac{1}{\sqrt{2}}\, \|\log\left(\Sigma_1^{-\frac{1}{2}}\Sigma_2\Sigma_1^{-\frac{1}{2}}\right)\|.
$$
This metric distance was rediscovered and analyzed in~\cite{forstner2003metric}.
Let $\rho_\SPD(P_1,P_2)=\sqrt{  \sum_{i=1}^d \log^2 \lambda_i(P_1^{-1}P_2)}$ so that $\rho_{\calN_\mu}(N(\mu,P_1),N(\mu,P_2))=\frac{1}{\sqrt{2}}\, \rho_\SPD(P_1,P_2)$.

This Riemannian SPD distance $\rho_\SPD$ enjoys the following  invariance properties:
\begin{itemize} 
\item Invariance by congruence transformation:
\begin{equation}\label{eq:spdconginvar}
\forall X\in\GL(d), \rho_\SPD(XP_1X^\top,XP_2X^\top)=\rho_\SPD(P_1,P_2),
\end{equation}
\item Invariance by inversion:
$$
\forall P_1,P_2\in\bbP(d), \rho(P_1^{-1},P_2^{-1})=\rho_\SPD(P_1,P_2).
$$
Let $P_1=L_1L_1^\top$ be the unique Cholesky decomposition. 
Then apply the congruence invariance for $X=L_1^{-1}$:
\begin{equation}
\rho_\SPD(P_1,P_2)=\rho_\SPD(L_1^{-1}P_1(L_1^{-1})^\top,L_1^{-1}P_2(L_1^{-1})^\top)=\rho_\SPD(I,L_1^{-1}P_2(L_1^{-1})^\top).
\end{equation}
We can also consider the factorization $P_1=S_1S_1$ where $S_1=P_1^{\frac{1}{2}}$ is the unique symmetric square root matrix.
Then we have
$$
\rho_\SPD(P_1,P_2)=\rho_\SPD(S_1^{-1}P_1(S_1^{-1})^\top,S_1^{-1}P_2(S_1^{-1})^\top)=\rho_\SPD(I,S_1^{-1}P_2(S_1^{-1})^\top).
$$

\begin{Remark}
In practice the covariance matrices usually need to be estimated from samples (sample covariance matrices) before measuring the distances between them.
In large dimensions, these approximations suffer severe errors and better consistent estimates of statistical distances based on random matrix theory (RMT) have been proposed in~\cite{FisherRaoSPD-2019}.
\end{Remark}

\item The Rao distance between $N_1=N(\mu_1,\Sigma)$ and $N_2=N(\mu_2,\Sigma)$ has been reported in closed-form~\cite{FRMVNReview-2020} (Proposition~3). Their method is detailed in the Appendix~\ref{sec:BFRsamecovar}.
We  consider the following simple  scheme based on the inverse $\Sigma^{-\frac{1}{2}}$ of the symmetric square root factorization of $\Sigma=\Sigma^{\frac{1}{2}}\Sigma^{\frac{1}{2}}$ (and $(\Sigma^{-\frac{1}{2}})^\top=\Sigma^{-\frac{1}{2}}$). Let us use the affine invariance property of the Fisher-Rao distance under the transformation $\Sigma^{-\frac{1}{2}}$ and then apply affine invariance under translation:
\begin{eqnarray*}
\rho_\calN(N(\mu_1,\Sigma),N(\mu_2,\Sigma)) &=&
 \rho_\calN(N(\Sigma^{-\frac{1}{2}}\mu_1,\Sigma^{-\frac{1}{2}}\Sigma\Sigma^{-\frac{1}{2}}),N(\Sigma^{-\frac{1}{2}}\mu_2,\Sigma^{-\frac{1}{2}}\Sigma\Sigma^{-\frac{1}{2}})),\\
&=& \rho_\calN(N(0,I),N(\Sigma^{-\frac{1}{2}}(\mu_2-\mu_1),I)),\\
&=& \rho_\calN(N(0,1),N(\|\Sigma^{-\frac{1}{2}}(\mu_2-\mu_1)\|_2,1)).
\end{eqnarray*}
The right-hand side Fisher-Rao distance is computed from Eq.~\ref{eq:FR1D} and justified by the method ~\cite{FRMVNReview-2020} (Proposition~3) described in the Appendix~\ref{sec:BFRsamecovar}.
Section~\ref{sec:FRsamecovar} shall report a simpler closed-form formula by proving that the Fisher-Rao distance between $N(\mu_1,\Sigma)$ and $N(\mu_2,\Sigma)$ is a scalar function of their Mahalanobis distance~\cite{mahalanobis1936generalised} using an algebraic method.
\end{itemize}

\subsection{Fisher-Rao distance:  Totally vs non-totally geodesic submanifolds}\label{sec:submfd}
Consider $\calN'=\{N(\lambda) \st\lambda'\in\Lambda'\}\subset\calN$ a statistical submodel of the MVN statistical model $\calN$.
Using the Fisher information matrix $I_{\lambda'}(\lambda')$, we get the intrinsic Fisher-Rao manifold $\calM'=\calM_{\calN'}$.
We may also consider $\calM'$ to be an embedded submanifold of $\calM$. 
 Let us write $\calS'=\calS_{\calN'}\subset\calM$ the embedded submanifold.

A totally geodesic submanifold $\calS'\subset\calM$ is such that the geodesics $\gamma_{\calM'}(N_1',N_2';t)$ fully stay in $\calM'$ for any pair of points $N_1',N_2'\in\calN'$.
For example, the submanifold $\calM_\mu=\{N(\mu,\Sigma)\st \Sigma\in\bbP(d)\}\subset\calM$ of MVNs with fixed mean $\mu$ is a totally geodesic submanifold~\cite{godinho2012introduction} of $\calM$ but the submanifold $\calM_\Sigma=\{N(\mu,\Sigma)\st \mu\in\bbR^d\}\subset\calM$ of MVNs sharing the same  
covariance matrix $\Sigma$ is not totally geodesic. 
When an embedded submanifold $\calS\subset\calM$ is totally geodesic, we always 
have $\rho_{\calM}(N_1,N_2)=\rho_{\calS}(N_1,N_2)$.
Thus we have $\rho_\calN(N(\mu,\Sigma_1),N(\mu,\Sigma_2))=\rho_\SPD(\Sigma_1,\Sigma_2)$.
However, when an embedded submanifold $\calS\subset\calM$ is not totally geodesic, we have $\rho_{\calM}(N_1,N_2)\leq \rho_{\calS}(N_1,N_2)$ because the Riemannian geodesic length in $\calS$ is necessarily longer or equal than the Riemannian geodesic length in $\calM$.
The merit to consider submanifolds is to be able to calculate in closed form the Fisher-Rao distance which may then provide an upper bound on the Fisher-Rao distance for the full statistical model.
For example, consider $N_1=N(\mu_1,\Sigma)$ and $N_2=N(\mu_2,\Sigma)$ in $\calM_\Sigma$, a non-totally geodesic submanifold.
The Rao distance between $N_1$ and $N_2$ in $\calM$ is upper bounded by the Riemannian distance in $\calM_\Sigma$ 
(with line element $\ds_\Sigma^2=\dmu^\top\Sigma^{-1}\dmu$) which corresponds to the Mahalanobis distance~\cite{mahalanobis1936generalised,AtkinsonRao-1981} $\Delta_\Sigma(\mu_1,\mu_2)$:
\begin{equation}
\rho_{\calM_\mu}(N_1,N_2)\leq  \Delta_\Sigma(\mu_1,\mu_2):=\sqrt{(\mu_2-\mu_1)^\top \Sigma^{-1} (\mu_2-\mu_1)}.
\end{equation}
\end{itemize}
The Mahalanobis distance can be interpreted as the Euclidean distance $D_E(p,q)=\Delta_I(p,q)=\sqrt{(p-q)^\top (p-q)}$ (where $I$ denotes the identity matrix) after an affine transformation: Let $\Sigma=LL^\top=U^\top U$ be the Cholesky decomposition of $\Sigma\gg 0$ with $L$ a lower triangular matrix or $U=L^\top$ an upper triangular matrix. Then we have 
\begin{eqnarray*}
\Delta_\Sigma(\mu_1,\mu_2)&=&\sqrt{(\mu_2-\mu_1)^\top (L^\top)^{-1} L^{-1} (\mu_2-\mu_1)},\\
&=&\|\Sigma^{-\frac{1}{2}}(\mu_2-\mu_1)\|_2,\\
&=&\Delta_I(L^{-1}\mu_1,L^{-1}\mu_2)=D_E(L^{-1}\mu_1,L^{-1}\mu_2),
\end{eqnarray*}
where $\|\cdot\|_2$ denotes the vector $\ell_2$-norm.

The Rao distance $\rho_\Sigma$ of Eq.~\ref{eq:FRSigma} between two MVNs with fixed covariance matrix emanates from the property that the submanifold 
$\calM_{[v],\Sigma}=\{N(a v,\Sigma) \st a\in\bbR\}$ is totally geodesic~\cite{strapasson2016totally}.

Let us emphasize that for a submanifold $\calS\subset\calM$ to be totally geodesic or not depend on the underlying metric in $\calM$. 
A same subset $\calN'\subset\calN$ with $\calN$ equipped with two different metric $g_1$ and $g_2$ can be totally geodesic wrt. $g_1$ and non-totally geodesic wrt. $g_2$. See Remark~\ref{rk:secondco} for such an example.

In general, using   the triangle inequality of the Riemannian metric distance $\rho_\calN$, we can upper bound $\rho_\calN(N_1,N_2)$ with $N_1=(\mu_1,\Sigma_1)$ and $N_1=(\mu_2,\Sigma_2)$  as follows:
\begin{eqnarray*}
\rho_\calN(N_1,N_2)&\leq& \rho_{\calM_{\mu_1}}(N_1,N_{12})+\rho_{\calM_{\Sigma_2}}(N_{12},N_2),\\
&\leq& \rho_{\calM_{\Sigma_1}}(N_1,N_{21})+\rho_{\calM_{\mu_2}}(N_{21},N_2),
\end{eqnarray*}
where $N_{12}=(\mu_1,\Sigma_2)$ and $N_{21}=N(\mu_2,\Sigma_1)$.
See Figure~\ref{fig:totallynottotally} for an illustration.
Furthermore, since $\rho_{\calN_{\Sigma_1}}(N_1,N_{21})\leq\Delta_{\Sigma_1}(\mu_1,\mu_2)$ and 
$\rho_{\calN_{\Sigma_2}}(N_{12},N_2)\leq \Delta_{\Sigma_2}(\mu_1,\mu_2)$, we get the following upper bound on the Rao distance between MVNs:
\begin{equation}\label{eq:UBMah}
\rho_{\mathcal{N}}(N_1,N_2)\leq \rho_{\mathcal{P}}(\Sigma_1,\Sigma_2)+\min\{\Delta_{\Sigma_1}(\mu_1,\mu_2),\Delta_{\Sigma_2}(\mu_1,\mu_2)\}.
\end{equation}
See also~\cite{chen2022multisensor}. 

\begin{figure}
\centering
\includegraphics[width=\textwidth]{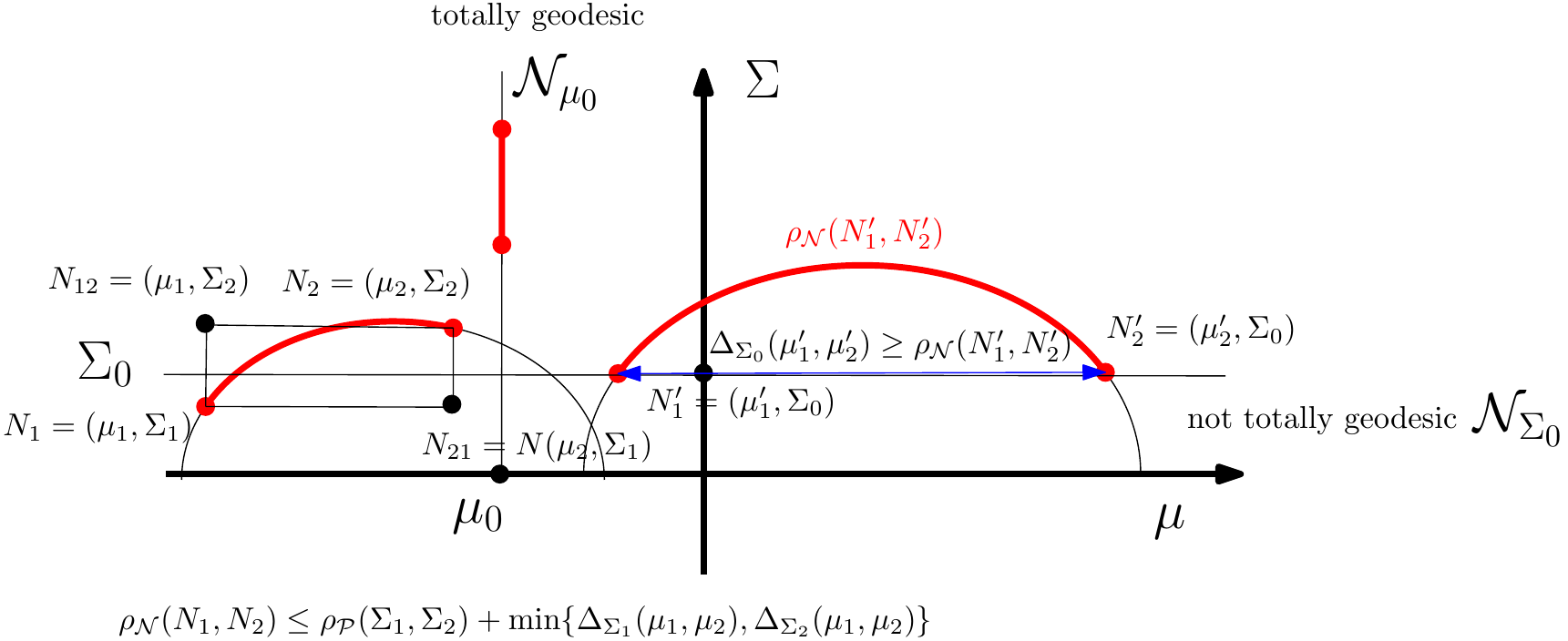}
\caption{
The submanifolds $\calN_{\Sigma}$ are not totally geodesic (i.e., $\rho_\calN(N_1',N_2')$ is upper bounded by their Mahalanobis distance) but the submanifolds $\calN_\mu$ are totally geodesic.
Using the triangle inequality of the Riemannian metric distance $\rho_\calN$, we can upper bound $\rho_\calN(N_1,N_2)$.
 \label{fig:totallynottotally}}
\end{figure}

In general, the difficulty of calculating the Fisher-Rao distance comes from the fact that 
\begin{enumerate}
\item  we do not know the Fisher-Rao geodesics with boundary value conditions (BVP) in closed form (only the geodesics with initial value conditions~\cite{calvo1991explicit}, IVP, are partially known),
\item we have to integrate the line element $\ds_\calN$ along the geodesic.
\end{enumerate}

As we shall see in \S\ref{sec:approxlength}, the above first problem is much hard to solve then the second problem which can be easily approximated by discretizing the curve. 
The lack of a closed-form formula and fast and good approximations for $\rho_{\calN}$ between MVNs is a current limiting factor for  its use in applications. 
Indeed, many applications (e.g., \cite{Sra-2016,nguyen2021geomnet}) consider the restricted case of the Rao distance between zero-centered MVNs which have closed-form (distance of Eq.~\ref{eq:RaoCMVN} in the SPD cone). 
The SPD cone  is a symmetric Hadamard manifold and its isometries have been fully studied and classified in~\cite{SPDisometries-2018} (\S4).
The Fisher-Rao geometry of zero-centered generalized MVNs was recently studied in~\cite{verdoolaege2012geometry}.

\subsection{Contributions and paper outline}
The main contribution of this paper is to propose an approximation of $\rho_{\calN}$ based on Calvo \& Oller's embedding~\cite{SDPMVN-1990}  (C\&O for short) and report its experimental performance.
First, we concisely recall C\&O's family of embeddings $f_\beta$ of $\calN(d)$ as submanifolds $\barN_\beta$ of $\calP(d+1)$ in Section~\ref{sec:CO}.
Next, we present our approximation technique in Section~\ref{sec:approxFR}  which differs from the usual geodesic shooting approach~\cite{MVNGeodesicShooting-2014}, and report experimental results.
Finally, we study some information-geometric properties~\cite{IG-2016} of the isometric embedding in \S\ref{sec:prop} like the fact that it preserves mixture geodesics (embedded C\&O submanifold is autoparallel with respect to the mixture affine connection) but not exponential geodesics.
Besides, we prove that the Fisher-Rao distance between multivariate normal distributions sharing the same covariance matrix is a scalar function of their Mahalanobis distance in \S\ref{sec:FRsamecovar} using the framework of Eaton~\cite{eaton1989group} of maximal invariants.

\subsection{A closed-form formula for the Fisher-Rao distance between normal distributions sharing the same covariance matrix}\label{sec:FRsamecovar}

Consider the Fisher-Rao distance between $N_1=(\mu_1,\Sigma)$ and $N_1=(\mu_2,\Sigma)$ for a fixed covariance matrix $\Sigma$ and the translation action $a.\mu:=\mu+a$ of the translation group $\bbR^d$ (a subgroup of the affine group). Both the Fisher-Rao distance and the Mahalanobis distance are invariant under translations:
$$
\rho_\calN((\mu_1+a,\Sigma),(\mu_2+a,\Sigma))=\rho_\calN((\mu_1,\Sigma),(\mu_2,\Sigma)),\quad
\Delta_\Sigma(\mu_1+a,\mu_2+a)=\Delta_\Sigma(\mu_1,\mu_2).
$$
To prove that $\rho_\calN((\mu_1,\Sigma),(\mu_2,\Sigma))=h_\FR(\Delta_\Sigma(\mu_1,\mu_2))$ for a scalar function $h_\FR$, we shall prove that the Mahalanobis distance is a maximal invariant and use the
 framework of maximal invariants of Eaton~\cite{eaton1989group} (Chapter~2) who proved that any other invariant function is necessarily a function of a maximal invariant, i.e., a function of the Mahalanobis distance.
The Mahalanobis distance is maximal invariant because  when 
$\Delta_\Sigma(\mu_1,\mu_2)=\Delta_\Sigma(\mu_1',\mu_2')$   there exists $a\in\bbR^d$ such that $(\mu_1+a,\mu_2+a)=(\mu_1',\mu_2')$:
Indeed, we may consider the Cholesky decomposition $\Sigma=LL^\top$ so that $\Delta_\Sigma(\mu_1,\mu_2)=\Delta_I(L^\top \mu_1,L^\top\mu_2)$.
Let $m_1=L^\top\mu_1$, $m_2=L^\top\mu_2$, $m_1'=L^\top\mu_1'$ and $m_2'=L^\top\mu_2'$.
We have to prove equivalently that when $\|m_1-m_2\|_2=\|m_1'-m_2'\|_2$ that 
there exists $a\in\bbR^d$ such that $(m_1+a,m2_+a)=(m_1',m_2')$. 
It suffices to let $a=m_1'-m_1$ and consider $m_1-m_2=m_1'-m_2'$.
Then we have $m_2+a=m_2+m_1'-m_1=m_2'-m_1'+m_1'=m_2'$.
Thus using Eaton's theorem~\cite{eaton1989group}, there exists a scalar function $h_\FR$ such that 
$\rho_\calN((\mu_1,\Sigma),(\mu_2,\Sigma))=h_\FR(\Delta_\Sigma(\mu_1,\mu_2))$.

To find explicitly the scalar function $h_\FR(\cdot)$, let us consider the univariate case of normal distributions for which the Fisher-Rao distance is given in closed form in Eq.~\ref{eq:FR1D}. In that case, the univariate Mahalanobis distance is $\Delta_{\sigma^2}(\mu_1,\mu_2)=\sqrt{(\mu_2-\mu_1)(\sigma^2){-1}(\mu_2-\mu_1)}=\frac{|\mu_2-\mu_1|}{\sigma}$ and we can write formula of Eq.~\ref{eq:FR1D} as $h_\FR(\Delta_{\sigma^2}(\mu_1,\mu_2))$ with
\begin{eqnarray}
h_\FR(u) &=& \sqrt{2}\,\log\left(\frac{\sqrt{8+u^2}+u}{\sqrt{8+u^2}-u}\right),\label{eq:h1d}\\
 &=& \sqrt{2}\,\arccosh\left(1+\frac{1}{4}u^2\right).\label{eq:h1dbis}
\end{eqnarray}
 
\begin{Proposition}\label{prop:FRsamecovar}
The Fisher-Rao distance $\rho_\calN((\mu_1,\Sigma),(\mu_2,\Sigma))$ between two MVNs with same covariance matrix is
\begin{eqnarray}\label{eq:FRsamecovar}
\rho_\calN((\mu_1,\Sigma),(\mu_2,\Sigma)) &=&\sqrt{2}\,\log\left(\frac{\sqrt{8+\Delta_\Sigma^2(\mu_1,\mu_2)}+\Delta_\Sigma(\mu_1,\mu_2)}{\sqrt{8+\Delta_\Sigma^2(\mu_1,\mu_2)}-\Delta_\Sigma(\mu_1,\mu_2)}\right)=\rho_\calN((0,1),(\Delta_\Sigma(\mu_1,\mu_2),1)),\\
&=& \sqrt{2}\, \arccosh\left(1+\frac{1}{4}\Delta_\Sigma^2(\mu_1,\mu_2)\right),\label{eq:FRh1dbis}
\end{eqnarray}
where $\Delta_\Sigma(\mu_1,\mu_2)=\sqrt{(\mu_2-\mu_1)^\top\Sigma^{-1}(\mu_2-\mu_1)}$ is the Mahalanobis distance.
\end{Proposition}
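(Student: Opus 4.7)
The plan is to combine the affine invariance of the Fisher-Rao distance (Property~\ref{prop:aifr}) with Eaton's framework of maximal invariants to reduce the $d$-variate claim to the univariate closed form in Eq.~\ref{eq:FR1D}.

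First I would apply Property~\ref{prop:aifr} with $A=\Sigma^{-\frac{1}{2}}$ to whiten the pair, obtaining
$$
\rho_\calN((\mu_1,\Sigma),(\mu_2,\Sigma)) = \rho_\calN((m_1,I),(m_2,I)),
$$
where $m_i = \Sigma^{-\frac{1}{2}}\mu_i$ and $\|m_1-m_2\|_2 = \Delta_\Sigma(\mu_1,\mu_2)$. This reduces the problem to showing that on the isotropic slice $\{N(m,I):m\in\bbR^d\}$ the Fisher-Rao distance between two points depends only on $\|m_1-m_2\|_2$.

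Next, I would let the Euclidean group $O(d)\ltimes\bbR^d$ act on $\calN_I$ via $(R,a).N(m,I) = N(Rm+a,I)$. By Property~\ref{prop:aifr}, this action is an isometry of $\rho_\calN$, and it clearly preserves $\|m_1-m_2\|_2$. I would then verify that $\|m_1-m_2\|_2$ is a maximal invariant for the diagonal action on pairs: given two pairs of equal separation, first translate so that $m_1=m_1'$, then rotate about $m_1'$ to align the (equal-norm) difference vectors. Eaton's theorem~\cite{eaton1989group} then forces $\rho_\calN((\mu_1,\Sigma),(\mu_2,\Sigma)) = h_\FR(\Delta_\Sigma(\mu_1,\mu_2))$ for some scalar function $h_\FR$.

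Finally, I would pin down $h_\FR$ by specializing to $d=1$ with $N(0,1)$ and $N(u,1)$: here $\Delta_\Sigma = u$, and substituting $\Delta(0,1;u,1)=u/\sqrt{8+u^2}$ into Eq.~\ref{eq:FR1D} yields the logarithmic form of Eq.~\ref{eq:FRsamecovar}. The $\arccosh$ form (Eq.~\ref{eq:FRh1dbis}) would then follow from the identity $\arccosh(z)=\log(z+\sqrt{z^2-1})$ applied at $z=1+\tfrac{1}{4}u^2$, together with $(\sqrt{8+u^2}+u)(\sqrt{8+u^2}-u)=8$.

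The main obstacle is pinning down the correct invariance group: translations alone fix the difference vector $m_1-m_2$ pointwise and are too fine to have $\|m_1-m_2\|_2$ as a maximal invariant, so one must augment translations with orthogonal transformations, and this step genuinely uses the full affine invariance of $\rho_\calN$ rather than mere translation invariance of the normal family. Once the maximality is established, the remaining verification in the univariate case is routine algebra.
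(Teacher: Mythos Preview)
Your argument follows the same overall strategy as the paper: invoke Eaton's maximal-invariant framework to conclude $\rho_\calN((\mu_1,\Sigma),(\mu_2,\Sigma))=h_\FR(\Delta_\Sigma(\mu_1,\mu_2))$, then identify $h_\FR$ from the univariate formula Eq.~\ref{eq:FR1D}. The identification of $h_\FR$ via $\Delta(0,1;u,1)=u/\sqrt{8+u^2}$ and the passage to the $\arccosh$ form match the paper.

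The one substantive difference is the choice of group. The paper works with the translation group $\bbR^d$ alone and asserts that $\Delta_\Sigma$ is a maximal invariant for its diagonal action on pairs. You instead whiten first and then use the full Euclidean group $O(d)\ltimes\bbR^d$, and you explicitly flag that translations alone are too fine: their maximal invariant on pairs is the entire difference vector $m_1-m_2$, not merely its norm. Your observation is correct, and in fact the paper's verification of maximality tacitly passes from $\|m_1-m_2\|=\|m_1'-m_2'\|$ to $m_1-m_2=m_1'-m_2'$, which does not follow without a rotation. So your enlargement to the Euclidean group (legitimate by Property~\ref{prop:aifr}, since orthogonal $A$ preserve $I$) is precisely what is needed to make the maximal-invariant step airtight; otherwise the two proofs coincide.
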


Indeed, notice that the $d$-variate Mahalanobis distance $\Delta_\Sigma(\mu_1,\mu_2)$ can be interpreted as a univariate Mahalanobis distance between the standard normal distribution $N(0,1)$ and $N(\Delta_\Sigma(\mu_1,\mu_2),1)$:
$$
\Delta_\Sigma(\mu_1,\mu_2)=\Delta_1(0,\Delta_\Sigma(\mu_1,\mu_2)).
$$
Thus we have $\rho_\calN((\mu_1,\Sigma),(\mu_2,\Sigma))=\rho_\calN((0,1),(\Delta_\Sigma(\mu_1,\mu_2),1))$,
where the right hand-side term is the univariate Fisher-Rao distance of Eq.~\ref{eq:FR1D}.
Let us notice that the square length element on $\calM_\Sigma$ is $\ds^2=\dmu^\top\Sigma^{-1}\dmu=\Delta_\Sigma^2(\mu,\mu+\dmu)$.

Let us corroborate this result by checking formula~of Eq.~\ref{prop:FRsamecovar} with two examples in the literature:
In~\cite{strapasson2015bounds} (Figure 4), we Fisher-Rao distance between
$N_1=(0,I)$ and $N_2=\left(\vectortwo{\frac{1}{2}}{\frac{1}{2}},I\right)$ is studied.
We find $\rho_\calN(N_1,N_2)=0.69994085$ in accordance to their result shown in Figure~4.
The second example is Example~1 of~\cite{FRMVNReview-2020} (p. 11) with
$N_1=\left(\vectortwo{-1}{0},\Sigma\right)$ and $N_2=\left(\vectortwo{6}{3},\Sigma\right)$ for 
$\Sigma=\mattwotwo{1.1}{0.9}{0.9}{1.1}$.
Formula of Eq.~\ref{eq:FRsamecovar} yields the Fisher-Rao distance $5.006483034546878$
in accordance with~\cite{FRMVNReview-2020} which reports $5.00648$.

Similarly, the statistical Ali-Silvey-Csisz\'ar $f$-divergences~\cite{fdiv-AliSilvey-1966,Csiszar-1967}
$$
I_f[p_{(\mu_1,\Sigma)}:p_{(\mu_2,\Sigma)}]=\int_{\bbR^d}  p_{(\mu_1,\Sigma)}(x)\, f\left(\frac{ p_{(\mu_2,\Sigma)}}{ p_{(\mu_1,\Sigma)}}\right)  \mathrm{d}x,
$$
between two MVNs sharing the same covariance matrix are increasing functions of the Mahalanobis distance because the $f$-divergences between two MVNs sharing the same covariance matrix are invariant under the action of the translation group~\cite{nielsen2022note}. Thus we have 
$I_f[p_{(\mu_1,\Sigma}:p_{(\mu_2,\Sigma)}]=h_f(\Delta_\Sigma(\mu_1,\mu_2))$. 
Since $\Delta_\Sigma(\mu_1,\mu_2)=\Delta_1(0,\Delta_\Sigma(\mu_1,\mu_2))$, we thus have
$$
I_f[p_{(\mu_1,\Sigma}:p_{(\mu_2,\Sigma)}]=h_f(\Delta_1(0,\Delta_\Sigma(\mu_1,\mu_2))=I_f[p_{(0,1}:p_{(\Delta_\Sigma(\mu_1,\mu_2),1)}],
$$
where the right-hand side $f$-divergence is between univariate normal distributions.
See Table~2 of~\cite{nielsen2022note} for some explicit functions $h_f$.

\section{Calvo and Oller's family of diffeomorphic embeddings}\label{sec:CO}
Calvo and Oller~\cite{SDPMVN-1990,SDPElliptical-2002} noticed that we can be can embed the space of normal distributions in $\calP(d+1)$ by  using the following   mapping:
\begin{equation}
f_{\beta}(N)=f_{\beta}(\mu,\Sigma)=
  \mattwotwo{\Sigma+\beta\mu\mu^\top}{\beta\mu}{\beta\mu^\top}{\beta}\in\calP(d+1),
\end{equation}
where $\beta\in\bbR_{>0}$ and $N=N(\mu,\Sigma)$.
Notice that since the dimension of $\calP(d+1)$ is $\frac{(d+1)(d+2)}{2}$, we only use $\frac{(d+1)(d+2)}{2}-\frac{d(d+3)}{2}=1$ extra dimension for embedding $\calN(d)$ into $\calP(d+1)$. 
By foliating $\bbP=\bbR_{>0}\times \bbP_c$ where $\bbP_c=\{P\in\bbP\st |P|=c\}$ denotes the subsets of $\bbP$ with determinant $c$, we get the following Riemannian Calvo \& Oller metric on the SPD cone:
\begin{eqnarray*}
\ds^2_\CO &=& \frac{1}{2}\tr\left(\left(f^{-1}(\mu,\Sigma)\mathrm{d}f(\mu,\Sigma)\right)^2\right),\\
&=& \frac{1}{2}\left(\frac{\dbeta}{\beta}\right)^2 + \beta\dmu^\top \Sigma^{-1}\dmu+\frac{1}{2}\tr\left(\left(\Sigma^{-1}\dSigma\right)^2\right).
\end{eqnarray*}

Let 
$$
\barN_\beta(d) = \left\{\barP=f_{\beta}(\mu,\Sigma) \st (\mu,\Sigma)\in \calN(d)=\bbR^d\times\calP(d)\right\}
$$
 denote the submanifold of $\calP(d+1)$ of codimension $1$, and $\barN=\barN_1$ (i.e., $\beta=1$). 
The family of mappings $f_{\beta}$ provides diffeomorphisms between $\calN(d)$ and $\barN_\beta(d)$. 
Let $f_{\beta}^{-1}(\barP)=(\mu_\barP,\Sigma_\barP)$ denote the inverse mapping for $\barP\in\barN_\beta(d)$, and let $f=f_1$ (i.e., $\beta=1$): $f(\mu,\Sigma)=\mattwotwo{\Sigma+\mu\mu^\top}{\mu}{\mu^\top}{1}$.

By equipping the cone $\calP(d+1)$ by the trace metric~\cite{moakher2011riemannian,MIG-2013,EllipticIsometrySPD-2021} (also called the affine invariant Riemannian metric, AIRM) scaled by $\frac{1}{2}$: 
$$
g_P^\trace(P_1,P_2):=\tr(P^{-1}P_1P^{-1}P_2)
$$ 
 (yielding the squared line element $\ds_\calP^2=\frac{1}{2}\tr((P\,\dP)^2)$),
Calvo and Oller~\cite{SDPMVN-1990} proved that $\barN(d)$ is isometric to $\calN(d)$ (i.e., the Riemannian metric of $\calP(d+1)$ restricted to $\calN(d)$ coincides with the Riemannian metric of $\calN(d)$ induced by $f$) but $\barN(d)$ is not totally geodesic (i.e., the geodesics $\gamma_\calP(\barP_1,\barP_2;t)$ for $\barP_1=f(N_1),\barP_2=f(N_2)\in\barN(d)$ leaves the embedded normal submanifold $\barN(d))$.
Note that $g_P^\trace$ can be interpreted as the Fisher metric for the family $\calN_0$ of $0$-centered normal distributions.
Thus we have $(\calN(d),g^\Fisher)\hookrightarrow (\calP(d+1),g^\trace)$, and the following diagram between parameter spaces and corresponding distributions:
$$
\begin{array}{ccc}
\calN(d) & \hookrightarrow  &\calN_0(d+1)\\
\updownarrow & & \updownarrow\\
\Lambda(d) & \hookrightarrow & \bbP(d+1)
\end{array}
$$

\begin{Remark}
The trace metric was first studied by Siegel~\cite{siegel2014symplectic,nielsen2020siegel} using the wider scope of complex symmetric matrices with positive-definite imaginary parts generalizing the Poincar\'e upper half-plane (see Appendix~\ref{sec:Siegel}).
\end{Remark}

We omit to specify the dimensions and write for short $\calN$, $\barN$ and $\calP$  when clear from context.
Thus C\&O proposed to use the embedding $f=f_{1}$ to give a lower bound $\rho_\CO$ of the Fisher-Rao distance $\rho_\calN$ between normals:
\begin{equation}
\mathrm{LC}_\CO:\quad \rho_{\calN}(N_1,N_2)\geq \rho_{\CO}(\underbrace{f(\mu_1,\Sigma_1)}_{\barP_1},\underbrace{f(\mu_2,\Sigma_2)}_{\barP_2})=
\sqrt{\frac{1}{2}\sum_{i=1}^{d+1} \log^2 \lambda_i(\barP_1^{-1}\barP_2)}.
\end{equation}

We let $\rho_{\CO}(N_1,N_2)=\rho_\CO(f(N_1),f(N_2))$.
The $\rho_\CO$ distance is invariant under affine transformations like the Fisher-Rao distance of Property~\ref{prop:aifr}:

\begin{Property}[affine-invariance of C\&O distance~\cite{SDPMVN-1990}]\label{prop:aico}
For all $A\in\GL(d), a\in\bbR^d$,
we have
$\rho_{\CO}((A\mu_1+a,A\Sigma_1 A^\top),(A\mu_2+a,A\Sigma_2 A^\top)) = \rho_{\CO}(N(\mu_1,\Sigma_1),N(\mu_2,\Sigma_2))$.
\end{Property}

When $\Sigma_1=\Sigma_2=\Sigma$, we have $|\barP_1|=|\barP_2|=|\Sigma|$.
Since the Riemannian geodesics $\gamma_\bbP(P_1,P_2;t)$ in the SPD cone are given by $\gamma_\bbP(P_1,P_2;t)=P_1^{\frac{1}{2}}(P_1^{-\frac{1}{2}}P_2P_1^{-\frac{1}{2}})^t P_1^{\frac{1}{2}}$~\cite{RieMinimax-2013} (also written $\gamma_\SPD(P_1,P_2;t)$), we have
$|\gamma_\bbP(P_1,P_2;t)|=|\Sigma|$. 
Although the submanifold $\bbP_c=\{P\in\bbP\st |P|=c\}$ is totally geodesic with respect to the trace metric, it is not totally geodesic with respect to $\frac{1}{2}\tr((\barP\mathrm{d}\barP)^2)$.
Thus although $\gamma_\bbP(P_1,P_2)\in\barN$, it does not correspond to the embedded MVN geodesics with respect to the Fisher metric.
The C\&O distance between two MVNs $N(\mu_1,\Sigma)$ and $N(\mu_2,\Sigma)$ sharing the same covariance matrix~\cite{SDPMVN-1990} is
\begin{equation}\label{eq:cosamecovar}
\rho_{\CO}(N(\mu_1,\Sigma),N(\mu_2,\Sigma))=\arccosh\left(1+\frac{1}{2}\Delta_\Sigma^2(\mu_1,\mu_2)\right),
\end{equation}
where $\arccosh(x):=\log(x+\sqrt{x^2-1})$ for $x\geq 1$ and $\Delta_\Sigma(\mu_1,\mu_2)$ is the Mahalanobis distance between $N(\mu_1,\Sigma)$ and $N(\mu_2,\Sigma)$. In that case,  we thus have $\rho_{\CO}(N(\mu_1,\Sigma),N(\mu_2,\Sigma))=h_\CO(\Delta_\Sigma(\mu_1,\mu_2))$ where
$h_\CO(u)=\arccosh\left(1+\frac{1}{2}u^2\right)$ is a strictly monotone increasing function.
Let us note in passing that in~\cite{SDPMVN-1990} (Corollary, page 230) there is a confusing or typographic error since the distance is reported as $\arccosh\left(1+\frac{1}{2}d_M(\mu_1,\mu_2)\right)$ where $d_M$ denotes ``Mahalanobis distance''~\cite{mahalanobis1936generalised}. So either $d_M=\Delta_\Sigma^2$, Mahalanobis $D^2$-distance, or there is a missing square in the equation of the Corollary page 230.
To get a flavor of how good is the approximation of the C\&O distance, we may consider the same-covariance case where we have both closed-form solutions for $\rho_{\calN}$ (Eq.~\ref{eq:FRh1dbis}) and $\rho_{\CO}$ (Eq.~\ref{eq:cosamecovar}).
Figure~\ref{fig:qualityFRCO} plots the two functions $h_\CO$ and $h_\FR$ (with $h_\CO(u)\leq h_\FR(u)\leq u$ for $u\in[0,\infty)$).

\begin{figure}
\centering
\includegraphics[width=0.65\textwidth]{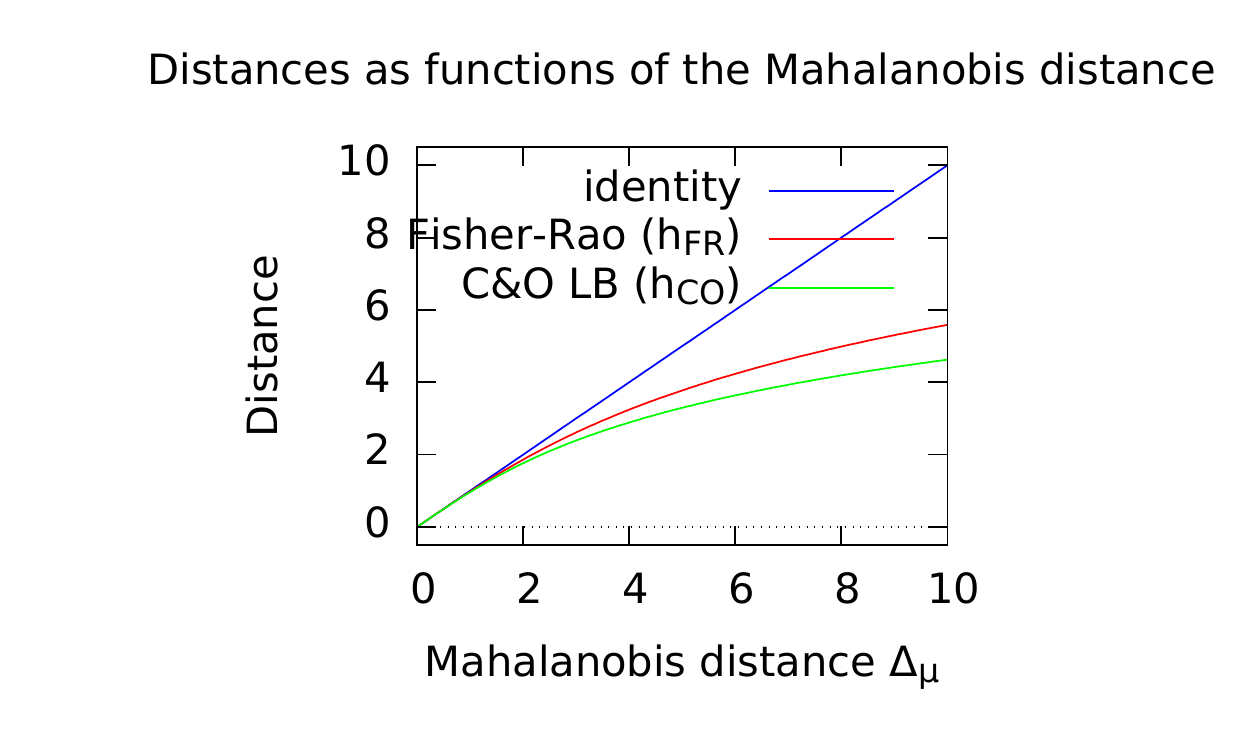}
\caption{Quality of the C\&O lower bound compared to the exact Fisher-Rao distance in the case of $N_1,N_2\in\calM_\Sigma$ (MVNs sharing the same covariance matrix $\Sigma$). We have $\rho_\CO\leq \rho_\calN\leq \Delta_\Sigma$. \label{fig:qualityFRCO}}
\end{figure}

Let us  remark that similarly all $f$-divergences between $N_1=(\mu_1,\Sigma)$ and $N_2=(\mu_2,\Sigma)$ are scalar functions of their Mahalanobis distance $\Delta_\Sigma(\mu_1,\mu_2)$ too, see~\cite{nielsen2022note}.

The C\&O distance $\rho_{\CO}$ is a metric distance that has been used in many applications ranging from computer vision~\cite{ceolin2012computing,wang2017g2denet,nguyen2021geomnet,miyamoto2022fisher} to signal/sensor processing, statistics~\cite{kurtek2015bayesian,marti2016optimal}, machine learning~\cite{tang2018information,DA-FisherRaoMVN-2015,li2016local,liang2019fisher,picot2022adversarial,collas2022use} and analogical reasoning~\cite{murena2018opening}.

\begin{Remark}\label{rk:secondco}
In a second paper, Calvo and Oller~\cite{SDPElliptical-2002} noticed that we can be can embed normal distributions in $\calP(d+1)$ by the following  more general  mapping (Lemma 3.1~\cite{SDPElliptical-2002}):
\begin{equation}
g_{\alpha,\beta,\gamma}(\mu,\Sigma)=
|\Sigma|^{\alpha} \mattwotwo{\Sigma+\beta\gamma^2\mu\mu^\top}{\beta\gamma\mu}{\beta\gamma\mu^\top}{\beta}\in\calP(d+1),
\end{equation}
where $\alpha\in\bbR$, $\beta\in\bbR_{>0}$ and $\gamma\in\bbR$.
It is show in~\cite{SDPElliptical-2002} that the induced length element is
$$
\ds^2_{\alpha,\beta,\gamma}=\frac{1}{2}\left(
\alpha ((d+1)+2\alpha)\tr^2(\Sigma^{-1}\dSigma)+\tr((\Sigma^{-1}\dSigma)^2)
+2\beta\gamma^2\dmu^\top\Sigma^{-1}\dmu+2\alpha \tr(\Sigma^{-1}\dSigma)\frac{\dbeta}{\beta}+
\left(\frac{\dbeta}{\beta}\right)^2
\right).
$$
When $\gamma=\beta=1$, we have
$$
\ds^2_\alpha=\frac{1}{2}\left(
\alpha ((d+1)+2\alpha)\tr^2(\Sigma^{-1}\dSigma)+\tr((\Sigma^{-1}\dSigma)^2)
+2\beta\gamma^2\dmu^\top\Sigma^{-1}\dmu
\right).
$$
Thus to cancel the term $\tr^2(\Sigma^{-1}\dSigma)$, we may either choose $\alpha=0$ or $\alpha=-\frac{2}{1+d}$.

In some applications~\cite{popovic2022measure}, the embedding 
\begin{equation}\label{eq:g}
g_{-\frac{1}{d+1},1,1}(\mu,\Sigma)=|\Sigma|^{-\frac{1}{d+1}} \mattwotwo{\Sigma+\mu\mu^\top}{\mu}{\mu^\top}{1}:=\hat{f}(\mu,\Sigma),
\end{equation}
is used to ensure that $\left|g_{-\frac{1}{d+1},1,1}(\mu,\Sigma)\right|=1$.
That is normal distributions are embedded diffeomorphically into the submanifold of positive-definite matrices with unit determinant (also called SSPD, acronym of Special SPD).
In~\cite{SDPElliptical-2002}, C\&O showed that there exists a second isometric embedding of the Fisher-Rao Gaussian manifold $\calN(d)$ into a submanifold of the cone $\calP(d+1)$: $f_\SSPD(\mu,\Sigma)=|\Sigma|^{-\frac{2}{d+1}} \mattwotwo{\Sigma+\mu\mu^\top}{\mu}{\mu^\top}{1}$.
Let $\hat{P}=f_\SSPD(\mu,\Sigma)$. This  mapping can be understood as taking the elliptic isometry ${P}\mapsto {|P|}^{-\frac{2}{d+1}} {P}$ of $P\in\calP(d+1)$~\cite{EllipticIsometrySPD-2021} since $|\Sigma|=|\bar{P}(\mu,\Sigma)|$ (see proof in Proposition~\ref{prop:KL}). 
It follows that
$$
\rho_\CO(N_1,N_2)=\rho_\calP(\barP_1,\barP_2)=\rho_\calP(\hat{P}_1,\hat{P}_2) \leq \rho_\calN(N_1,N_2).
$$
Similarly, we could have mapped ${P}\mapsto P^{-1}$ to get another isometric embedding.
See the four types of elliptic isometric of the SPD cone described in~\cite{EllipticIsometrySPD-2021}.
Finally, let us remark that the SSPD submanifold is totally geodesic with respect to the trace metric but not with respect to the C\&O metric.
\end{Remark}

The multivariate Gaussian manifold $\calN(d)$ can also be embedded into the SPD cone $\calP(d+1)$ as a Riemannian symmetric space~\cite{lovric2000multivariate,globke2021information}  by  $f_\SSPD$: $\hat\calP=\{f_\SSPD(N)\subset\calP(d+1)\ :\ N\in\calN(d)\}$.
We have $\hat\calP\cong\SL(d+1)/\SO(d+1)$~\cite{lovric2000multivariate,fernandes2000fisher,fernandes2003geometric} (and textbook~\cite{bridson2013metric}, Part II Chapter 10), and the symmetric space $\SL(d+1)/\SO(d+1)$ can be embedded with the Killing Riemannian metric instead of the Fisher information metric:
$$
g^\Killing(N_1,N_2)=\kappa_\Killing\, \left( \mu_1^\top\Sigma^{-1}\mu_2+\frac{1}{2}\tr\left(\Sigma^{-1}\Sigma_1\Sigma^{-1}\Sigma_2\right)
-\frac{1}{2(d+1)}\tr\left(\Sigma^{-1}\Sigma_1\right)\tr\left(\Sigma^{-1}\Sigma_2\right)\right), 
$$
where $\kappa_\Killing>0$ is a predetermined constant (e.g. $1$).
The length element of the Killing metric is
$$
\ds_\SS^2=\kappa_\Killing\, \left(\frac{1}{2}\dmu^\top \Sigma^{-1} \dmu +\tr\left(\left(\Sigma^{-1}\dSigma\right)^2\right)-\frac{1}{2}\tr^2\left(\Sigma^{-1}\dSigma\right)\right).
$$
When we consider $\calN_\Sigma$, we may choose $\kappa_\Killing=2$ so that the Killing metric coincides with the Fisher information metric.
The induced Killing distance~\cite{lovric2000multivariate} is available in closed form:
\begin{equation}\label{eq:KillingDistance}
\rho_\Killing(N_1,N_2)=\sqrt {\kappa_\Killing\, \sum_{i=1}^{d+1} \log^2 \lambda_i\left(\hat{L}_1^{-1}\hat{P}_2\left(\hat{L}_1^{-1}\right)^\top\right)},
\end{equation}
where $\hat{L}_1$ is the unique lower triangular matrix obtained from the Cholesky decomposition of $\hat{P}_1=f_\SSPD(N_1)=\hat{L}_1\hat{L}_1^\top$.  
Note that $\hat{L}_1^{-1}\hat{P}_2\left(\hat{L}_1^{-1}\right)^\top\in\calP(d+1)$ and $|\hat{L}_1|$, i.e., $\hat{L}_1\in\SL(d+1)$.

When $N_1=(\mu_1,\Sigma)$ and $N_2=(\mu_2,\Sigma)$ ($N_1,N_2\in\calN_\Sigma$), we have~\cite{lovric2000multivariate}
$$
\rho_\Killing(N_1,N_2)=\sqrt{2\kappa_\Killing}\arccosh\left(1+\frac{1}{2}\Delta_\Sigma^2(\mu_1,\mu_2)\right),
$$
where $\Delta_\Sigma^2$ is the squared Mahalanobis distance.
Thus $\rho_\Killing(N_1,N_2)=h_\Killing(\Delta_\Sigma(\mu_1,\mu_2))$ where $h_\Killing(u)=\sqrt{2\kappa_\Killing}\arccosh\left(1+\frac{1}{2}u^2\right)$.

When $N_1=(\mu,\Sigma_1)$ and $N_2=(\mu,\Sigma_2)$ ($N_1,N_2\in\calN_\mu$), we have~\cite{lovric2000multivariate}:
$$
\rho_\Killing(N_1,N_2)= \sqrt{
\kappa_\Killing
\left(
\sum_{i=1}^d \log^2\lambda_i\left({L}_1^{-1}{P}_2\left({L}_1^{-1}\right)^\top\right)
-\frac{1}{(d+1)^2}
\left(\sum_{i=1}^d \log\lambda_i\left({L}_1^{-1}{P}_2\left({L}_1^{-1}\right)^\top\right)\right)
\right)
}.
$$


See Example~\ref{ex1entropy}. Let us emphasize that the Killing distance is not the Fisher-Rao distance but is available in closed-form as an alternative metric distance between MVNs.

\section{Approximating the Fisher-Rao distance}\label{sec:approxFR}

\subsection{Approximating length of curves}\label{sec:approxlength}

Rao's distance~\cite{micchelli2005rao} is a Riemannian geodesic distance 
$$\rho_{\calN}(p_{\lambda_1},p_{\lambda_2})=\inf_c \left\{\Length(c) \st c(0)=p_{\lambda_1}, c(1)=p_{\lambda_2} \right\},$$ 
where 
$$
\Length(c)=\int_0^1 \underbrace{\sqrt{\inner{\dot c(t)}{\dot c(t)}}_{c(t)}}_{\ds_\calN(t)} \dt.
$$

We can approximate the Rao distance $\rho_\calN(N_1,N_2)$ by discretizing regularly  any smooth curve $c(t)$ joining $N_1$ ($t=0$) to $N_2$ ($t=1$):
$$
\rho_\calN(N_1,N_2)\leq \frac{1}{T} \sum_{i=1}^{T-1} \rho_\calN\left(c\left(\frac{i}{T}\right),c\left(\frac{i+1}{T}\right)\right),$$
with equality holding iff $c(t)=\gamma_\calN(N_1,N_2;t)$ is the Riemannian geodesic induced by the Fisher information metric.

When $T$ is sufficiently large, the normal distributions 
$c\left(\frac{i}{T}\right)$ and $c\left(\frac{i+1}{T}\right)$ are close to each other, and we can approximate 
$\rho_\calN\left(c\left(\frac{i}{T}\right),c\left(\frac{i+1}{T}\right)\right)$ 
by $\sqrt{D_J\left[c\left(\frac{i}{T}\right),c\left(\frac{i+1}{T}\right)\right]}$, where $D_J[N_1,N_2]=D_\KL[N_1,N_2]+D_\KL[N_2,N_1]$ is Jeffreys divergence, and $D_\KL$ is the Kullback-Leibler divergence:
$$
D_\KL[p_{(\mu_1,\Sigma_1)}:p_{(\mu_2,\Sigma_2)}]
=\frac{1}{2}\left(
\tr(\Sigma_2^{-1}\Sigma_1)+\Delta\mu^\top\Sigma_2^{-1}\Delta\mu-d+\log\frac{|\Sigma_2|}{|\Sigma_1|}
\right).
$$
Thus the costly determinant computations cancel each others in Jeffreys divergence (i.e., $\log\frac{|\Sigma_2|}{|\Sigma_1|}+\log\frac{|\Sigma_1|}{|\Sigma_2|}=0$) and we have:
$$
D_J[p_{(\mu_1,\Sigma_1)}:p_{(\mu_2,\Sigma_2)}]=\tr\left(\frac{\Sigma_2^{-1}\Sigma_1+\Sigma_1^{-1}\Sigma_2}{2}-I\right)
+\Delta\mu^\top\frac{\Sigma_1^{-1}+\Sigma_2^{-1}}{2}\Delta\mu.
$$
Figure~\ref{fig:method} summarizes our method to approximate the Fisher-Rao geodesic distance.

\begin{figure}
\centering
\includegraphics[width=0.6\textwidth]{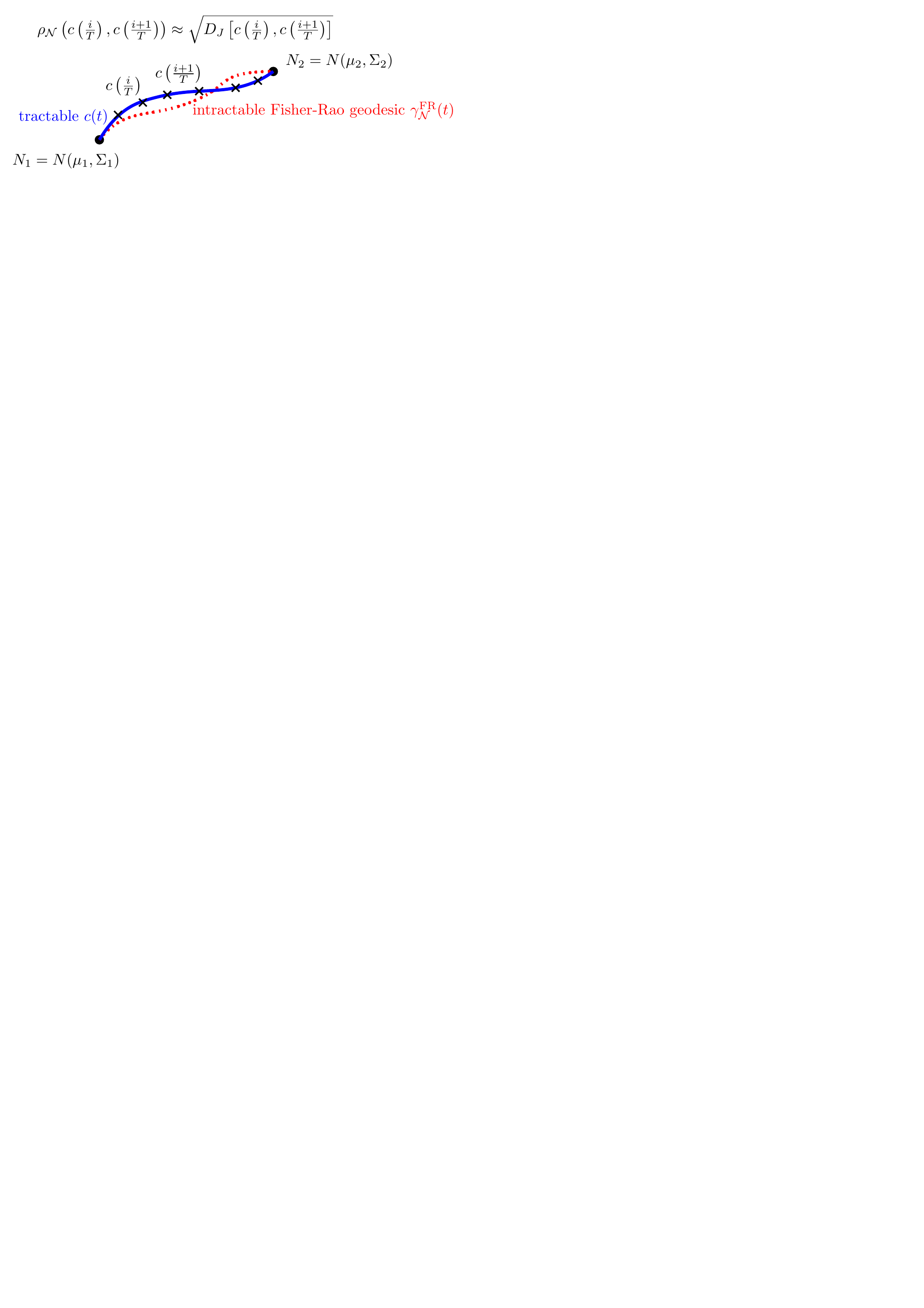}
\caption{Approximating the Fisher-Rao geodesic distance $\rho_\calN(N_1,N_2)$: 
The Fisher-Rao geodesic $\gamma_\calN^{\mathrm{FR}}$ is not known in closed-form.
 We consider a tractable curve $c(t)$, discretize $c(t)$ at $T+1$ points $c(\frac{i}{T})$ with $c(0)=N_1$ and $c(1)=N_2$, and approximate $\rho_\calN\left(c\left(\frac{i}{T}\right),c\left(\frac{i+1}{T}\right)\right)$ by $D_J\left[c\left(\frac{i}{T}\right),c\left(\frac{i+1}{T}\right)\right]$. Considering different tractable curves $c(t)$ yield different approximations.
 \label{fig:method}}
\end{figure}

In general, it holds that $I_f[p:q]\approx \frac{f''(1)}{2}\ds^2_\Fisher$ between infinitesimally close distributions $p$ and $q$ ($\ds\approx \sqrt{\frac{2\,I_f[p:q]}{f''(1)}}$), where $I_f[\cdot:\cdot]$ denotes a $f$-divergence~\cite{IG-2016}.
The Jeffreys divergence is a $f$-divergence obtained for $f_J(u)=-\log u+u\log u$ with $f_J''(1)=2$.
It is thus interesting to find low computational cost $f$-divergences between multivariate normal distributions in order to approximate the infinitesimal length element $\ds$. 
Note that $f$-divergences between MVNs are invariant under the action of the affine group~\cite{nielsen2022note}.
Thus for infinitesimally close distributions $p$ and $q$, this informally explains that $\ds_\Fisher$ is invariant under the action of the affine group.

Although the definite integral of the length element along the Fisher-Rao geodesic $\gamma_\calN^{\mathrm{FR}}$ is not known in closed form (i.e., Fisher-Rao distance), the integral of the squared length element along the mixture geodesic $\gamma_\calN^m$ 
and exponential geodesic $\gamma_\calN^e$ coincide with Jeffreys divergence~\cite{IG-2016}:

\begin{Property}[\cite{IG-2016}]
We have $$
D_J[p_{\lambda_1},p_{\lambda_2}]=\int_0^1 \ds_\calN^2(\gamma^m_\calN(p_{\lambda_1},p_{\lambda_2};t)\dt =
\int_0^1 \ds_\calN^2(\gamma^e_\calN(p_{\lambda_1},p_{\lambda_2};t)\dt.
$$
\end{Property}

\begin{proof}
Let us report a proof of this remarkable fact  in the general setting of Bregman manifolds (proof omitted in~\cite{IG-2016}).
Since $D_J[p_{\lambda_1},p_{\lambda_2}]=D_\KL[p_{\lambda_1},p_{\lambda_2}]+D_\KL[p_{\lambda_2},p_{\lambda_1}]$
and $D_\KL[p_{\lambda_1},p_{\lambda_2}]=B_F(\theta(\lambda_2):\theta(\lambda_1))$, where $B_F$ denotes the Bregman divergence induced by the cumulant function of the multivariate normals and $\theta(\lambda)$ is the natural parameter corresponding to $\lambda$, we have
\begin{eqnarray*}
D_J[p_{\lambda_1},p_{\lambda_2}]&=&B_F(\theta_1:\theta_2)+B_F(\theta_2:\theta_1),\\
&=& S_F(\theta_1;\theta_2)=(\theta_2-\theta_1)^\top (\eta_2-\eta_1)=S_{F^*}(\eta_1;\eta_2),
\end{eqnarray*}
where $\eta=\nabla F(\theta)$ and $\theta=\nabla F^*(\eta)$ denote the dual parameterizations obtained by the Legendre-Fenchel convex conjugate  $F^*(\eta)$ of $F(\theta)$.
The proof is based on the first-order and second-order directional derivatives.
The first-order directional derivative $\nabla_u F(\theta)$ with respect to vector $u$ is defined by 
$$
\nabla_u F(\theta)=\lim_{t\rightarrow 0} \frac{F(\theta+tv)-F(\theta)}{t}=v^\top \nabla F(\theta).
$$
 
The second-order directional derivatives $\nabla_{u,v}^2 F(\theta)$ is
\begin{eqnarray*}
\nabla_{u,v}^2 F(\theta) &=& \nabla_{u} \nabla_v F(\theta),\\
 &=& \lim_{t\rightarrow 0} \frac{v^\top \nabla F(\theta+tu)-v^\top\nabla F(\theta)}{t},\\
&=& u^\top \nabla^2 F(\theta) v.
\end{eqnarray*}

Now consider the squared length element $\ds^2(\gamma(t))$ on the primal geodesic $\gamma(t)$ expressed using the primal coordinate system $\theta$:
$\ds^2(\gamma(t))=\dtheta(t)^\top \nabla^2F(\theta(t)) \dtheta(t)$ with $\theta(\gamma(t))=\theta_1+t(\theta_2-\theta_1)$ and $\dtheta(t)=\theta_2-\theta_1$.
Let us express the $\ds^2(\gamma(t))$   using the second-order directional derivative:
$$
\ds^2(\gamma(t))=\nabla^2_{\theta_2-\theta_1}  F(\theta(t)).
$$
Thus we have $\int_0^1 \ds^2(\gamma(t))\dt=[\nabla_{\theta_2-\theta_1}  F(\theta(t))]_0^1$,
where the first-order directional derivative is $\nabla_{\theta_2-\theta_1}  F(\theta(t))=(\theta_2-\theta_1)^\top \nabla F(\theta(t))$.
Therefore we get $\int_0^1 \ds^2(\gamma(t))\dt=(\theta_2-\theta_1)^\top (\nabla F(\theta_2)-\nabla F(\theta_1))=S_F(\theta_1;\theta_2)$.

Similarly, we express the squared length element $\ds^2(\gamma^*(t))$ using the dual coordinate system $\eta$ as the second-order directional derivative of $F^*(\eta(t))$ with $\eta(\gamma^*(t))=\eta_1+t(\eta_2-\eta_1)$:
$$
\ds^2(\gamma^*(t))=\nabla^2_{\eta_2-\eta_1}  F^*(\eta(t)).
$$
Therefore, we have  $\int_0^1 \ds^2(\gamma^*(t))\dt=[\nabla_{\eta_2-\eta_1}  F^*(\eta(t))]_0^1=S_{F^*}(\eta_1;\eta2)$.
Since $S_{F^*}(\eta_1;\eta_2)=S_F(\theta_1;\theta_2)$, we conclude that
$$
S_F(\theta_1;\theta_2)=\int_0^1 \ds^2(\gamma(t))\dt=\int_0^1 \ds^2(\gamma^*(t))\dt
$$

In 1D, both pregeodesics $\gamma(t)$ and $\gamma^*(t)$ coincide. We have $\ds^2(t)=(\theta_2-\theta_1)^2 f''(\theta(t))=(\eta_2-\eta_1){f^*}''(\eta(t))$ so that we check that $S_F(\theta_1;\theta_2)=\int_0^1 \ds^2(\gamma(t))\dt=(\theta_2-\theta_1)[f'(\theta(t))]_0^1=(\eta_2-\eta_1)[{f^*}'(\eta(t))]_0^1=(\eta_2-\eta_1)(\theta_2-\theta_2)$.
\end{proof}

It follows the following property:

\begin{Property}[Fisher-Rao upper bound]\label{prop:UBJ}
The Fisher-Rao distance between   normal distributions is upper bounded by the square root of the Jeffreys divergence: $\rho_\calN(N_1,N_2) \leq\sqrt{D_J(N_1,N_2)}$.
\end{Property}

\begin{proof}
Consider the Cauchy-Schwarz inequality for positive functions $f(t)$ and $g(t)$:
 $\int_0^1 f(t)g(t)\dt\leq\sqrt{(\int_0^1 f(t)^2\dt)(\int_0^1 g(t)^2\dt)}$), and let $f(t)=\ds_\calN(\gamma^c_\calN(p_{\lambda_1},p_{\lambda_2};t)$ and $g(t)=1$. 
Then we get: 
$$
\left(\int_0^1 \ds_\calN(\gamma^c_\calN(p_{\lambda_1},p_{\lambda_2};t)\dt\right)^2
\leq \left(\int_0^1 \ds_\calN^2(\gamma^c_\calN(p_{\lambda_1},p_{\lambda_2};t)\dt\right) \left(\int_0^1 1^2 \dt\right)
$$
Furthermore since by definition of $\gamma_\calN^{\mathrm{FR}}$, we have 
$$
\int_0^1 \ds_\calN(\gamma^c_\calN(p_{\lambda_1},p_{\lambda_2};t)\dt\geq \int_0^1 \ds_\calN(\gamma^{\mathrm{FR}}_\calN(p_{\lambda_1},p_{\lambda_2};t)\dt=:\rho_\calN(N_1,N_2),
$$
it follows for $c=e$ (i.e., $e$-geodesic $\gamma_\calN^e$), we have:
$$
\rho_\calN(N_1,N_2)^2 \leq \int_0^1 \ds_\calN^2(\gamma^e_\calN(p_{\lambda_1},p_{\lambda_2};t)\dt = D_J(N_1,N_2).
$$
Thus we have $\rho_\calN(N_1,N_2) \leq\sqrt{D_J(N_1,N_2)}$.

Note that in Riemannian geometry, a curve $\gamma$ minimizes the energy $E(\gamma)=\int_0^1 |\dot\gamma(t)|^2\dt$ if it minimizes the length $L(\gamma)=\int_0^1 \|\dot\gamma(t)\|\dt$ and $\|\dot\gamma(t)\|$ is constant. Using Cauchy-Schwartz inequality, we can show that $L(\gamma)\leq E(\gamma)$.
\end{proof}

Note that this upper bound is tight at infinitesimal scale (i.e., when $N_2=N_1+\mathrm{d}N$).

For any smooth curve $c(t)$, we thus approximate $\rho_\calN$ by
\begin{equation}\label{eq:approx}
\boxed{\tilde\rho_\calN^c(N_1,N_2) := \frac{1}{T} \sum_{i=1}^{T-1} \sqrt{D_J\left[c\left(\frac{i}{T}\right),c\left(\frac{i+1}{T}\right)\right]}.}
\end{equation}

For example, we may consider the following  curves on $\calN$ which admit closed-form parameterizations in $t\in[0,1]$:
\begin{itemize}
	\item linear interpolation 
	$c_\lambda(t)=t (\mu_1,\Sigma_1)+(1-t)(\mu_2,\Sigma_2)$ 
	between $(\mu_1,\Sigma_1)$ and $(\mu_2,\Sigma_2)$,
	
	\item the mixture geodesic~\cite{nielsen2019jensen} $c_m(t)=\gamma^m_\calN(N_1,N_2;t)=(\mu_t^m,\Sigma_t^m)$ with
$\mu_t^m =\bar\mu_t$ and $\Sigma_t^m = \bar\Sigma_t+t\mu_1\mu_1^\top+(1-t)\mu_2\mu_2^\top-\bar\mu_t\bar\mu_t^\top$
where   $\bar\mu_t=t\mu_1+(1-t)\mu_2$ and $\bar\Sigma_t=t\Sigma_1+(1-t)\Sigma_2$,

\item the exponential geodesic~\cite{nielsen2019jensen} $c_e(t)=\gamma_\calN^e(N_1,N_2;t)=(\mu_t^e,\Sigma_t^e)$ with $\mu_t^e = \bar\Sigma_t^H (t\Sigma_1^{-1}\mu_1+(1-t)\Sigma_2^{-1}\mu_2)$
and $\Sigma_t^e =\bar\Sigma^H_t$
where $\bar\Sigma^H_t=(t\Sigma_1^{-1}+(1-t)\Sigma_2^{-1})^{-1}$ is the matrix harmonic mean,
\item the curve $c_{em}(t)=\frac{1}{2}\left(\gamma_\calN^e(N_1,N_2;t)+\gamma^m_\calN(N_1,N_2;t)\right)$ which is obtained by averaging the mixture geodesic with the exponential geodesic.
\end{itemize}

Let us denote by $\tilde\rho^{\lambda}_\calN=\tilde\rho^{c_\lambda}_\calN$,
 $\tilde\rho^{m}_\calN=\tilde\rho^{c_m}_\calN$, 
 $\tilde\rho^{e}_\calN=\tilde\rho^{c_e}_\calN$ and 
 $\tilde\rho^{em}_\calN=\tilde\rho^{c_{em}}_\calN$ the approximations obtained by these curves following from Eq.~\ref{eq:approx}.
Figure~\ref{fig:vizemgeo2d} visualizes the exponential and mixture geodesics between two bivariate normal distributions.
When $T$ is sufficiently large, the approximated distances $\tilde\rho^{x}$ are close to the length of curve $x$, and we may thus consider several curves $c_i$ and report the smallest Fisher-Rao distance approximations obtained: $\rho_\calN(N_1,N_2)\approx \min_i \tilde\rho_\calN^{c_i}(N_1,N_2)$. 

\begin{figure}
\centering
\includegraphics[width=0.85\textwidth]{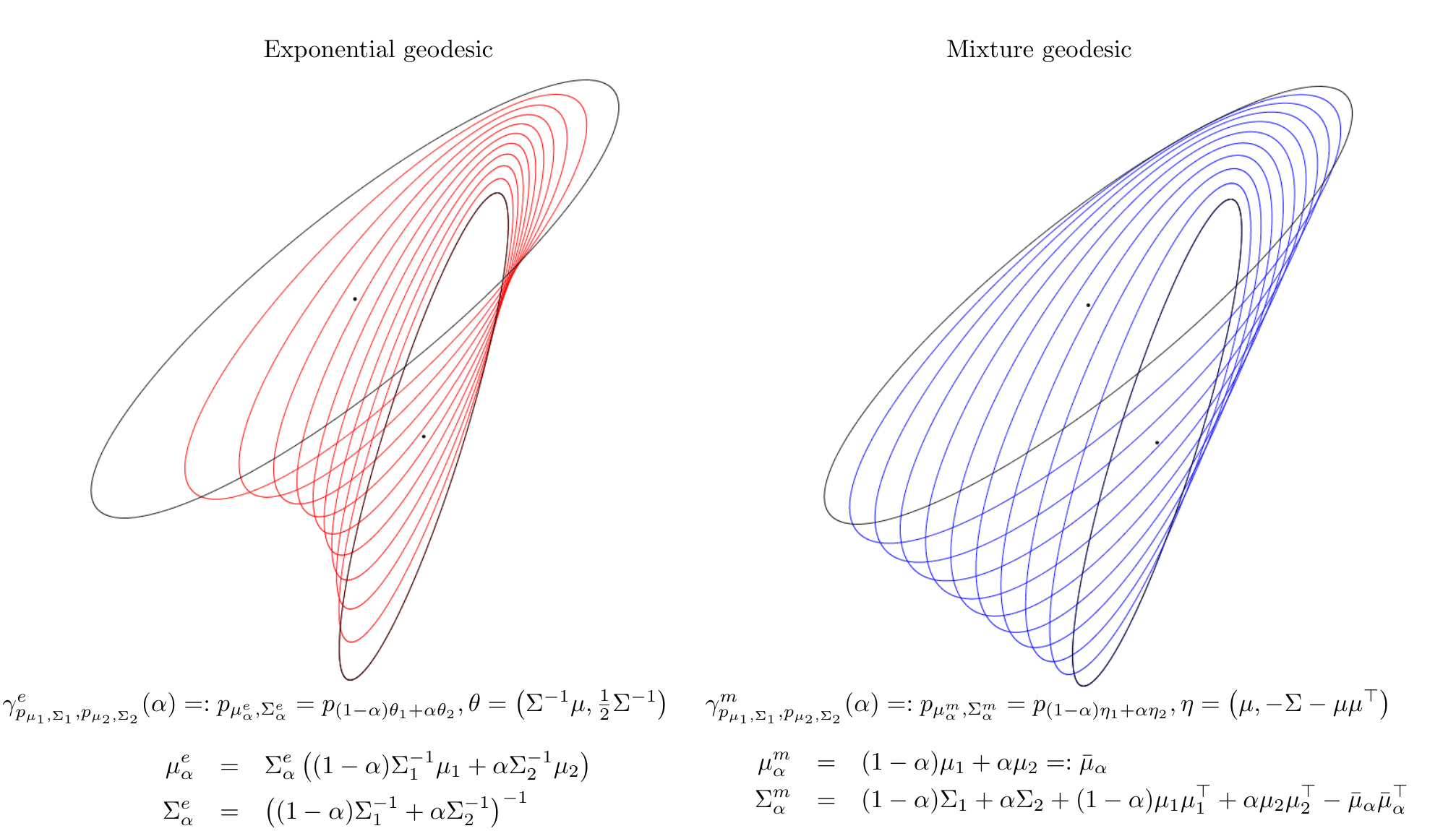} 
\caption{Visualizing the exponential and mixture geodesics between two bivariate normal distributions.\label{fig:vizemgeo2d}}
\end{figure}

Note that we consider the regular spacing for approximating a curve length and do not optimize the position of the sample points on the curve.
Indeed, as $T\rightarrow\infty$, the curve length approximation tends to the Riemannian curve length.
In other words,  we can measure approximately finely the length of any curve available with closed-form reparameterization  by increasing $T$.
Thus the key question of our method is how to best approximate the Fisher-Rao geodesic by a closed-form curve.

\begin{Remark}
In~\cite{globke2021information}, the authors consider the embedding of Eq.~\ref{eq:g} and use the Killing metric $g^\Killing$ instead of the Fisher metric defined by:
$$
g_N^\Killing(N_1,N_2)=\mu_1^\top\Sigma^{-1}\mu_2+\frac{1}{2}\tr\left(\Sigma^{-1}\Sigma_1\Sigma^{-1}\Sigma_2\right)
-\frac{1}{2(d+1)}\tr\left(\Sigma^{-1}\Sigma_1\right)\tr\left(\Sigma^{-1}\Sigma_2\right),
$$ 
where $N=(\mu,\Sigma)$, $N_1=(\mu_1,\Sigma_1)$, and $N_2=(\mu_2,\Sigma_2)$.
A Fisher geodesic defect measure of a curve $c$ is defined by 
$$
\delta(c)=\lim_{s\rightarrow \infty} \frac{1}{s}\int_0^s \|\nabla_{\dot{c}}^{g^\Fisher} \dot{c}\|_{c(t)}^\Fisher \dt,
$$
where $\nabla^{g^\Fisher}$ denotes the Levi-Civita connection induced by the Fisher metric. 
When $\delta(c)=0$ the curve is said an asymptotic geodesic of the Fisher geodesic.
It is proven that Killing geodesics at $(\mu,\Sigma)$ are asymptotic Fisher geodesics when the initial condition $c'(0)$ is orthogonal to $\calN_\mu$. 
\end{Remark}

Next, we introduce yet another curve $c_\CO(t)$ derived from Calvo \& Oller isometric mapping $f$ which experimentally behaves better when normals are not {\it too far} from each others.

\subsection{Calvo \& Oller's curve}\label{sec:cocurve}
This approximation consists in leveraging the closed-form expression of the SPD geodesics~\cite{moakher2011riemannian,RieMinimax-2013}: 
$$
\gamma_\calP(P,Q;t)=P^{\frac{1}{2}} \, \left(P^{-\frac{1}{2}}Q^{\frac{1}{2}}P^{-\frac{1}{2}}\right)^t \, P^{\frac{1}{2}},
$$ 
to approximate the Fisher-Rao normal geodesic $\gamma_\calN(N_1,N_2;t)$ as follows:
Let $\barP_1=f(N_1),\barP_2=f(N_2)\in\barN$, and consider the smooth curve 
$$
\bar c_\CO(P_1,P_2;t)=\proj_{\barN}\left(\gamma_\calP(\barP_1,\barP_2;t)\right),
$$ 
where $\proj_{\barN}(P)$ denotes the Fisher orthogonal projection of $P\in\calP(d+1)$ onto $\barN$ (Figure~\ref{fig:FRapprox}).
Thus curve $c_\CO$ is then defined as $f^{-1}(\bar c_\CO)$.
Note that the power matrix $P^t$ is $U\diag(\lambda_1^t,\ldots,\lambda_d^t) V^\top$ where 
$P=U\diag(\lambda_1^t,\ldots,\lambda_d^t) V^\top $ is the eigenvalue decomposition of $P$.

\begin{figure}
\centering
\includegraphics[width=0.85\textwidth]{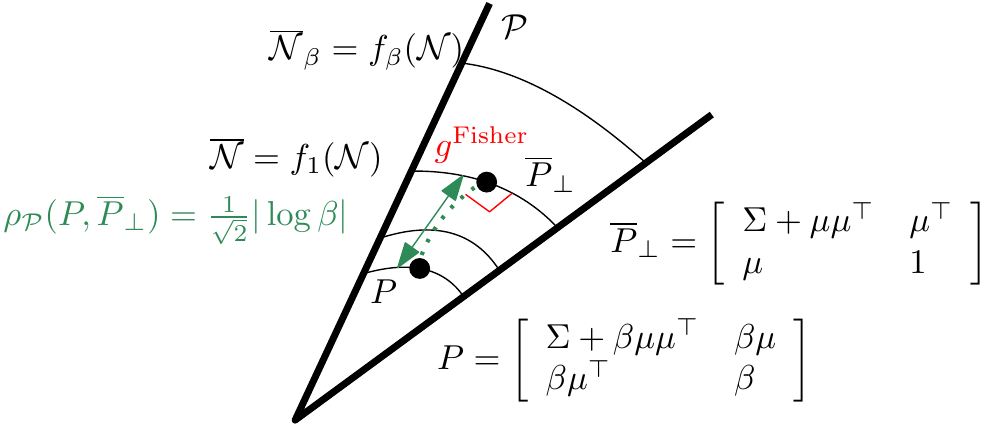}
\caption{Projecting a SPD matrix $P\in\calP$ onto $\barN=f(\calN)$.
 \label{fig:projection}}
\end{figure}
 
Let us now explain how to project $P=[P_{i,j}]\in\calP(d+1)$ onto $\barN$ based on  the analysis of the Appendix of~\cite{SDPMVN-1990} (page 239):

\begin{Proposition}[Projection of a SPD matrix onto the embedded normal submanifold $\barN$]\label{prop:proj}
Let $\beta=P_{d+1,d+1}$ and write  $P=\mattwotwo{\Sigma+\beta\mu\mu^\top}{\beta\mu}{\beta\mu^\top}{\beta}$.
Then the orthogonal projection at $P\in\calP$ onto $\barN$ is: 
$$
\barP_\perp=\proj_{\barN}(P)=\mattwotwo{\Sigma+\mu\mu^\top}{\mu^\top}{\mu}{1},
$$
and the SPD distance between $P$ and $\barP_\perp$  is $\rho_\calP(P,\barP_\perp)=\frac{1}{\sqrt{2}} |\log\beta|$.
\end{Proposition}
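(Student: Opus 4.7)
The plan is to reduce the problem to the standard form $P = \diag(1,\ldots,1,\beta)$ versus $I_{d+1}$ by exploiting the affine invariance (congruence invariance) of the SPD trace metric, and then verify orthogonality and distance directly in that model case.

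First I would find an explicit factorization of both matrices that differs only by a scalar in the last slot. Writing $\Sigma = LL^\top$ for the Cholesky factor $L\in\GL(d)$ and setting
$$
M=\mattwotwo{L}{\mu}{0}{1}\in\GL(d+1),\qquad D=\mattwotwo{I_d}{0}{0}{\sqrt{\beta}},
$$
a direct block computation shows that $\barP_\perp = MM^\top$ and $P = (MD)(MD)^\top = M D^2 M^\top$, where $D^2=\diag(1,\dots,1,\beta)$. By the congruence invariance (Eq.~\ref{eq:spdconginvar}) applied to $X=M^{-1}$, both the distance and the orthogonality question transfer: $\rho_\calP(P,\barP_\perp) = \rho_\calP(D^2,I_{d+1})$, and $\barP_\perp$ is the orthogonal projection of $P$ onto $\barN$ iff $I_{d+1}$ is the orthogonal projection of $D^2$ onto $\tilde{\barN}:=M^{-1}\barN M^{-\top}$.

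The distance part is then immediate: the eigenvalues of $I_{d+1}^{-1}D^2=D^2$ are $(1,\ldots,1,\beta)$, so
$$
\rho_\calP(D^2,I_{d+1})=\sqrt{\tfrac{1}{2}\sum_{i=1}^{d+1}\log^2\lambda_i(D^2)}=\tfrac{1}{\sqrt{2}}|\log\beta|.
$$
For the orthogonality, I would compute $T_{I_{d+1}}\tilde{\barN}$ explicitly. Differentiating the parameterization $(\mu,\Sigma)\mapsto f(\mu,\Sigma)$ gives tangent vectors at $\barP_\perp$ of the form $\bigl(\begin{smallmatrix}\dSigma+\dmu\mu^\top+\mu\dmu^\top & \dmu\\ \dmu^\top & 0\end{smallmatrix}\bigr)$; conjugating by $M^{-1}$ as above collapses the cross terms and yields
$$
T_{I_{d+1}}\tilde{\barN}=\left\{\mattwotwo{A}{u}{u^\top}{0}\st A\in\Sym(d),\ u\in\bbR^d\right\}.
$$
The Riemannian logarithm at $I_{d+1}$ of $D^2$ in the trace metric is the matrix logarithm $\log D^2=\diag(0,\ldots,0,\log\beta)$, and its trace-metric inner product at $I_{d+1}$ with any such tangent vector is $\tfrac{1}{2}\tr\bigl(\log(D^2)\cdot W\bigr)=0$ by inspection (the product has zero diagonal). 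Hence the geodesic from $\barP_\perp$ to $P$ meets $\barN$ orthogonally at $\barP_\perp$.

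The one genuine subtlety is that orthogonality only certifies a critical point of the squared-distance function on the submanifold. To upgrade this to the global orthogonal projection, I would invoke the Hadamard (nonpositive-curvature) structure of $(\calP(d+1),g^\trace)$: the squared distance from the fixed point $P$ is a geodesically convex function on $\calP(d+1)$, and its restriction to the image of the smooth parameterization $f:\Lambda(d)\to\barN$ is a smooth function with a unique minimum on each ``slice'' fixing $\beta$ (as the vertical fiber $\{\diag(1,\ldots,1,\beta)\}\subset\calP$ is totally geodesic and meets $\barN$ only at points with determinant ratio $\beta$); together with the uniqueness of the critical point produced above this identifies $\barP_\perp$ as the unique nearest point. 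This uniqueness / global-minimality step is the most delicate one; the algebraic computations above are routine once the factorization $P = MD^2M^\top$ is in hand.
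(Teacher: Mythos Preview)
Your factorization $P=MD^{2}M^{\top}$, $\barP_\perp=MM^{\top}$ is clean and does the heavy lifting: congruence invariance immediately gives both the distance $\tfrac{1}{\sqrt{2}}|\log\beta|$ and the orthogonality of the geodesic $t\mapsto M\,\diag(1,\ldots,1,\beta^{t})\,M^{\top}$ to $\barN$ at $t=0$. That part is correct and arguably slicker than the paper's route. The paper does not spell out a proof here; it refers to the Appendix of Calvo--Oller and relies on its earlier expression for the SPD metric in the $(\mu,\Sigma,\beta)$ coordinates,
\[
\ds^{2}_{\CO}=\tfrac{1}{2}\Bigl(\tfrac{\dbeta}{\beta}\Bigr)^{2}+\beta\,\dmu^{\top}\Sigma^{-1}\dmu+\tfrac{1}{2}\tr\bigl((\Sigma^{-1}\dSigma)^{2}\bigr),
\]
which exhibits $\calP(d+1)=\bbR_{>0}\times\barN_{\beta}$ as a foliation with the $\beta$-direction orthogonal to every leaf. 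In that presentation orthogonality and the distance formula are read off directly.

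Where your argument does not go through is the global-minimality step. Geodesic convexity of $\rho_{\calP}(P,\cdot)^{2}$ on the Hadamard manifold $\calP(d+1)$ does \emph{not} survive restriction to $\barN$, because $\barN$ is explicitly not totally geodesic, so ``convex function restricted to a submanifold'' need not have a unique critical point. Your sentence about vertical fibers, slices and ``uniqueness of the critical point produced above'' does not constitute an argument (you only produced a critical point, never its uniqueness). A short repair uses precisely the metric decomposition above: for any curve $c$ in $\calP(d+1)$ from $P=f_{\beta}(\mu,\Sigma)$ to an arbitrary point $Q\in\barN$, the $\beta$-coordinate runs from $\beta$ to $1$, hence
\[
\Length(c)=\int_{0}^{1}\ds_{\CO}(c(t))\,\dt\ \geq\ \int_{0}^{1}\tfrac{1}{\sqrt{2}}\,\frac{|\dot\beta(t)|}{\beta(t)}\,\dt\ \geq\ \tfrac{1}{\sqrt{2}}|\log\beta|,
\]
and this lower bound is attained by the curve you exhibited. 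That gives global minimality without any convexity appeal, and also shows that the minimizer is unique (equality in the first inequality forces $\dmu=0$, $\dSigma=0$ along $c$).
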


\begin{figure}
\centering
\begin{tabular}{cc}
\includegraphics[width=0.4\textwidth]{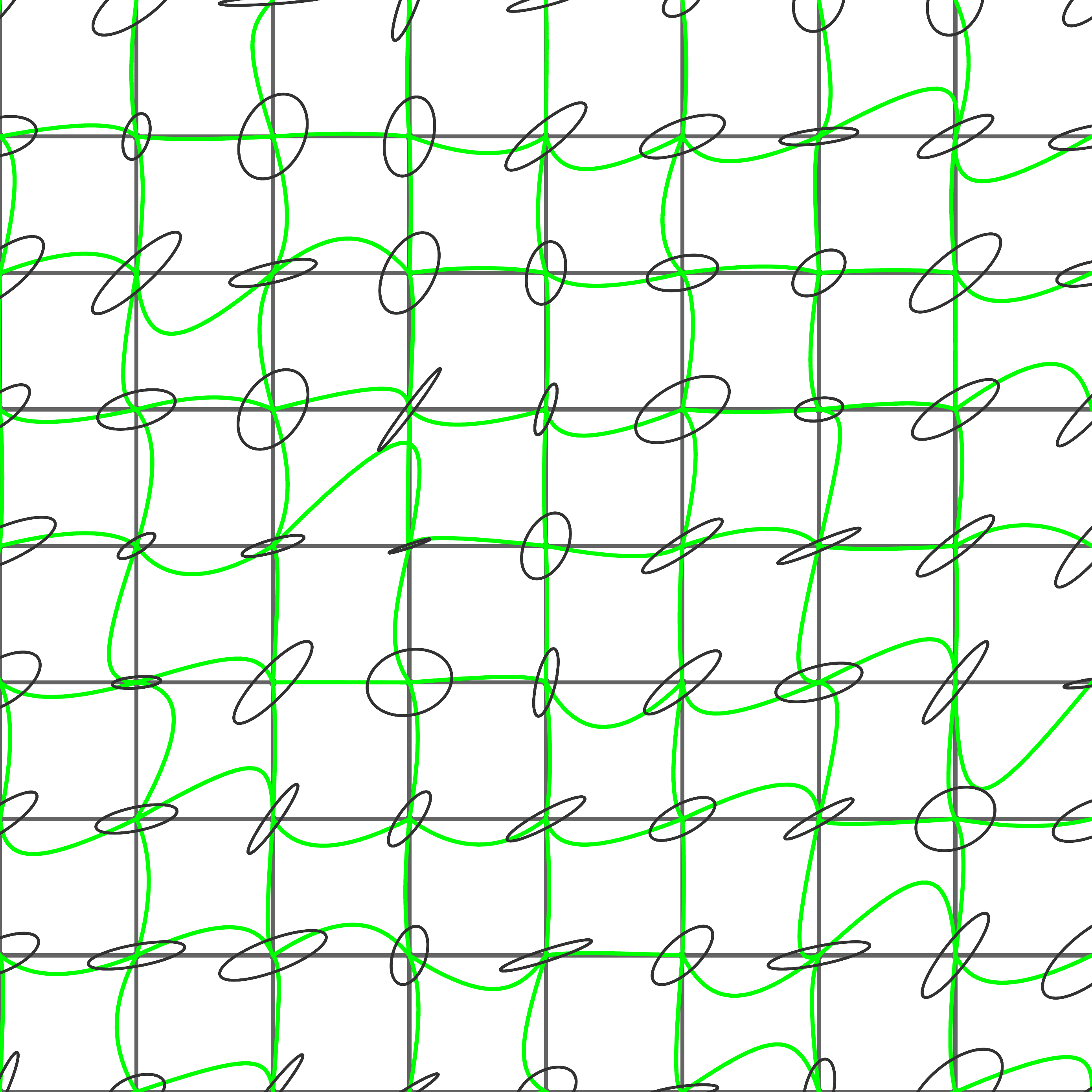} &
\includegraphics[width=0.4\textwidth]{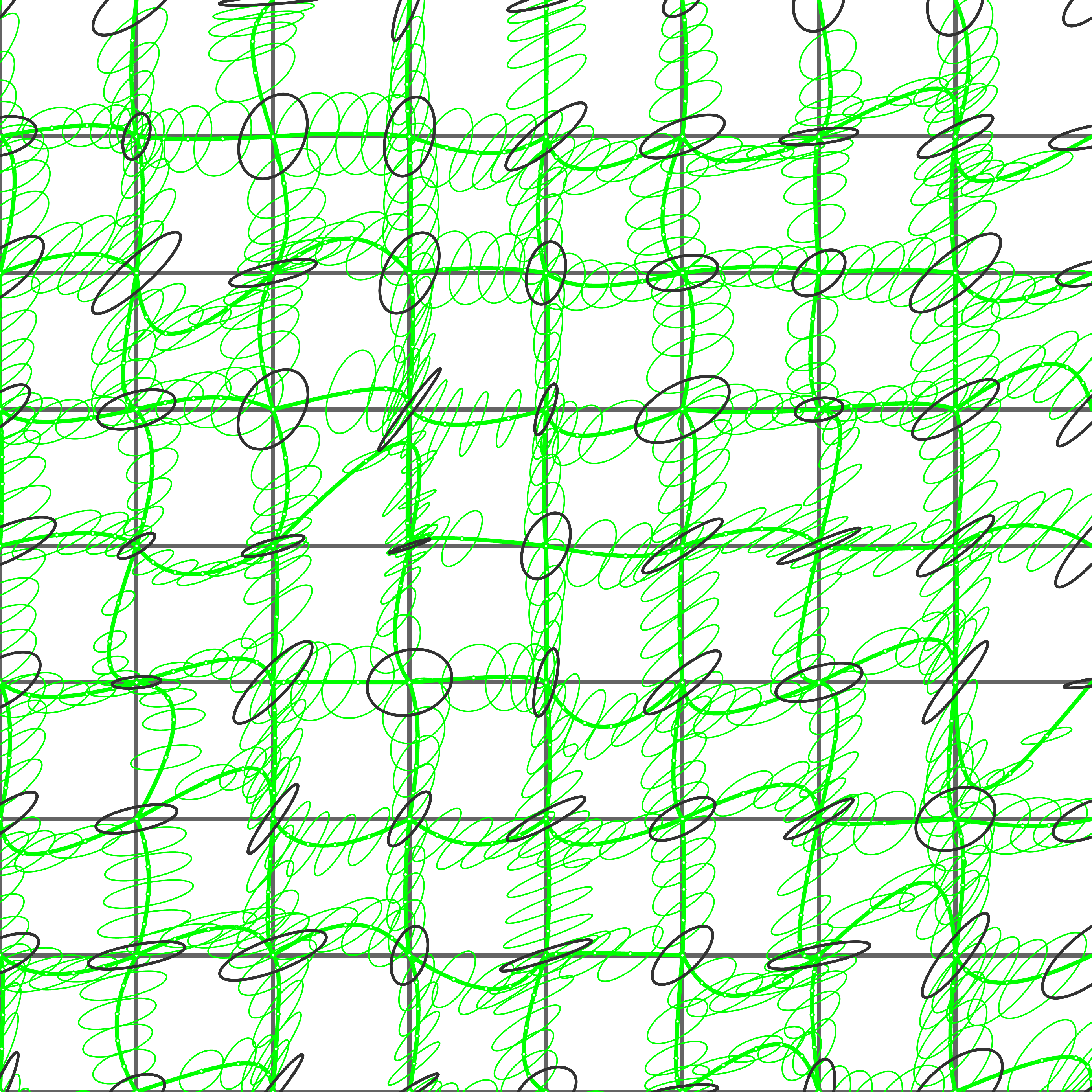} \cr
(a) & (b)
\end{tabular}
\caption{Diffusion tensor imaging on a 2D grid: (a) Ellipsoids shown at the $8\times 8$ grid locations with C\&O curves in green, and (b) some interpolated  ellipsoids are further shown along the C\&O curves.
 \label{fig:gridN}}
\end{figure}

\begin{Remark}
In Diffusion Tensor Imaging~\cite{MVNGeodesicShooting-2014} (DTI), 
the Fisher-Rao distance can be used to evaluate the distance between 3D normal distributions with means located at a 3D grid position.
We may consider $3\times 3\times 3-1=26$ neighbor graphs induced by the grid, and for each  normal $N$ of the grid, calculate the approximations of the Fisher-Rao distance of $N$ with its neighbors $N'$ as depicted in Figure~\ref{fig:gridN}. 
Then the distance between two tensors $N_1$ and $N_2$ of the 3D grid is calculated as the shortest path on the weighted graph using Dijkstra's algorithm~\cite{MVNGeodesicShooting-2014}.
\end{Remark}

Note that the Fisher-Rao projection of $N=(\mu,\Sigma)$ on a submanifold with fixed mean $\mu_0$ was recently reported in closed-form in~\cite{tang2018information}.

\begin{figure}
\centering
\includegraphics[width=\textwidth]{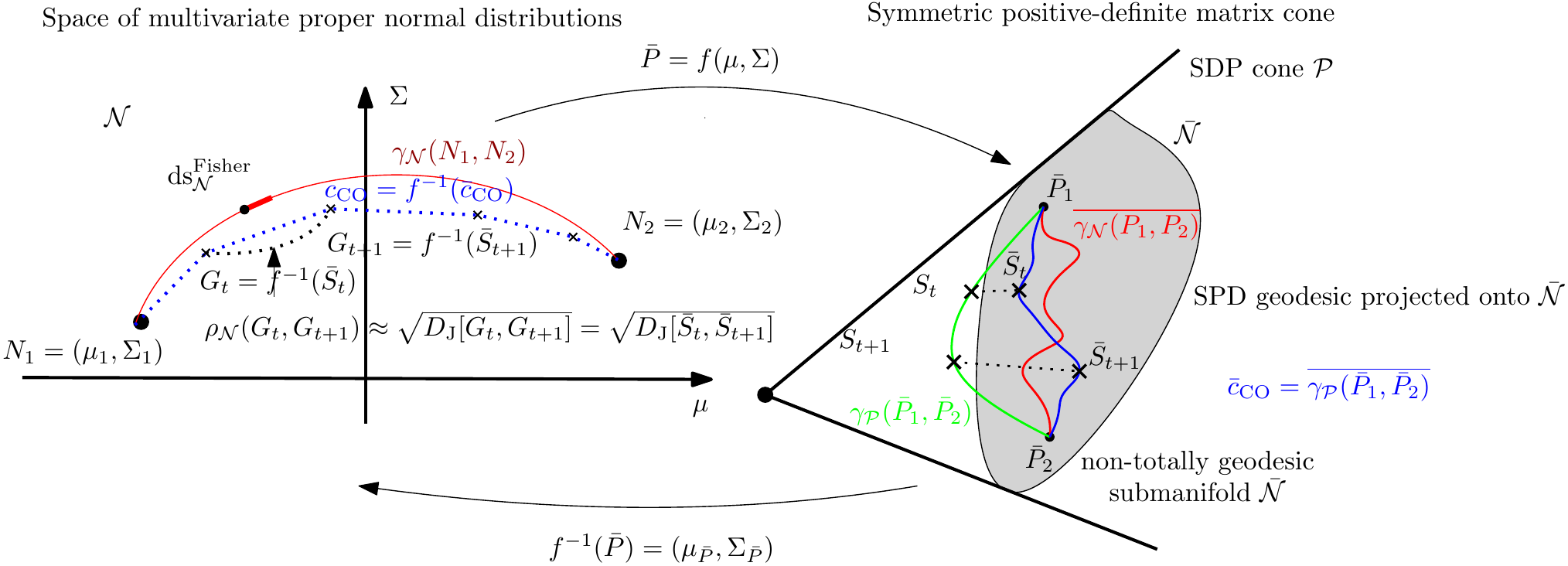}
\caption{Illustration of the approximation of the Fisher-Rao distance between two multivariate normals $N_1$ and $N_2$ (red geodesic length $\gamma_{\mathcal{N}}(N_1,N_2)$ by discretizing curve $\bar c_\CO\in\barN$ or equivalently curve $c_\CO\in\calN$.
 \label{fig:FRapprox}}
\end{figure}

Let $\bar c_\CO(t)=\bar S_t$ and $c_\CO(t)=f^{-1}(c_\CO(t))=:G_t$.
The following proposition shows that we have $D_J[\bar S_t,\bar S_{t+1}]=D_J[G_t,G_{t+1}]$.

\begin{Proposition}\label{prop:KL}
The Kullback-Leibler divergence   between 
$p_{\mu_1,\Sigma_1}$ and $p_{\mu_2,\Sigma_2}$ amounts to the KLD
between $q_{\bar P_1}=p_{0,f(\mu_1,\Sigma_1)}$ and $q_{\bar P_2}=p_{0,f(\mu_2,\Sigma_2)}$ where $\bar P_i=f(\mu_i,\Sigma_i)$:
$$
D_\KL[p_{\mu_1,\Sigma_1}:p_{\mu_2,\Sigma_2}]=D_\KL[q_{\bar P_1}:q_{\bar P_2}].
$$
\end{Proposition}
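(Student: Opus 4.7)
The plan is to reduce the proposition to a direct computation using the closed-form expression of the Kullback--Leibler divergence between two multivariate normal distributions applied on both sides of the claimed identity, and then exploit two structural facts about the block matrix $\bar P = f(\mu,\Sigma)$: its determinant equals $|\Sigma|$, and its inverse admits a clean Schur-complement expression.

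First, I would recall the standard formula
\[
D_\KL[p_{(\mu_1,\Sigma_1)}:p_{(\mu_2,\Sigma_2)}]=\frac{1}{2}\left(\tr(\Sigma_2^{-1}\Sigma_1)+(\mu_2-\mu_1)^\top\Sigma_2^{-1}(\mu_2-\mu_1)-d+\log\frac{|\Sigma_2|}{|\Sigma_1|}\right),
\]
and apply the same formula to the zero-mean Gaussians $q_{\bar P_1}$ and $q_{\bar P_2}$ on $\mathbb{R}^{d+1}$:
\[
D_\KL[q_{\bar P_1}:q_{\bar P_2}]=\frac{1}{2}\left(\tr(\bar P_2^{-1}\bar P_1)-(d+1)+\log\frac{|\bar P_2|}{|\bar P_1|}\right).
\]
So the claim reduces to matching the trace and log-determinant parts.

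For the determinants, I would apply the block-matrix identity $\det\begin{pmatrix} A & b\\ b^\top & c\end{pmatrix}=c\,\det(A-c^{-1}bb^\top)$ with $A=\Sigma+\mu\mu^\top$, $b=\mu$, $c=1$, which gives $A-bb^\top=\Sigma$ and hence $|\bar P|=|\Sigma|$. This immediately yields $\log\frac{|\bar P_2|}{|\bar P_1|}=\log\frac{|\Sigma_2|}{|\Sigma_1|}$.

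For the trace, the key computation is the block inverse
\[
\bar P^{-1}=\begin{pmatrix} \Sigma^{-1} & -\Sigma^{-1}\mu\\ -\mu^\top\Sigma^{-1} & 1+\mu^\top\Sigma^{-1}\mu\end{pmatrix},
\]
which follows again from the Schur-complement identity since $\Sigma+\mu\mu^\top-\mu\cdot 1\cdot\mu^\top=\Sigma$ (this is really the only nontrivial algebraic cancellation in the proof). Multiplying out $\bar P_2^{-1}\bar P_1$ block by block and taking the trace, the cross terms combine into $(\mu_1-\mu_2)^\top\Sigma_2^{-1}(\mu_1-\mu_2)$, the diagonal block contributes $\tr(\Sigma_2^{-1}\Sigma_1)$, and the bottom-right scalar contributes a $+1$. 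Thus
\[
\tr(\bar P_2^{-1}\bar P_1)=\tr(\Sigma_2^{-1}\Sigma_1)+(\mu_2-\mu_1)^\top\Sigma_2^{-1}(\mu_2-\mu_1)+1,
\]
and the extra $+1$ cancels the $(d+1)$ versus $d$ discrepancy between the two KL formulas. Matching the three pieces (trace, dimension, log-determinant) completes the identification.

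There is no real obstacle beyond careful bookkeeping; the only subtle point is noticing that the Schur complement of the bottom-right $1$ in $\bar P$ is exactly $\Sigma$, which simultaneously makes the determinant and the inverse collapse to the desired forms, so the $+1$ offset in $\tr(\bar P_2^{-1}\bar P_1)$ precisely absorbs the extra ambient dimension. Once this is observed, the rest is block-matrix algebra.
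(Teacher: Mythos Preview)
Your proposal is correct and follows essentially the same route as the paper's proof: both reduce the claim to matching the log-determinant and trace terms in the two KL formulas, use the block-determinant (Schur complement) identity to get $|\bar P|=|\Sigma|$, write out the explicit block inverse $\bar P^{-1}$, and then verify $\tr(\bar P_2^{-1}\bar P_1)=1+\tr(\Sigma_2^{-1}\Sigma_1)+\Delta_\mu^\top\Sigma_2^{-1}\Delta_\mu$ so that the extra $+1$ absorbs the dimension shift from $d$ to $d+1$. Your write-up is, if anything, slightly more explicit about the overall strategy than the paper's, but the computation is the same.
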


\begin{proof}
The KLD between two centered $(d+1)$-variate normals $q_{P_1}=p_{0,P_1}$ and $q_{P_2}=p_{0,P_2}$ is
$$
D_\KL[q_{P_1}:q_{P_2}]
=\frac{1}{2}\left(
\tr(P_2^{-1}P_1)-d-1+\log\frac{|P_2|}{|P_1|}
\right).$$
This divergence can be interpreted as the matrix version of the Itakura-Saito divergence~\cite{davis2006differential}.
The SPD cone equipped with $\frac{1}{2}$ of the trace metric can be interpreted as Fisher-Rao centered normal manifolds: 
$(\calN_\mu,g^\Fisher_{\calN_\mu})=(\calP,\frac{1}{2}g^\trace)$.

Since the determinant of a block matrix is
$$
\left|\mattwotwo{A}{B}{C}{D}\right|=\left|A-BD^{-1}C\right|,
$$
 we get with $D=1$: $|f(\mu,\Sigma)| = |\Sigma+\mu\mu^\top-\mu\mu^\top|=|\Sigma|$.

Let $\bar P_1=f(\mu_1,\Sigma_1)$ and 
$\bar P_2=f(\mu_2,\Sigma_2)$.
Checking $D_\KL[p_{\mu_1,\Sigma_1}:p_{\mu_2,\Sigma_2}]=D_\KL[q_{\bar P_1}:q_{\bar P_2}]$ where $q_{\barP}=p_{0,\barP}$ amounts to verify that
$$
\tr(\bar P_2^{-1}\bar P_1)=1+\tr(\Sigma_2^{-1}\Sigma_1+\Delta_\mu^\top\Sigma_2^{-1}\Delta_\mu).
$$
Indeed, using the inverse matrix  
$$f(\mu,\Sigma)^{-1}=
\mattwotwo{\Sigma^{-1}}{-\Sigma^{-1}\mu}{-\mu^\top\Sigma^{-1}}{1+\mu^\top \Sigma^{-1}\mu}
,$$
we have 
\begin{eqnarray*}
\tr(\bar P_2^{-1}\bar P_1)&=&\tr\left(
\mattwotwo{\Sigma^{-1}_2}{-\Sigma^{-1}_2\mu_2}{-\mu_2^\top\Sigma_2^{-1}}{1+\mu_2^\top \Sigma^{-1}_2\mu_2}\ \mattwotwo{\Sigma_1+\mu_1\mu_1^\top}{\mu_1}{\mu_1^\top}{1}
\right),\\
&=& 1+\tr(\Sigma_2^{-1}\Sigma_1+\Delta_\mu^\top\Sigma_2^{-1}\Delta_\mu).
\end{eqnarray*}
Thus even if the dimension of the sample spaces of $p_{\mu,\Sigma}$ and $q_{\barP=f(\mu,\Sigma)}$ differs by one, we get the same KLD by Calvo and Oller's isometric mapping $f$.
\end{proof}

This property holds for the KLD/Jeffreys divergence but not for all $f$-divergences~\cite{IG-2016} $I_f$ in general (e.g., it fails for the Hellinger divergence).

Figure~\ref{Fig:CurvesTissot} shows the various geodesics and curves used to approximate the Fisher-Rao distance  with the Fisher metric shown using Tissot indicatrices.
\begin{figure}
\centering
\begin{tabular}{ccc}
\fbox{\includegraphics[width=0.3\textwidth]{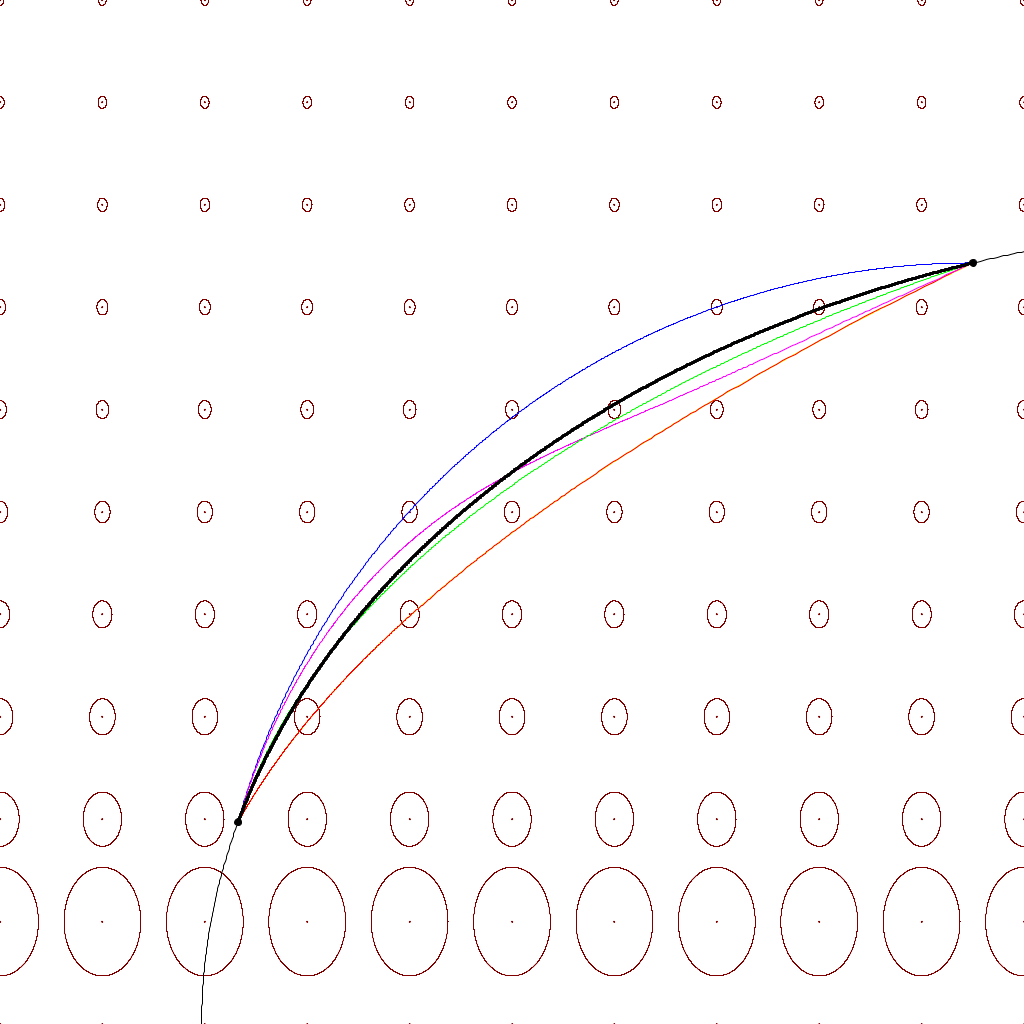}} &
\fbox{\includegraphics[width=0.3\textwidth]{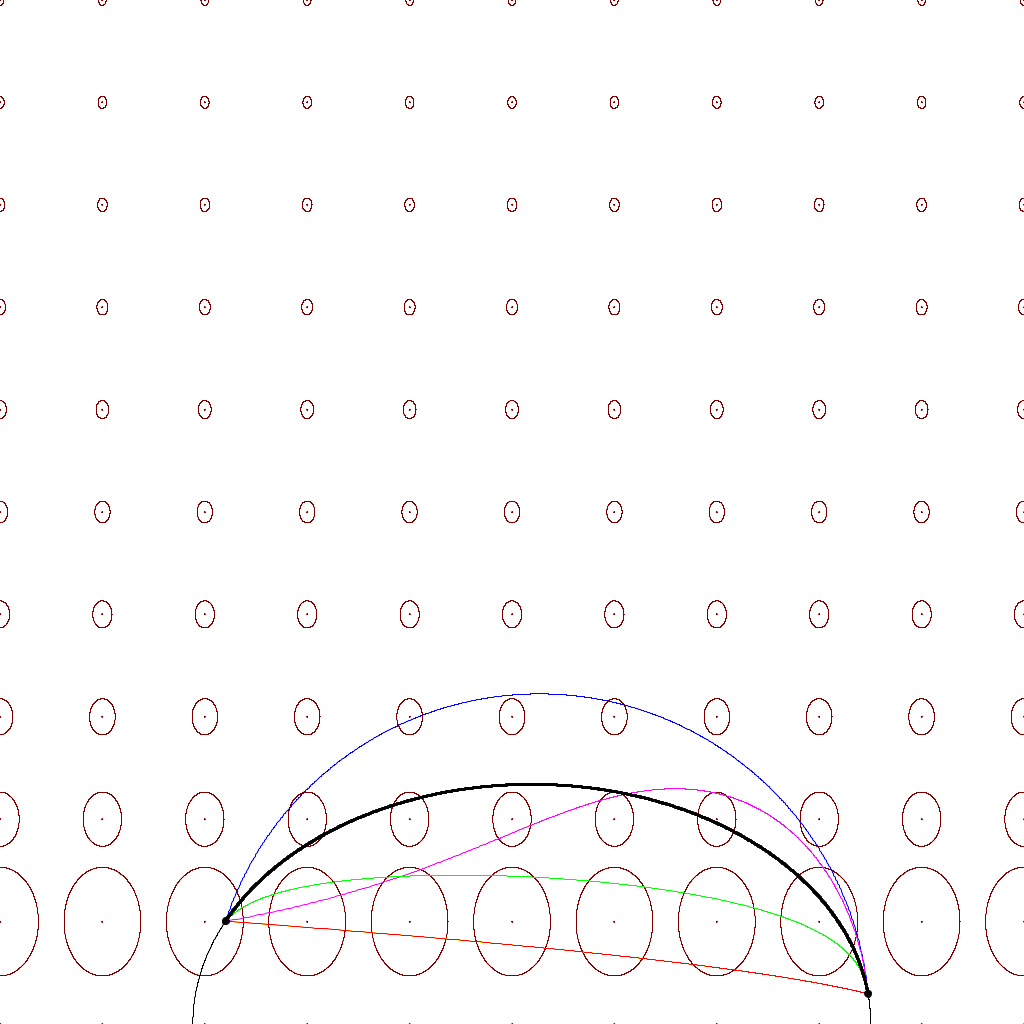}} &
\fbox{\includegraphics[width=0.3\textwidth]{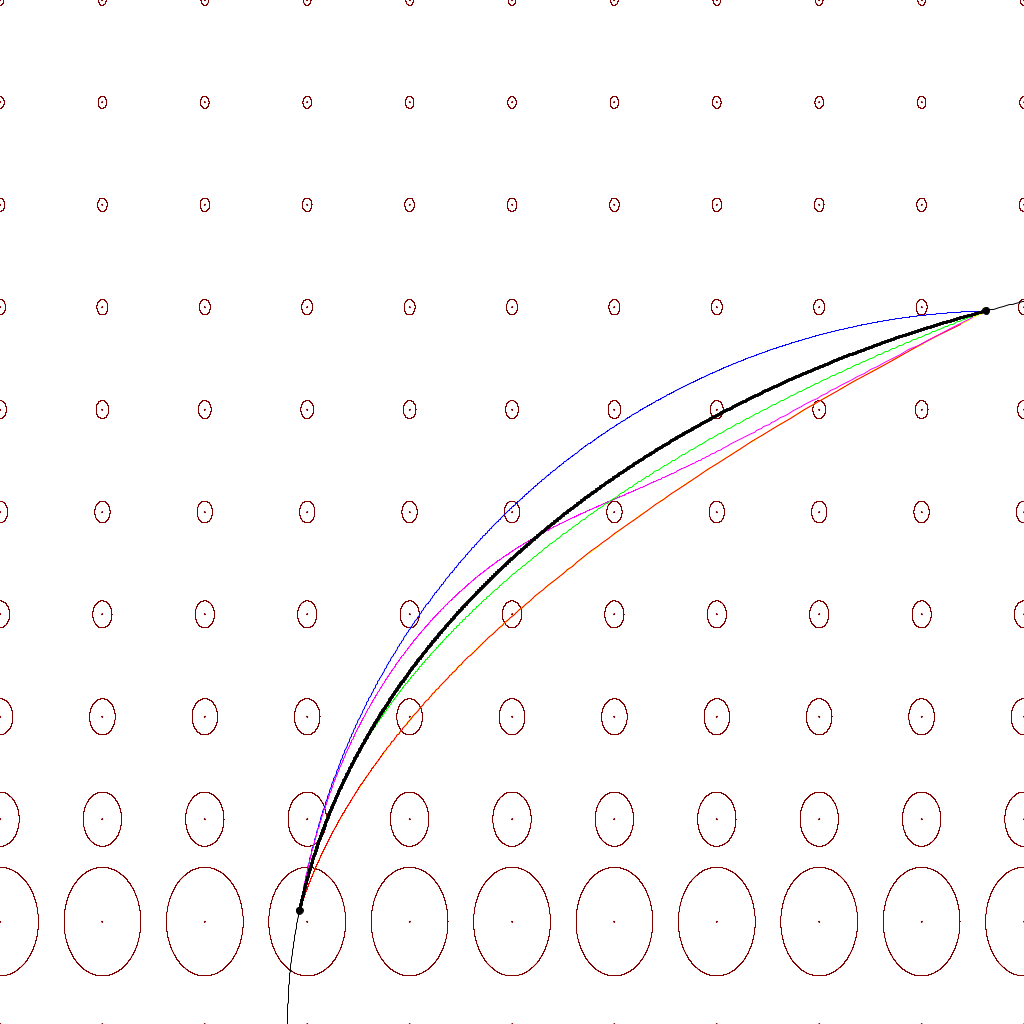}}
\end{tabular}
\caption{Geodesics and curves used to approximate the Fisher-Rao distance with the Fisher metric shown using Tissot's indicatrices:
exponential geodesic (red), mixture geodesic (blue), mid exponential-mixture curve (purple), projected CO curve (green) and target Fisher-Rao geodesic (black). (Visualization in the parameter space of normal distributions.)}\label{Fig:CurvesTissot}
\end{figure}

\begin{figure}
\centering
 
\includegraphics[width=0.4\textwidth]{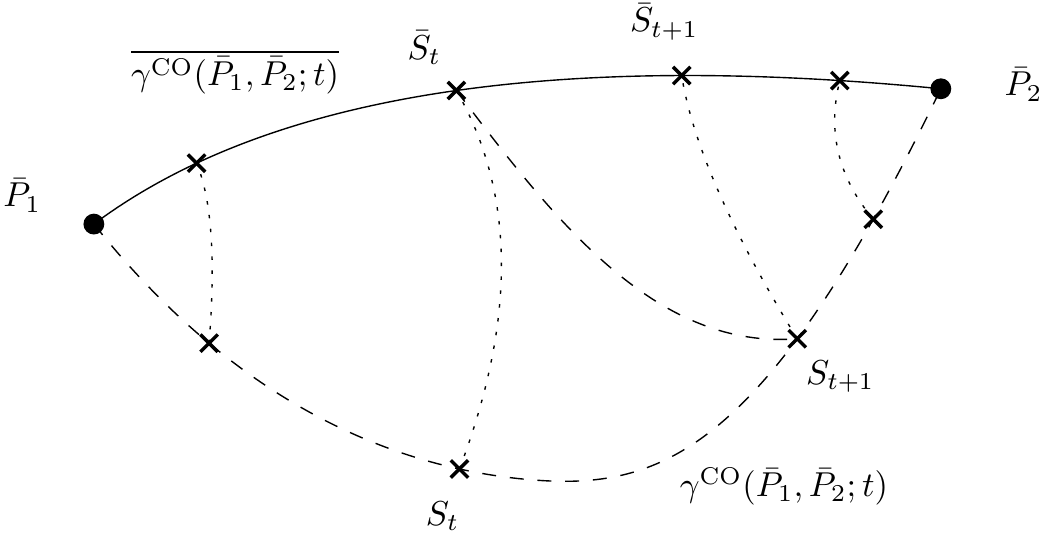}  

\caption{Bounding $\rho_\calN(\bar S_t,\bar S_{t+1})$ using the triangular inequality of $\rho_\calP$ in the SPD cone $\calP(d+1)$.
 \label{fig:approxCO}}
\end{figure}

Note that the introduction of parameter $\beta$ is related to the foliation of the SPD cone $\calP$ by $\{f_{\beta}(\calN) \st \beta>0\}$: $\calP(d+1)=\bbR_{>0}\times f_{\beta}(\calN)$. See Figure~\ref{fig:projection}.
Thus we may define how good the projected C\&O curve is to the Fisher-Rao geodesic by measuring the average distance between points on $\gamma_\calP(\barP_1,\barP_2;t)$ and their projections $\overline{\gamma_\calP(\barP_1,\barP_2;t)}^\perp$ onto $\barN$:
$$
\delta^\CO(P_1,P_2)=\int_0^1 \rho_\calP(\gamma_\calP(\barP_1,\barP_2;t),\overline{\gamma_\calP(\barP_1,\barP_2;t)}^\perp)\, \dt.
$$
In practice, we evaluate this integral at the sampling points $S_t$:
\begin{equation}\label{eq:avgprojdist}
\delta^\CO(P_1,P_2) \approx \delta^\CO_T(P_1,P_2)=\frac{1}{T}\sum_{i=1}^T  \rho_\calP(S_t,\bar{S}_t),
\end{equation}
where $S_t=\gamma_\calP(\barP_1,\barP_2;t)$ and $\bar{S}_t=\gamma_\calP(\barP_1,\barP_2;t)^\perp$.
We checked experimentally (see Section~\ref{sec:exp}) 
that for close by normals $N_1$ and $N_1$, we have $\delta^\CO(\bar{N}_1,\bar{N}_2)$ small, and that when $N_1$ gets further separated from $N_2$, the average projection error $\delta^\CO(\bar{N}_1,\bar{N}_2)$ increases.
Thus $\delta^\CO_T(P_1,P_2)$ is a good measure of the precision of our Fisher-Rao distance approximation.

\begin{Property}
We have $\rho_{\barN}(\bar{S}_t,\bar{S}_{t+1})\leq \rho_\calP(\bar S_t,S_t)+\rho_\calP(S_t,S_{t+1})+\rho_\calP(S_{t+1},\bar S_{t+1})$.
\end{Property}

\begin{proof}
The proof consists in applying twice the triangle inequality of metric distance $\rho_\calP$:
\begin{eqnarray*}
\rho_{\barN}(\bar{S}_t,\bar{S}_{t+1}) &\leq&  \rho_\calP(\bar{S}_t,S_{t+1})+\rho_\calP(S_{t+1},\bar S_{t+1}),\\
&\leq &  \rho_\calP(\bar{S}_t,S_{t}) + \rho_\calP(S_t,S_{t+1}) + \rho_\calP(S_{t+1},\bar S_{t+1}).
\end{eqnarray*}
See Figure~\ref{fig:approxCO}.
\end{proof}

\begin{Property}
We have 
$\rho_\calN(N_1,N_2)\leq \rho_\calN^\CO(N_1,N_2) \leq  \rho_\calN(N_1,N_2) + 2 \delta^\CO_T(\bar P_1,\bar P_2)$.
\end{Property}

\begin{proof}
At infinitesimal scale, we have
$$
\ds_{\calN}(\bar S_t) \leq \ds_{\calP}(S_t) + 2\rho_\calP(S_t,\bar S_t). 
$$

Taking the integral, we get
$$
\rho_\calN(N_1,N_2)\leq \rho_\calP(\bar P_1,\bar P_2) + 2 \delta^\CO_T(\bar P_1,\bar P_2)
$$
Since $\rho_\calP(P_1,P_2)\leq \rho_\calN(N_1,N_2)$, we have
$$
\rho_\calN(N_1,N_2)\leq \rho_\calN^\CO(N_1,N_2) \leq  \rho_\calN(N_1,N_2) + 2 \delta^\CO_T(\bar P_1,\bar P_2).
$$
\end{proof}

\begin{Example}\label{ex1entropy}
Let us consider Example~1 of~\cite{FRMVNReview-2020} (p. 11):
$$
N_1=\left(\vectortwo{-1}{0},\Sigma\right),\quad
N_2=\left(\vectortwo{6}{3},\Sigma\right), \Sigma=\mattwotwo{1.1}{0.9}{0.9}{1.1}.
$$
The Fisher-Rao distance is evaluated numerically in ~\cite{FRMVNReview-2020} as $5.00648$.
We have the lower bound $\rho_\calN^\CO(N_1,N_2)=4.20447$, and the Mahalanobis distance $8.06226$ upper bounds the Fisher-Rao distance (not totally geodesic submanifold $\calN_\Sigma$).
Our projected C\&O curve discretized with $T=1000$ yields an approximation $\tilde\rho_\calN^\CO(N_1,N_2)=5.31667$.
The average projection distance $\rho_\calP(S_t,\bar S_{t})$ is $\delta^\CO_T(N_1,N_2)=0.61791$, and the maximum projected distance is $1.00685$.
We check that
$$
5.00648\approx \rho_\calN(N_1,N_2)\leq \tilde\rho_\calN^\CO(N_1,N_2)\approx 5.31667 \leq \rho_\calN(N_1,N_2) + 2 \delta^\CO_T(\bar P_1,\bar P_2)\approx 5.44028.
$$
The Killing distance obtained for $\kappa_\Killing=2$ is $\rho_\Killing(N_1,N_2)\approx 6.82028$.
Notice that geodesic shooting is time consuming compared to our approximation technique.
\end{Example}

\subsection{Some experiments}\label{sec:exp}

The KLD $D_\KL$ and Jeffreys divergence $D_J$, the Fisher-Rao distance $\rho_\calN$ and the Calvo \& Oller distance $\rho_\CO$ are all invariant under the congruence action of the affine group $\Aff(d)=\bbR^d\rtimes\GL(d)$ with group operation 
$$
(a_1,A_1)(a_2,A_2)=(a_1+A_1a_2,A_1A_2).
$$
Let $(A,a)\in\Aff(d)$, and define the action on the normal space $\calN$ as follows:
$$
(A,a).N(\mu,\Sigma)=N(A^\top\mu+a,A\Sigma A^\top).
$$
Then we have $\rho_{\calN}((A,a).N_1,(A,a).N_2)=\rho_{\calN}(N_1,N_2)$, $\rho_{\CO}((A,a).N_1,(A,a).N_2)=\rho_{\CO}(N_1,N_2)$ 
and $D_\KL[(A,a).N_1:(A,a).N_2]=D_\KL[N_1:N_2]$.
This invariance extends to our approximations $\tilde\rho^c_\calN$ (see Eq.~\ref{eq:approx}).
 
Since we have 
$$
\tilde\rho_{\calN}^c(N_1,N_2)\approx \rho_{\calN}(N_1,N_2)\geq \rho_{\CO}(N_1,N_2),
$$
the ratio $\kappa_c=\frac{\tilde\rho_{\calN}^c}{\rho_{\CO}}\geq \kappa=\frac{\tilde\rho_{\calN}^c}{\rho_{\calN}}$ gives an upper bound on the approximation factor of $\tilde\rho_{\calN}^c$ compared to the true Fisher-Rao distance $\rho_{\calN}$:
$$
\kappa_c \rho_{\calN}(N_1,N_2)\geq \kappa \rho_{\calN}(N_1,N_2)\geq \tilde\rho_{\calN}^c(N_1,N_2)\approx \rho_{\calN}(N_1,N_2) \geq \rho_{\CO}(N_1,N_2).
$$

Let us now report some numerical experiments of our approximated Fisher-Rao distances $\tilde\rho_{\calN}^x$ with $x\in\{l,m,e,\mathrm{em},\CO\}$. When normal distributions are in 1D we can exactly plot the Fisher-Rao geodesics are locate the other geodesics/curves with respect to the Fisher-Rao geodesics (Figure~\ref{fig:emgeodesic1d}).
Although that dissimilarity $\tilde\rho_{\calN}$ is positive-definite, it does not satisfy the triangular inequality of metric distances (e.g., Riemannian distances $\rho_{\calN}$ and $\rho_\CO$).

\begin{figure}
\centering
\begin{tabular}{cc}
\includegraphics[width=0.3\textwidth]{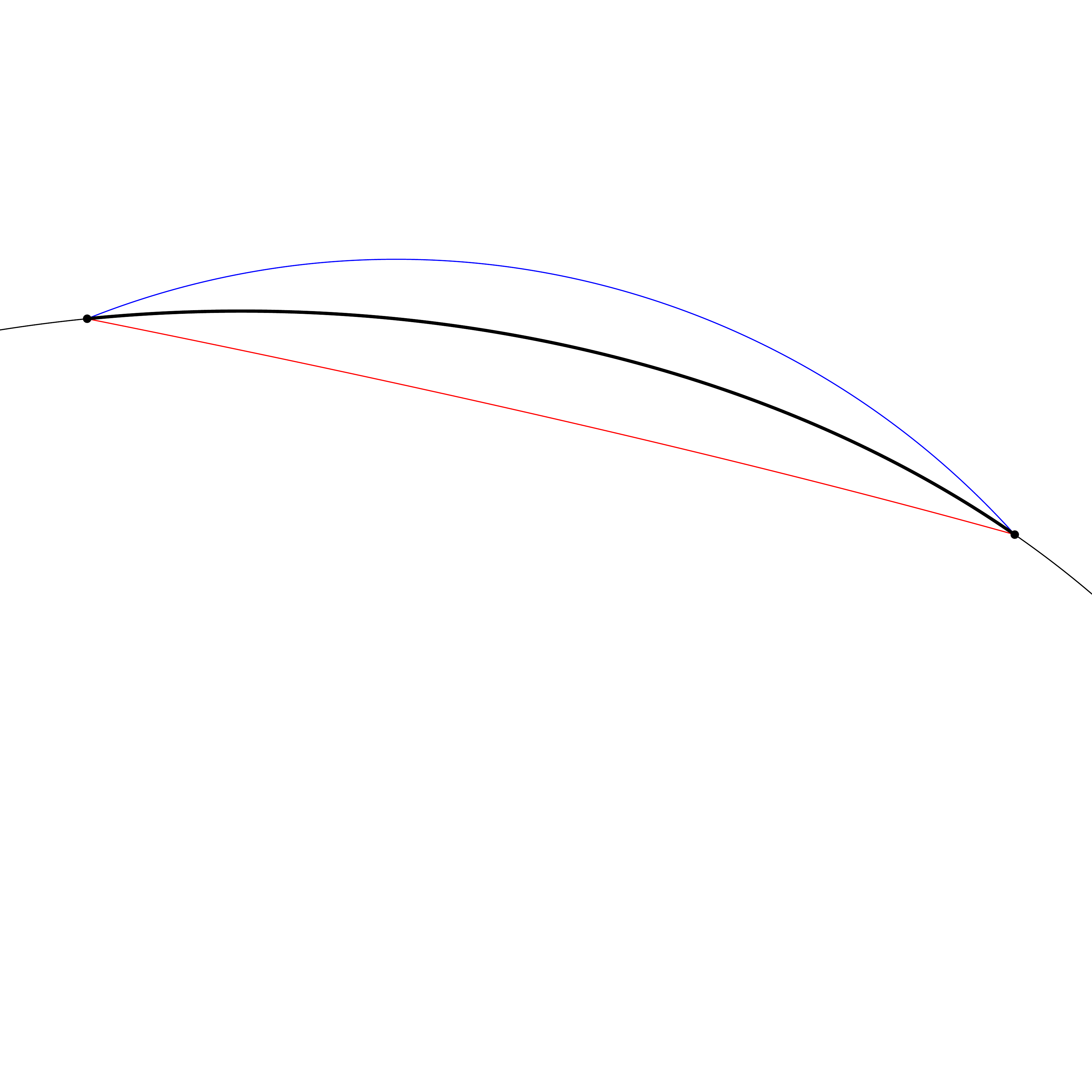} &
\includegraphics[width=0.3\textwidth]{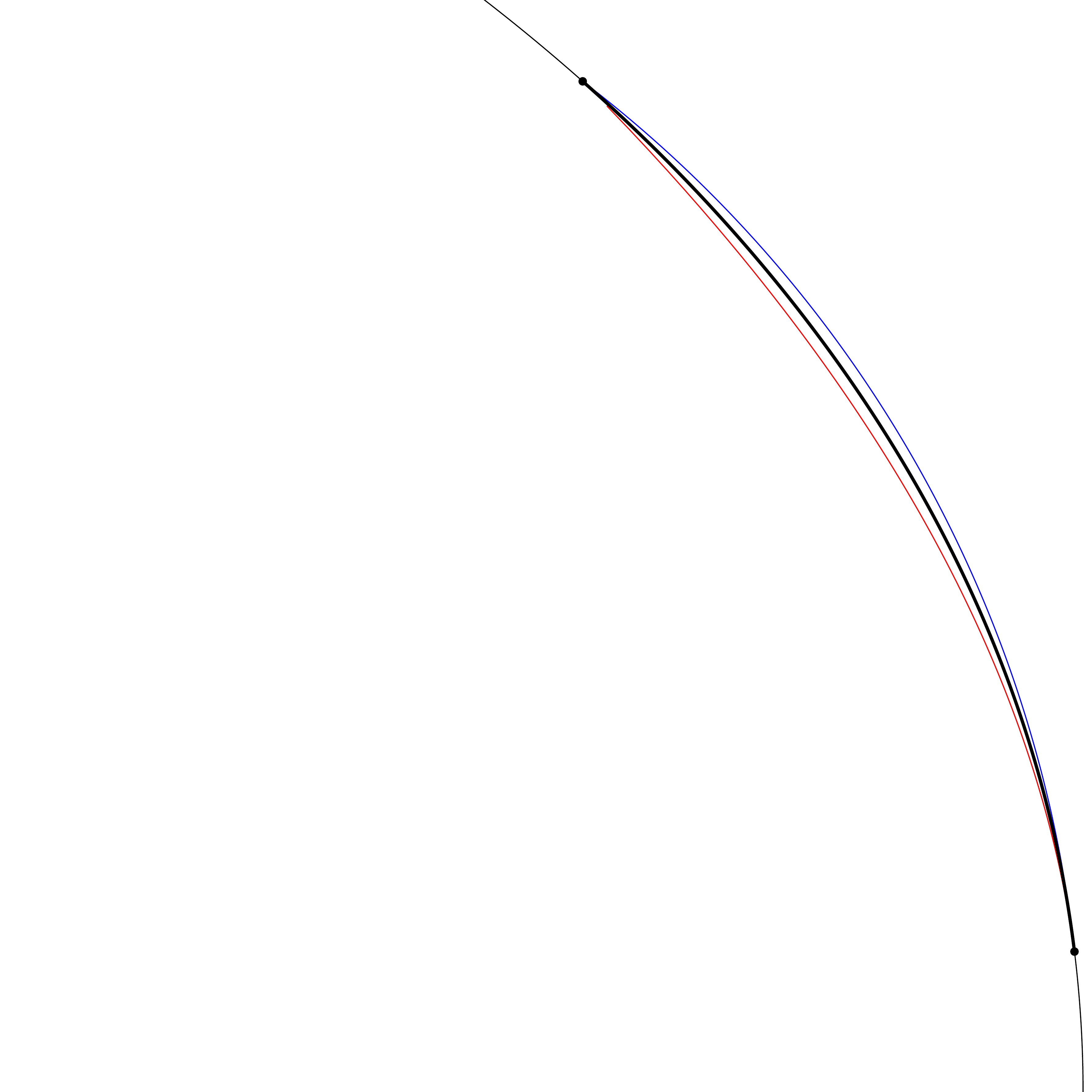}
\end{tabular}
\caption{The full Fisher-Rao geodesic is shown in grey with the geodesic arc linking two univariate normal distributions shown in black. The exponential geodesics are shown in red and are below the Fisher-Rao geodesic. The mixture geodesics are shown in blue and are above the Fisher-Rao geodesic.}\label{fig:emgeodesic1d}
\end{figure}

First, we draw multivariate normals by sampling means $\mu\sim\mathrm{Unif}(0,1)$ and sample covariance matrices $\Sigma$ as follows: We draw a lower triangular matrix $L$ with entries $L_{ij}$ iid sampled from $\mathrm{Unif}(0,1)$, and take $\Sigma=LL^\top$.
We use $T=1000$ samples on curves and repeat the experiment $1000$ times
to gather average statistics on $\kappa_c$'s of curves. Results are summarized in Table~\ref{tab1}.

\begin{table}
\caption{First set of experiments demonstrates the advantage of the $c_\CO(t)$ curve.\label{tab1}}
\centering
\begin{tabular}{l|lllll}
$d$ & $\kappa_\CO$ & $\kappa_l$ & $\kappa_e$ & $\kappa_m$ & $\kappa_{em}$ \\ \hline
1	& {\bf 1.0025}	& 	1.0414		& 1.1521		& 1.0236 & 1.0154\cr
2		& {\bf 1.0167}	& 	1.0841	& 	1.1923		& 1.0631 & 1.0416\cr
3		& {\bf 1.0182}		& 1.8997	& 	2.6072	& 	1.9965 & 1.07988\cr
4		& {\bf 1.0207} 	& 	2.0793		& 1.8080	& 	2.1687 & 1.1873\cr
5		& {\bf 1.0324}	& 	4.1207	& 	12.3804	& 	5.6170 & 4.2349\cr
\end{tabular}

\end{table}

For that scenario that the C\&O curve (either $\bar c_\CO\in\barN$ or $c_\CO\in\calN$) performs best compared to the linear interpolation curves with respect to source parameter ($l$), mixture geodesic ($m$),   exponential geodesic ($e$), or exponential-mixture mid curve ($\mathrm{em}$).
Let us point out that we sample $\gamma_\calP(\barP_1,\barP_2;\frac{i}{T})$ for $i\in\{0,\ldots, T\}$.

Strapasson, Porto and Costa~\cite{strapasson2015bounds}  (SPC)reported the following upper bound on the Fisher-Rao distance between multivariate normals:
\begin{equation}\label{prop:USPC}
\rho_\CO(N_1,N_2) \leq \rho_\calN(N_1,N_2) \leq U_\SPC(N_1,N_2)=\sqrt{2\sum_{i=1}^d \log^2
\left( \frac{\sqrt{(1+D_{ii})^2+\mu_i^2} +\sqrt{(1-D_{ii})^2+\mu_i^2} }{\sqrt{(1+D_{ii})^2+\mu_i^2} -\sqrt{(1-D_{ii})^2+\mu_i^2}}\right)},
\end{equation}
where $\Sigma=\Sigma_1^{-\frac{1}{2}}\Sigma_2\Sigma_1^{-\frac{1}{2}}$, $\Sigma=\Omega D\Omega^\top$ is the eigen decomposition, and $\mu=\Omega^\top \Sigma_1^{-\frac{1}{2}}(\mu_2-\mu_1)$. 
This upper bound performs better when the normals are well-separated and worse then the $\sqrt{D_J}$-upper bound when the normals are close to each others.

Let us compare $\rho_\CO(N_1,N_2)$ with $\rho_\calN(N_1,N_2) \approx \tilde\rho^{c_\CO}(N_1,N_2)$ and the upper bound $U(N_1,N_2)$ by averaging over $1000$ trials with $N_1$ and $N_2$ chosen randomly as before and $T=1000$. 
We have $\rho_\CO(N_1,N_2)\leq \rho_\calN(N_1,N_2) \approx \tilde\rho^{c_\CO}(N_1,N_2) \leq U(N_1,N_2)$.
Table~\ref{tab:comparebounds} shows that our Fisher-Rao approximation is close to the lower bound (and hence to the underlying true Fisher-Rao distance) and that the upper bound is about twice the lower bound for that particular scenario.
 
 \begin{table}
\caption{Comparing our Fisher-Rao approximation with the Calvo \& Oller lower bound and the Strapasson et al. upper bound.\label{tab:comparebounds}}
\centering
\begin{tabular}{l|lll}
$d$ & $\rho_\CO(N_1,N_2)$ & $\tilde\rho^{c_\CO}(N_1,N_2)$ & $U(N_1,N_2)$ \\ \hline
1 & 1.7563 & 1.8020 & 3.1654 \cr
2 & 3.2213 & 3.3194 & 6.012 \cr
3 & 4.6022 & 4.7642 & 8.7204 \cr
4 & 5.9517 & 6.1927 & 11.3990 \cr
5 & 7.156 & 7.3866 & 13.8774 \cr
\end{tabular}
\end{table}


\begin{figure}
\centering

\includegraphics[width=0.5\textwidth]{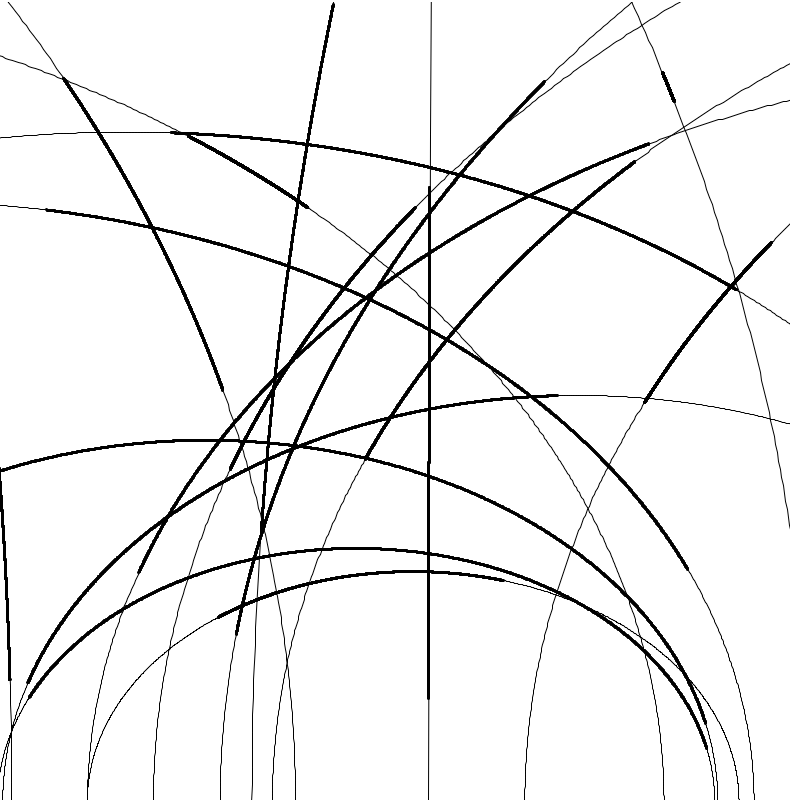}  
\caption{Visualizing some Fisher-Rao geodesics of univariate normal distributions on the Poincar\'e upper plane (semi-circles with origin on the $x$-axis and stretched by $\sqrt{2}$ on the $x$-axis). Full geodesics plotted with thin style and geodesic arcs plotted with thick style.  
 \label{fig:vizgeo1d}}
\end{figure}

The Fisher-Rao geodesics $\gamma_\calN^{\mathrm{FR}}(N_1,N_2)$ on the Fisher-Rao univariate normal manifolds are either vertical line segments when $\mu_1=\mu_2$, or semi-circle with origin on the $x$-axis and $x$-axis stretched by $\sqrt{2}$~\cite{verdoolaege2015new}:

$$
\gamma_\calN^{\mathrm{FR}}(\mu_1,\sigma_1;\mu_2,\sigma_2)=\left\{
\begin{array}{ll}
(\mu,(1-t)\sigma_1+t\sigma_2), & \mu_1=\mu_2=\mu\cr
(\sqrt{2}(c+r\cos t,r\sin t), t\in [\min\{\theta_1,\theta_2\},\max\{\theta_1,\theta_2\}], & \mu_1\not=\mu_2,
\end{array}
\right.,
$$
where
$$
c=\frac{\frac{1}{2}(\mu_2^2-\mu_1^2)+\sigma_2^2-\sigma_1^2}{\sqrt{2}(\mu_1-\mu_2)},\quad
r=\sqrt{\left(\frac{\mu_i}{\sqrt{2}}-c\right)^2+\sigma_i^2}, i\in\{1,2\},
$$
and
$$
\theta_i=\arctan\left(\frac{\sigma_i}{\frac{\mu_i}{\sqrt{2}}-c}\right), i\in\{1,2\},
$$
provided that $\theta_i\geq 0$ for $i\in\{1,2\}$ (otherwise, we let $\theta_i\leftarrow\theta_i+\pi$).
Figure~\ref{fig:vizgeo1d} displays some geodesics on the Fisher-Rao univariate normal manifold.
Figure~\ref{fig:viz1d} displays the considered geodesics and curves in  the stretched Poincar\'e upper plane of univariate normal distributions ($x$-axis is stretched by $\sqrt{2}$) (in 1D for illustration purpose).

\begin{figure}
\centering
\begin{tabular}{ccc}
\includegraphics[width=0.3\textwidth]{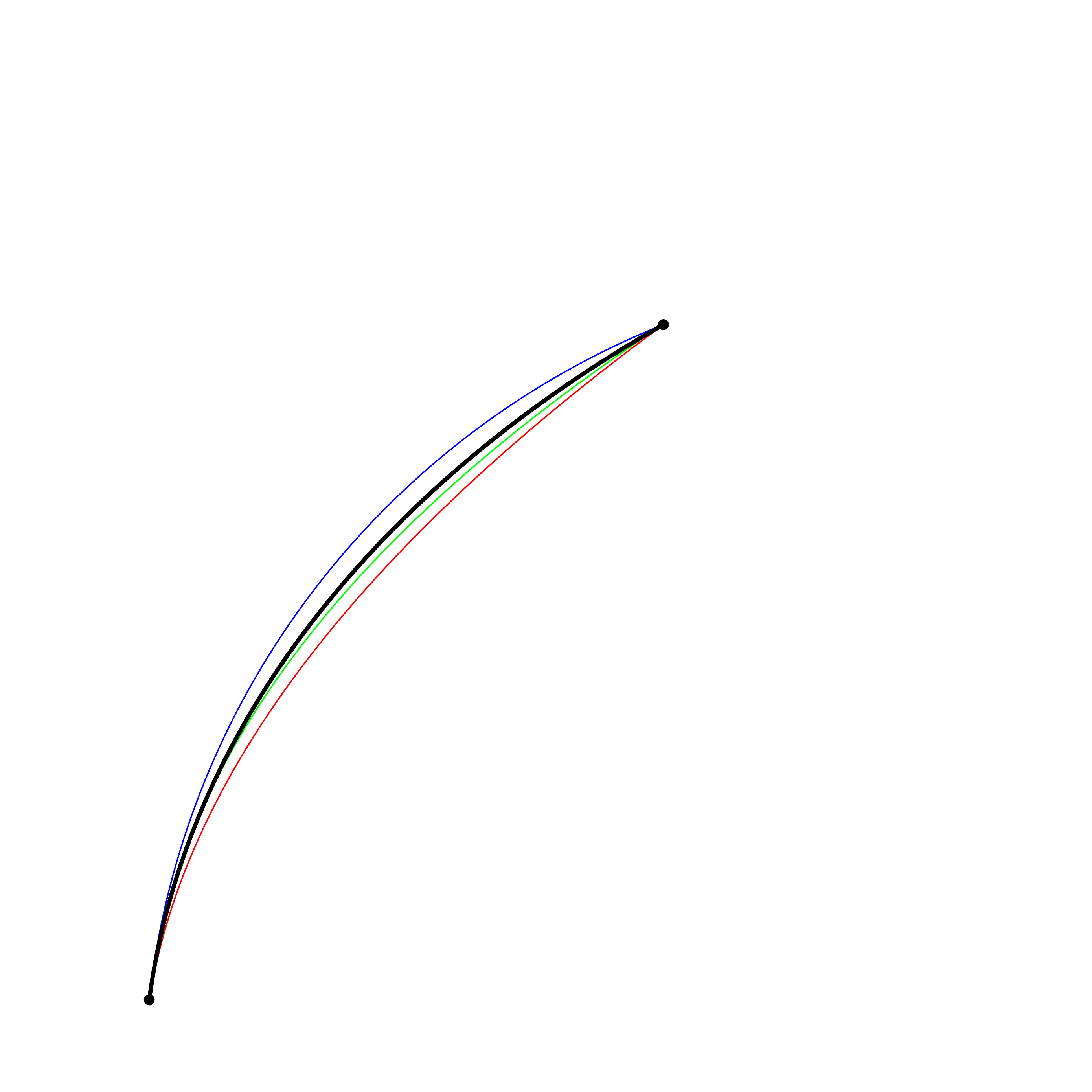} &
\includegraphics[width=0.3\textwidth]{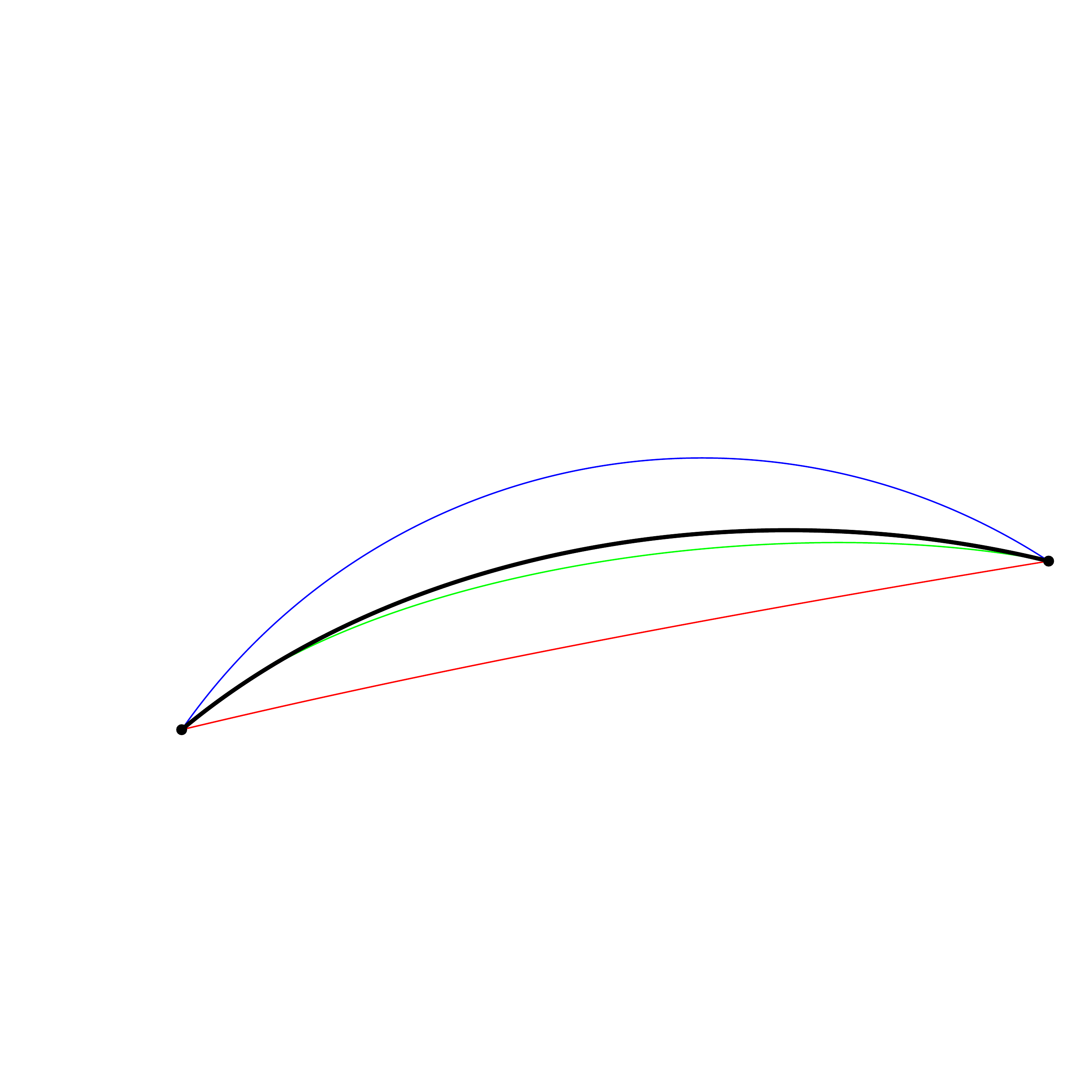}&
\includegraphics[width=0.3\textwidth]{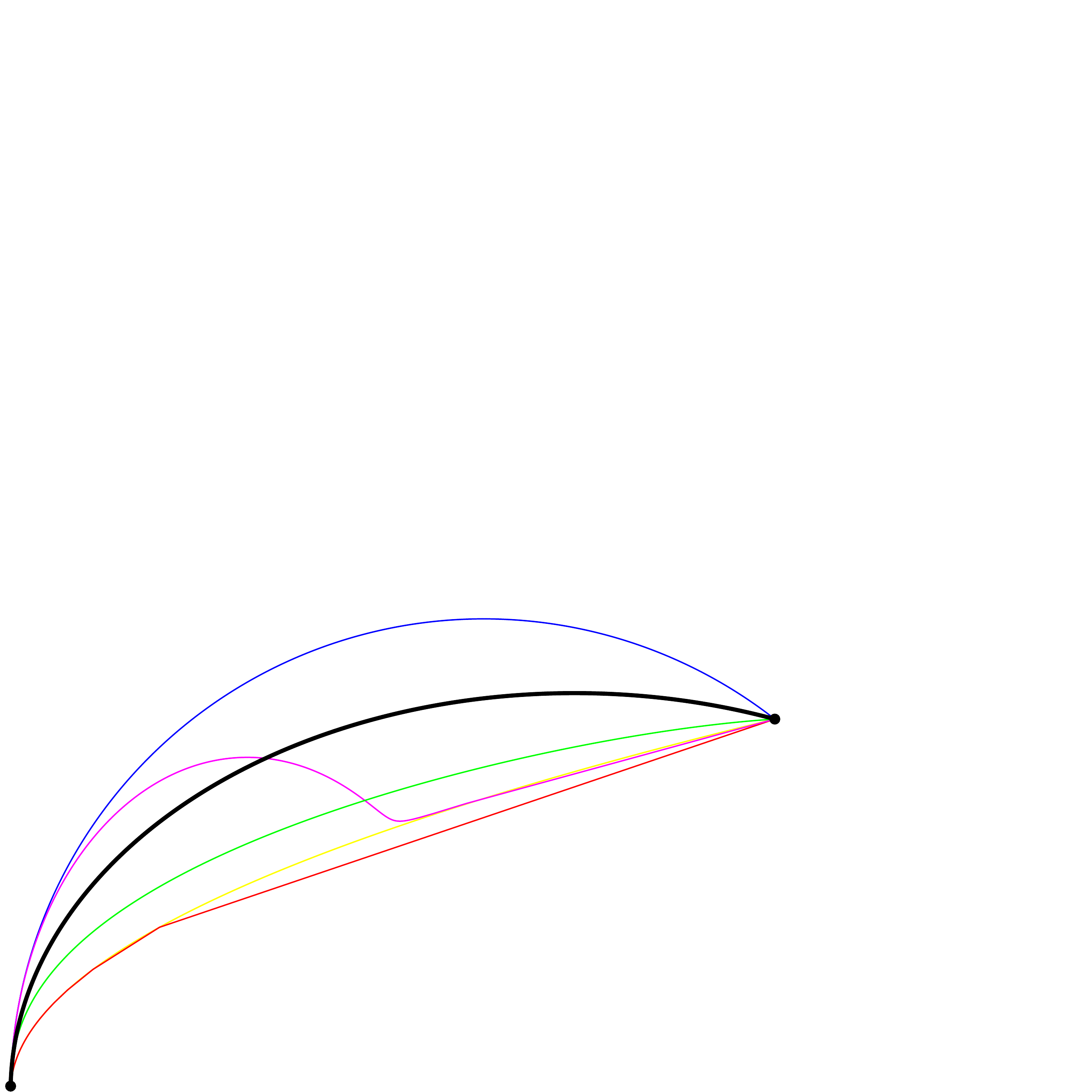} 
\end{tabular}
\caption{Visualizing the geodesics and curves in the Poincar\'e upper plane with $x$-axis stretched by $\sqrt{2}$: 
(a) and (b): Fisher-Rao geodesic (black), our projected Calvo \& Oller curve (green), the mixture geodesic (blue), and the exponential geodesic (red). (c): interpolation in ordinary parameterization $\lambda$ (yellow), mid mixture-exponential curve (purple). The range is $[-1,1]\times (0,2]$.
 \label{fig:viz1d}}
\end{figure}

Second, since the distances are invariant under the action of the affine group, we can set wlog. $N_1=(0,I)$ (standard normal distribution) and let $N_2=\diag(u_1,\ldots, u_d)$ where $u_i\sim\mathrm{Unif}(0,a)$. As normals $N_1$ and $N_2$ separate to each others, we notice experimentally that the performance of the $c_\CO$ curve degrades in the second experiment with $a=5$  (see Table~\ref{tab2}):
Indeed, the mixture geodesic works experimentally better than the C\&O curve when $d\geq 11$.

\begin{table}
\caption{Second set of experiments shows limitations of the $c_\CO(t)$ curve.}\label{tab2}
\centering
\begin{tabular}{l|llll}
$d$ & $\kappa_\CO$ & $\kappa_l$ & $\kappa_e$ & $\kappa_m$ \\ \hline
1		&   {\bf 1.0569}	& 	1.1405	& 1.139		& 1.0734\cr
5	 	&  {\bf 1.1599}		& 1.4696		& 1.5201		& 1.1819\cr
10	& 	{\bf 1.2180}	& 	1.6963	& 	1.7887		& 1.2184\cr
11		& 1.2260	& 	1.7333	& 	1.8285	& 	{\bf 1.2235}\cr
12		& 1.2301	& 1.7568		& 1.8539		& {\bf 1.2282}\cr
15		& 1.2484	& 	1.8403		& 1.9557		& {\bf 1.2367} \cr
20		& 1.2707		& 1.9519	& 	2.0851	& 	{\bf 1.2466}
\end{tabular}

\end{table}

Figure~\ref{fig:viz2D} and Figure~\ref{fig:viz2Dbis} display the various curves considered for approximating the Fisher-Rao distance between bivariate normal distributions: For a curve $c(t)$, we visualize its corresponding bivariate normal distributions $(\mu_{c(t)},\Sigma_{c(t)})$ at several increment steps $t\in [0,1]$ by plotting the ellipsoid 
$$
E_{c(t)}=\mu_{c(t)}+\left\{L^\top x, x=(\cos\theta,\sin\theta), \theta\in [0,2\pi)\right\},
$$
where $\Sigma_{c(t)}=L_{c(t)}L^\top_{c(t)}$.

\begin{figure}
\centering
\begin{tabular}{ccc}
\includegraphics[width=0.3\textwidth]{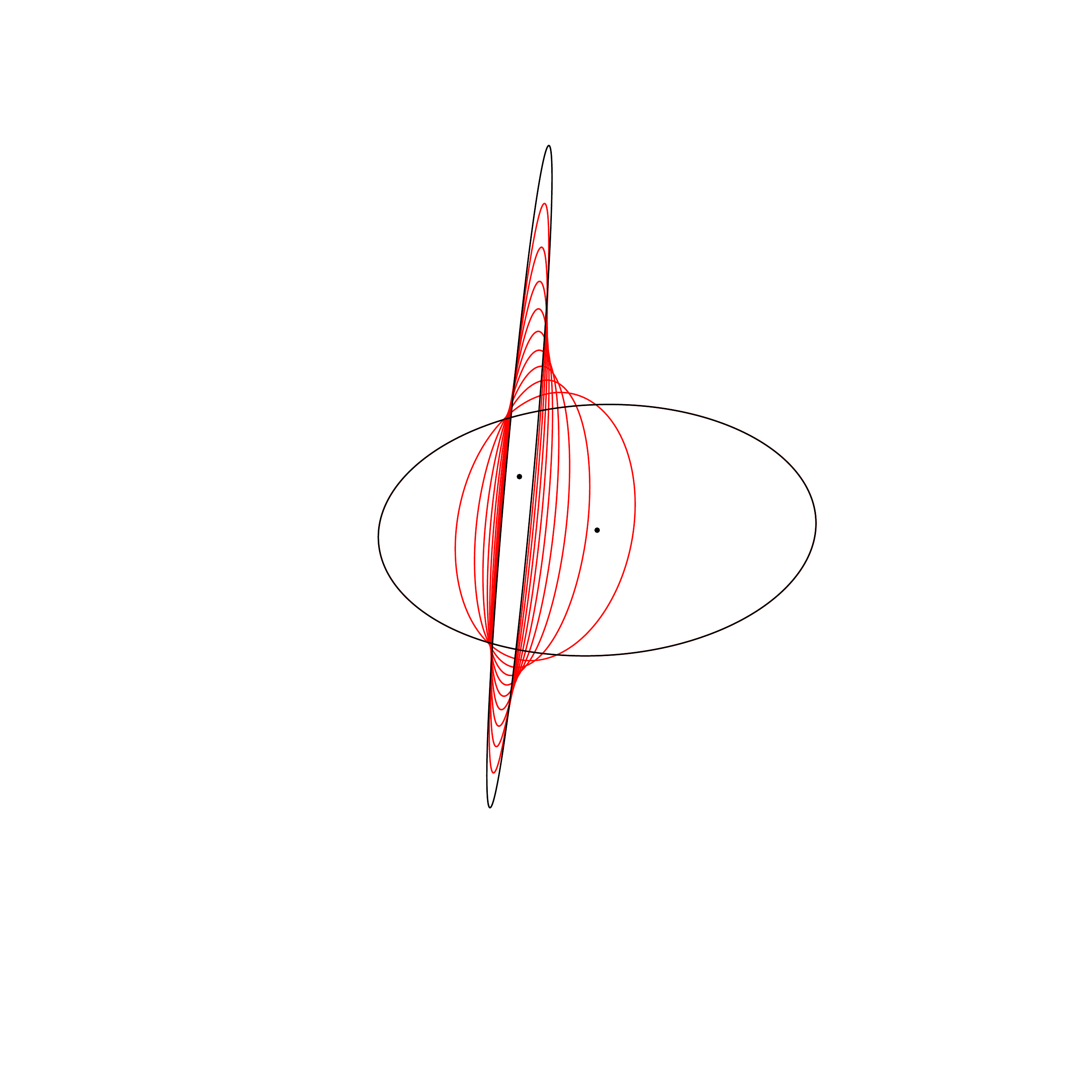} &
\includegraphics[width=0.3\textwidth]{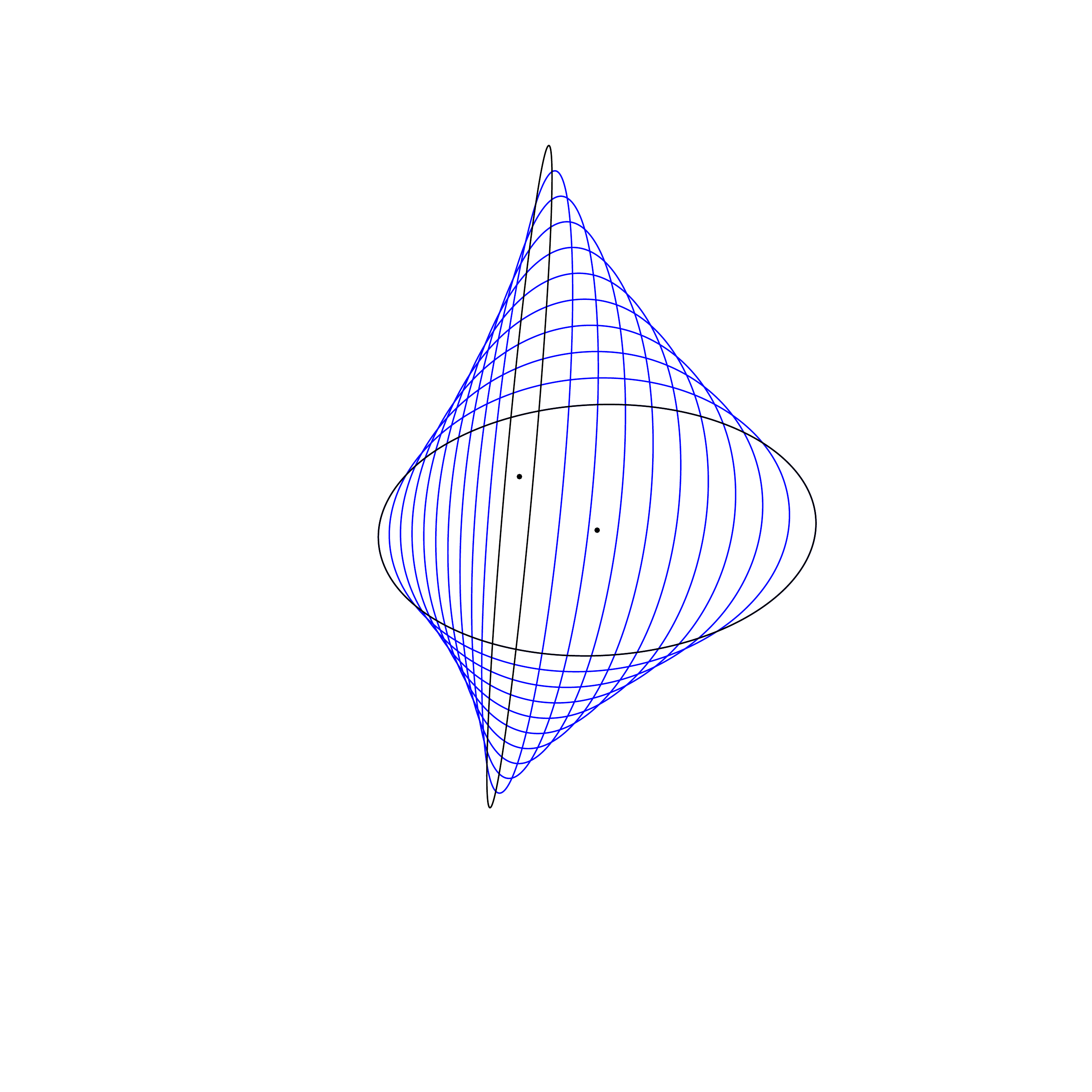}&
\includegraphics[width=0.3\textwidth]{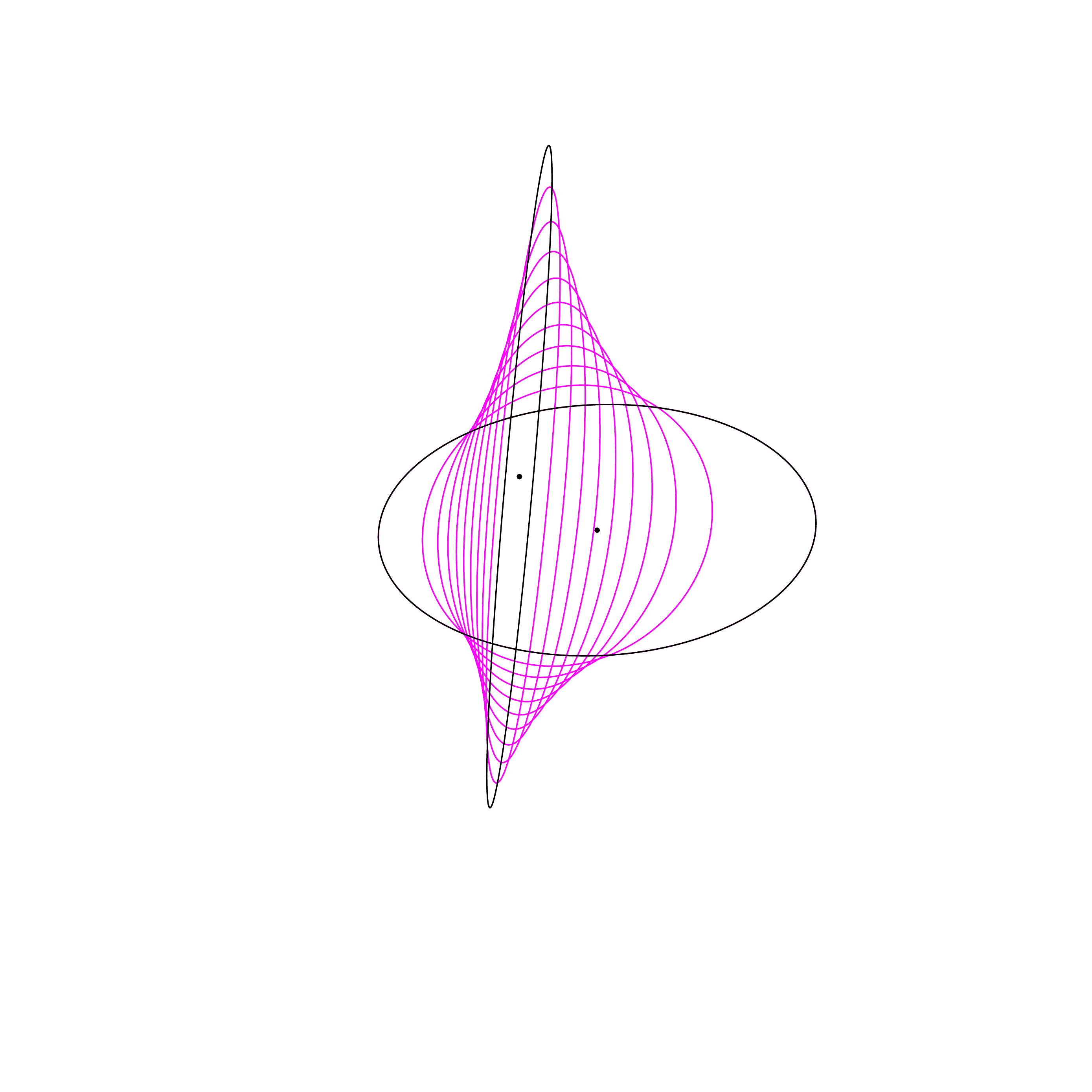} \\
(a) & (b) & (c)\cr
\includegraphics[width=0.3\textwidth]{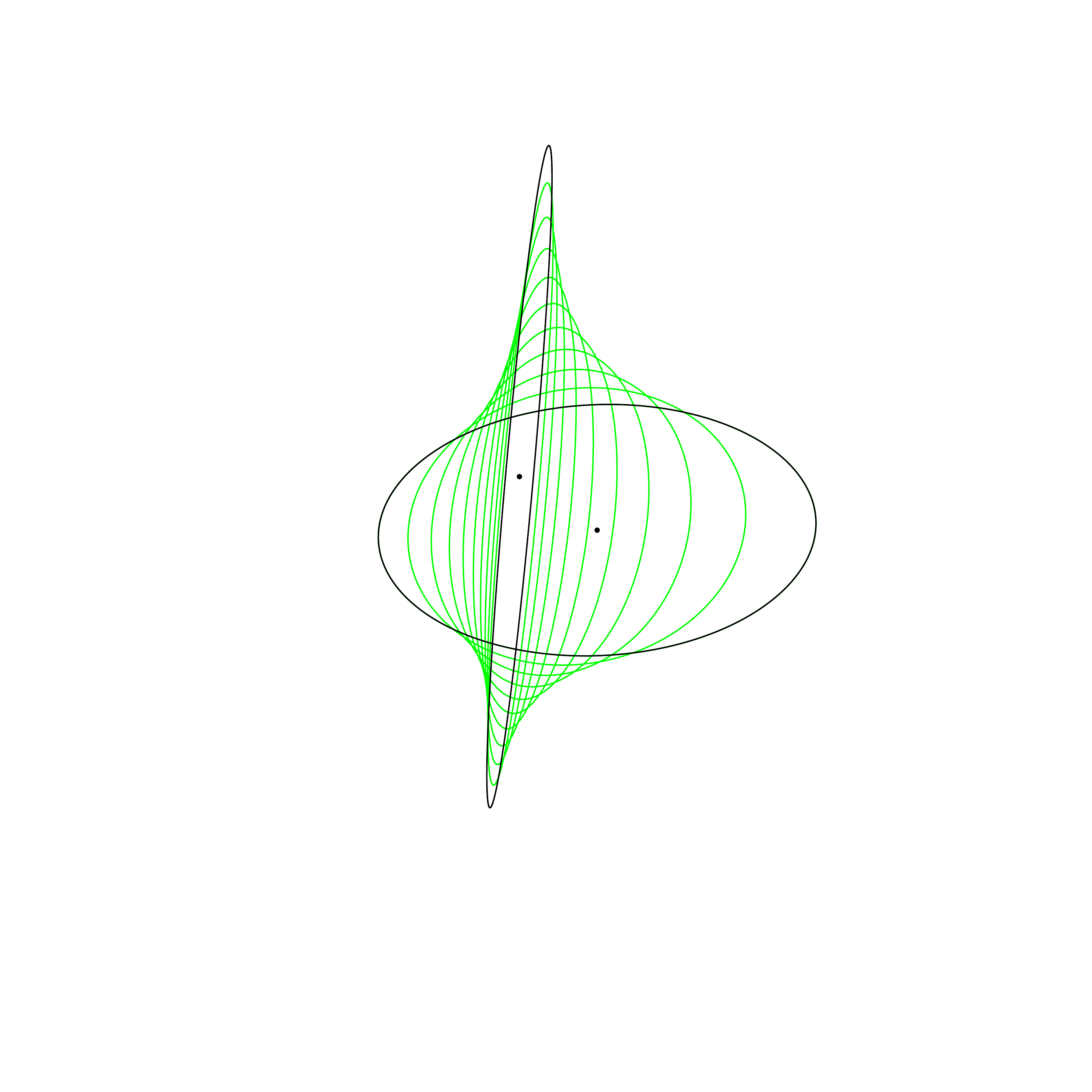} &
\includegraphics[width=0.3\textwidth]{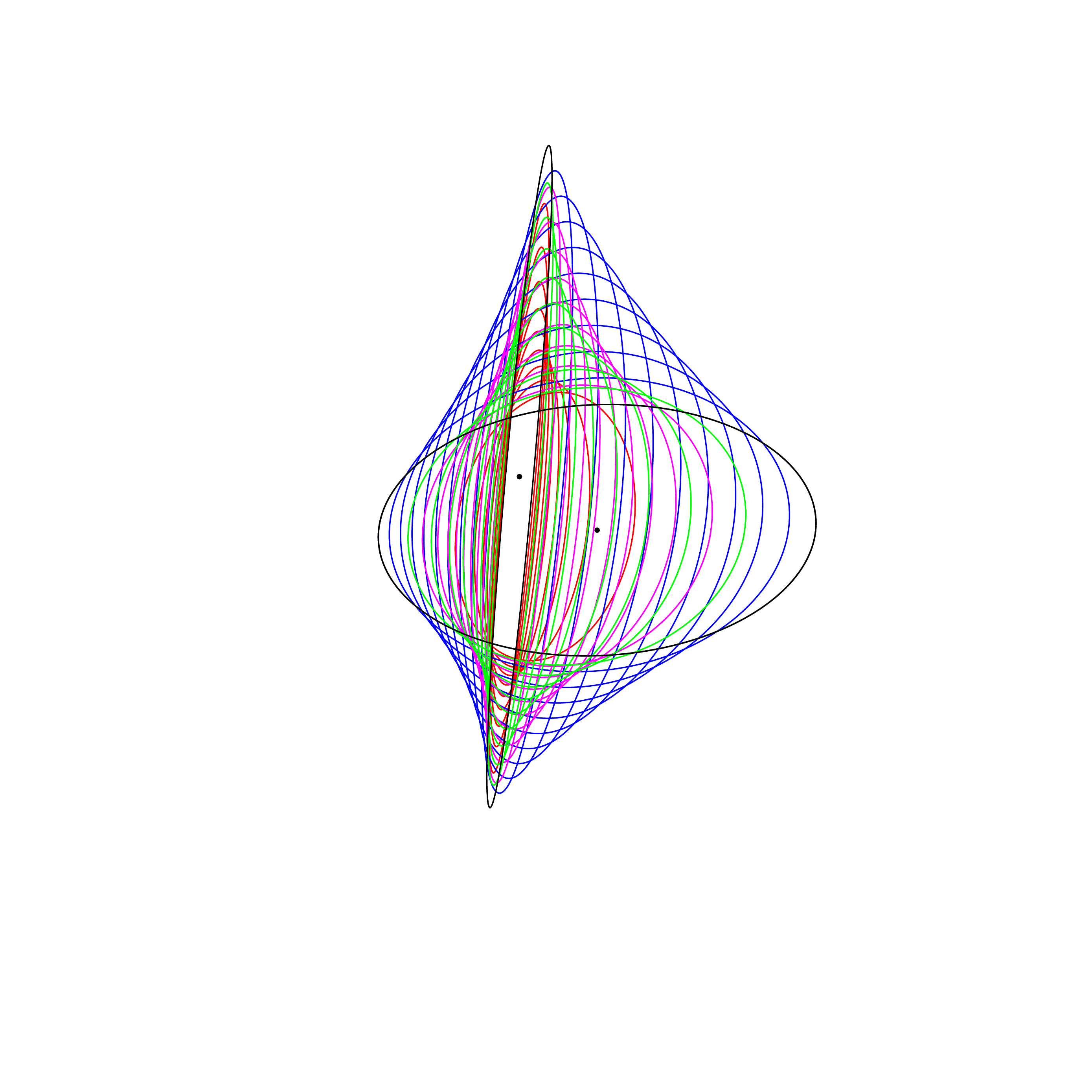}& \\
(d) &(e)
\end{tabular}
\caption{Visualizing at discrete positions (10 increment steps between $0$ and $1$) some curves used to approximate the Fisher-Rao distance between two bivariate normal distributions:
(a) exponential geodesic $c^e=\gamma_\calN^e$ (red),
(b) mixture geodesic $c^m=\gamma_\calN^m$ (blue),
(c) mid mixture-exponential curve $c^{\mathrm{em}}$ (purple),
(d) projected Calvo \& Oller curve $c^{\CO}$ (green), and
(e) All superposed curves at once. 
Visualization in the 2D sample space of bivariate normal distributions because bivariate normals are modeled as a $m=5$ dimensional point on the Fisher-Rao manifold $\calM$.
 \label{fig:viz2D}}
\end{figure}

\begin{figure}
\centering
\begin{tabular}{ccc}
\includegraphics[width=0.3\textwidth]{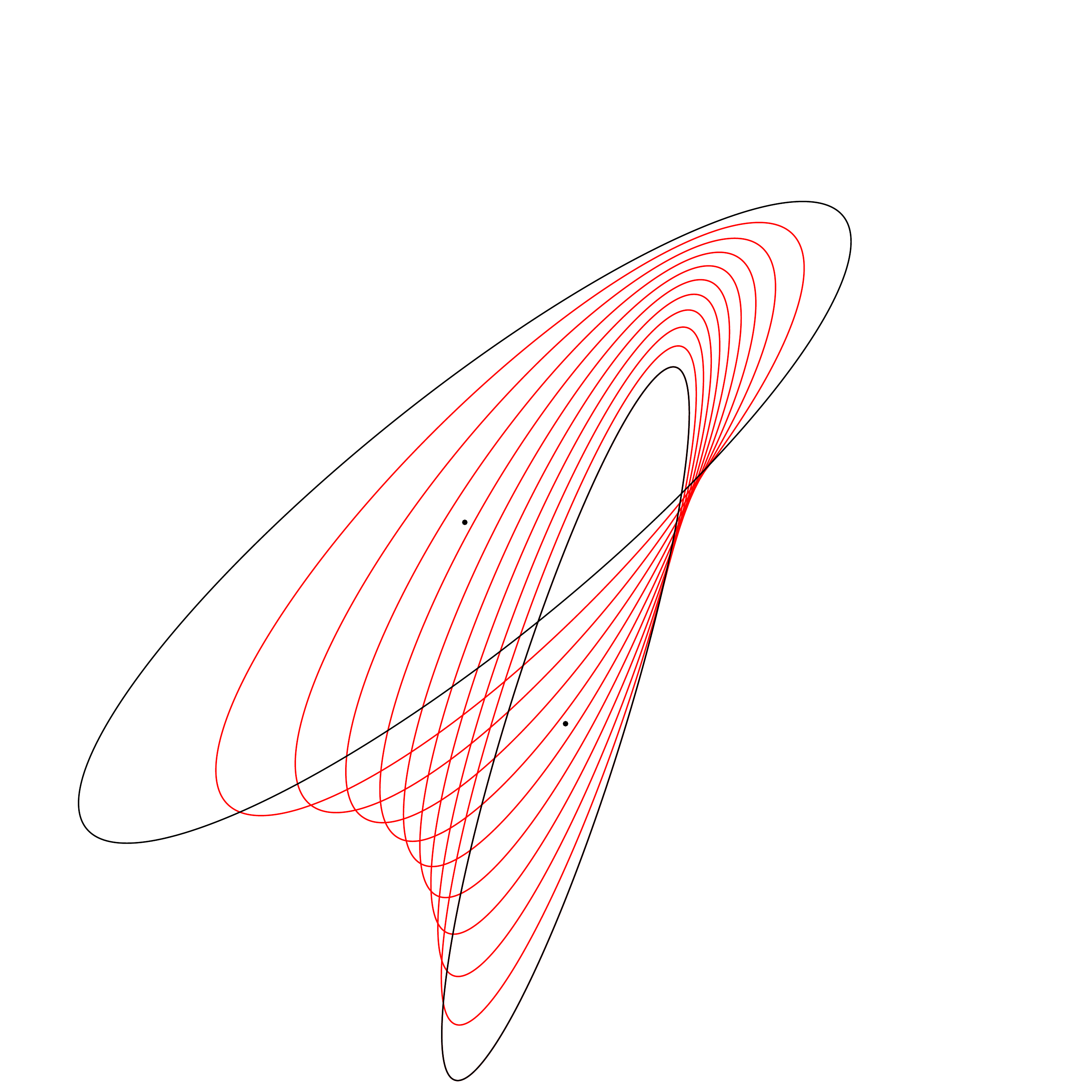} &
\includegraphics[width=0.3\textwidth]{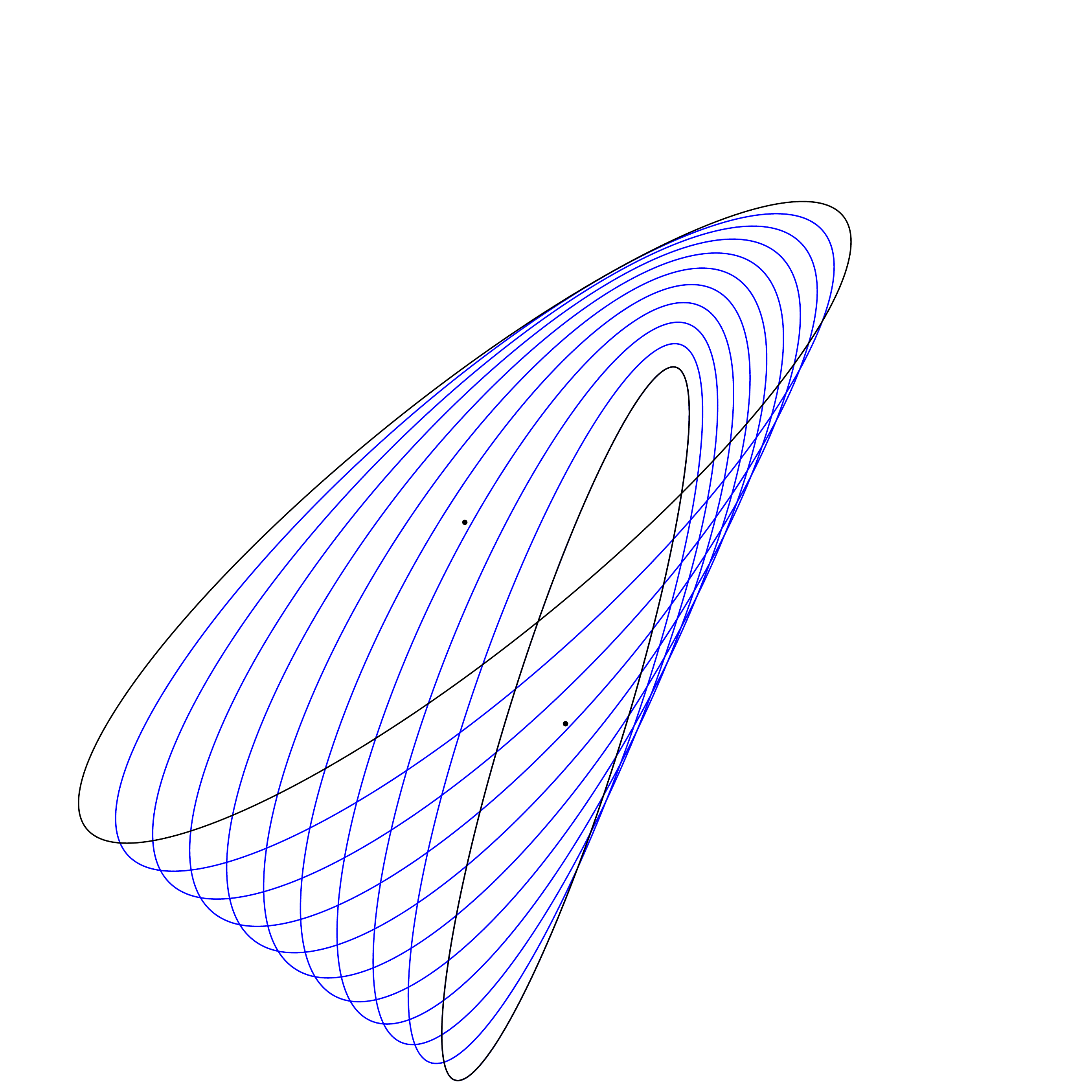}&
\includegraphics[width=0.3\textwidth]{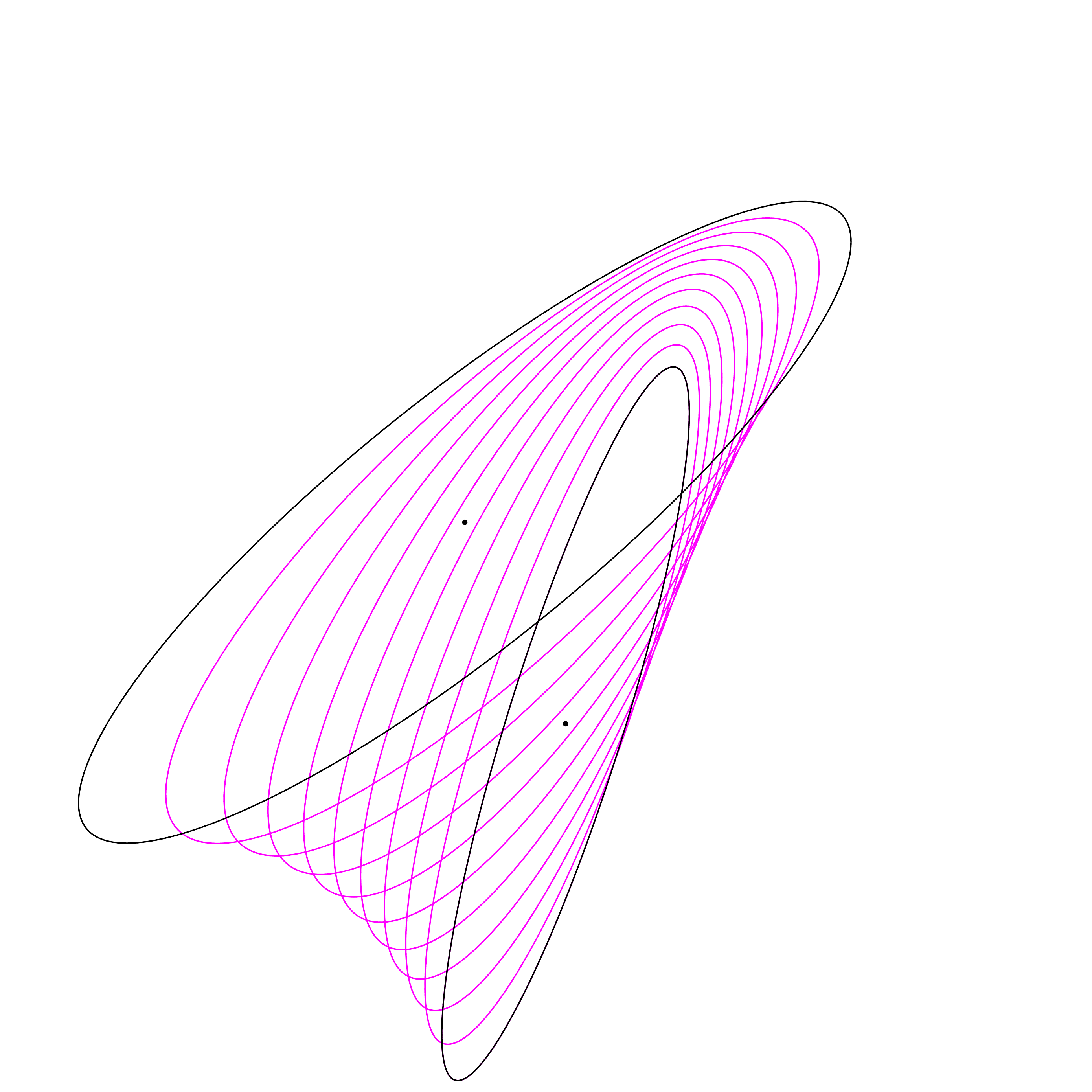} \\
(a) & (b) & (c)\cr
\includegraphics[width=0.3\textwidth]{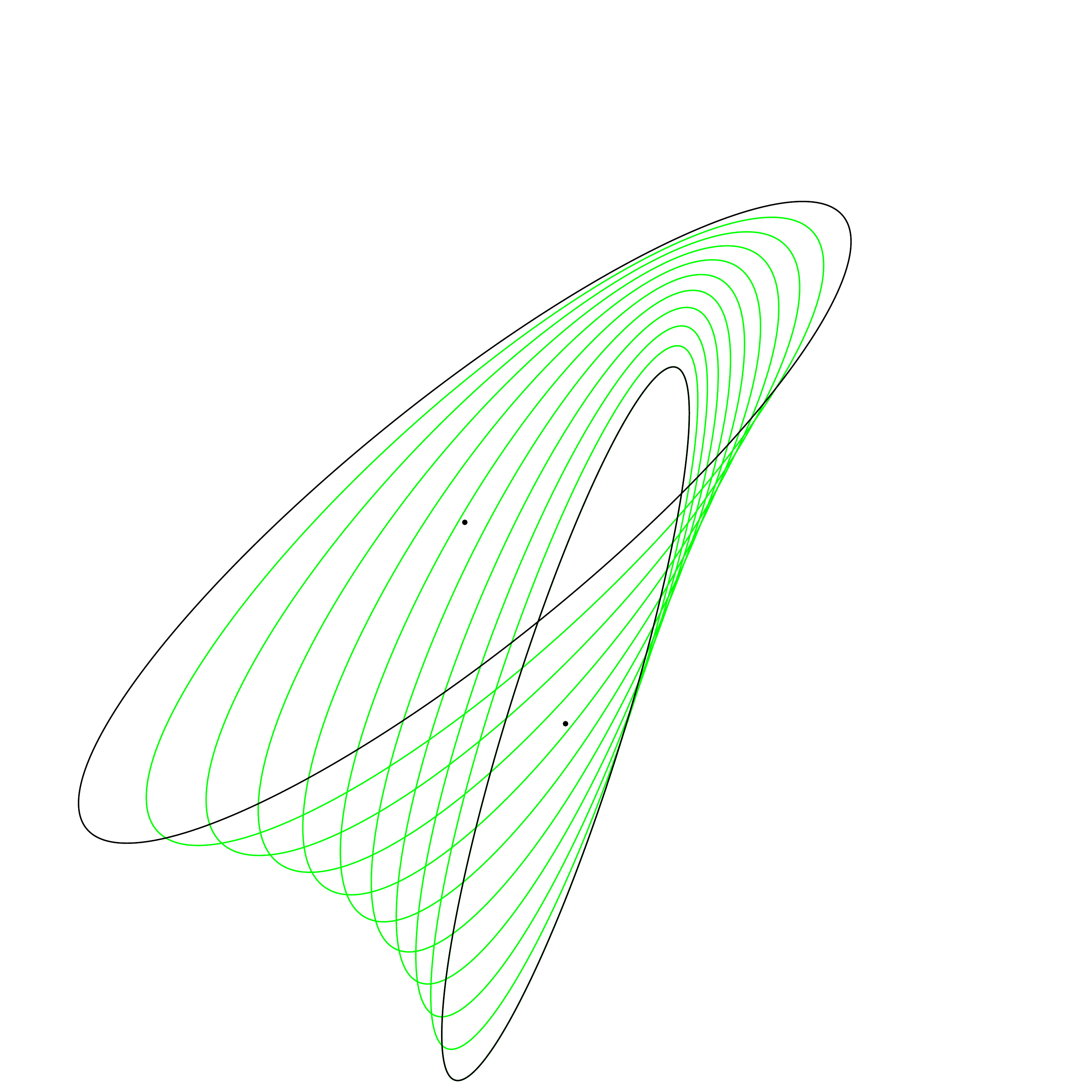} &
\includegraphics[width=0.3\textwidth]{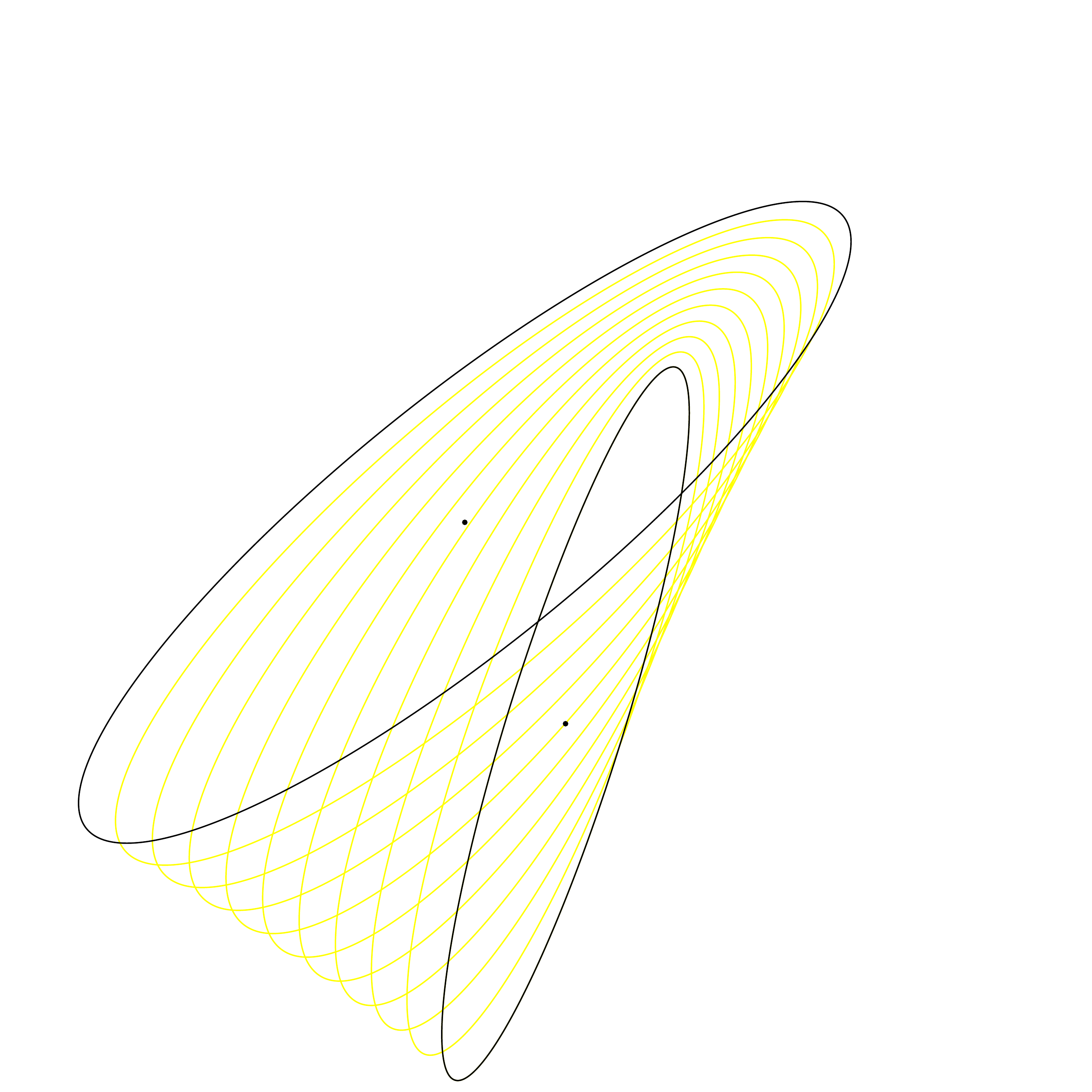} &
\includegraphics[width=0.3\textwidth]{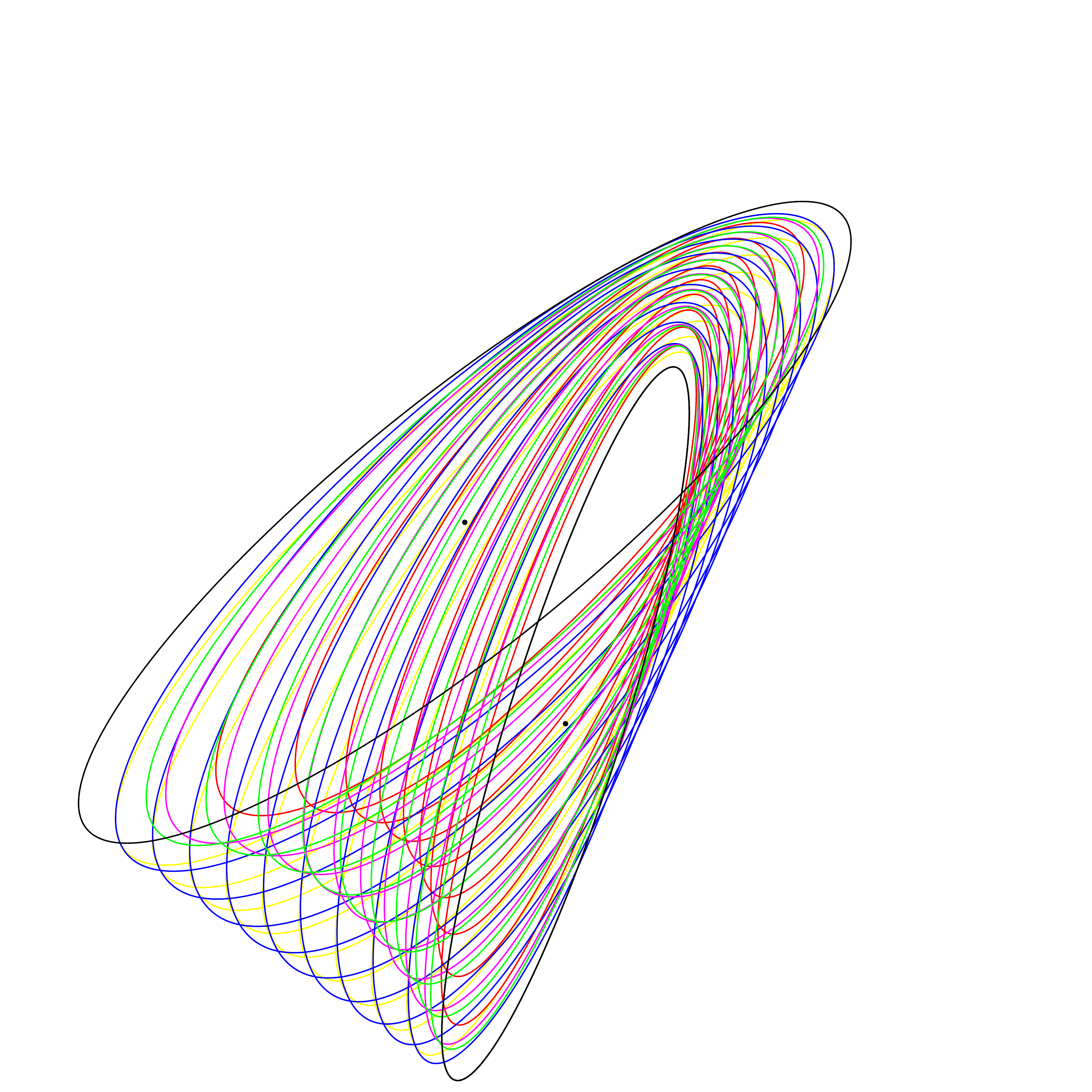} \\
(d) &(e) & (f)
\end{tabular}
\caption{Visualizing at discrete positions (10 increment steps between $0$ and $1$) some curves used to approximate the Fisher-Rao distance between two bivariate normal distributions:
(a) exponential geodesic $c^e=\gamma_\calN^e$ (red),
(b) mixture geodesic $c^m=\gamma_\calN^m$ (blue),
(c) mid mixture-exponential curve $c^{\mathrm{em}}$ (purple),
(d) projected Calvo \& Oller curve $c^{\CO}$ (green), 
(e) $c^\lambda$: ordinary linear interpolation in $\lambda$ (yellow),
and
(f) All superposed curves at once.
 \label{fig:viz2Dbis}}
\end{figure}

\begin{Example}
Let us report some numerical results for bivariate normals with $T=1000$: 

\begin{itemize}

\item We use the following example of Han and Park~\cite{MVNGeodesicShooting-2014} (Eq. 26):
$$
N_1=\left(\vectortwo{0}{0},\mattwotwo{1}{0}{0}{0.1}\right),\quad
N_2=\left(\vectortwo{1}{1},\mattwotwo{0.1}{0}{0}{1}\right).
$$
Their geodesic shooting algorithm~\cite{MVNGeodesicShooting-2014} evaluates the Fisher-Rao distance to 
$\rho_\calN(N_1,N_2) \approx \mathbf{3.1329}$ (precision $10^{-5}$).

We get:
\begin{itemize}
	\item Calvo \& Oller lower bound: $\rho_\CO(N_1,N_2)\approx \mathbf{3.0470}$,
	\item Upper bound using Eq.~\ref{eq:UBMah}: $7.92179$,
	\item SPC upper bound~(Eq.~\ref{prop:USPC}): $U_\SPC(N_1,N_2)\approx 5.4302$,
	\item $\sqrt{D_J}$ upper bound: $U_{\sqrt{J}}(N_1,N_2)\approx \mathbf{4.3704}$,
	\item $\tilde\rho_\calN^\lambda(N_1,N_2)\approx 3.4496$,
	\item $\tilde\rho_\calN^m(N_1,N_2)\approx 3.5775$,
	\item $\tilde\rho_\calN^e(N_1,N_2)\approx 3.7314$,
	\item $\tilde\rho_\calN^{\mathrm{em}}(N_1,N_2)\approx 3.1672$,
	\item $\tilde\rho_\calN^{\CO}(N_1,N_2)\approx \mathbf{3.1391}$.
\end{itemize}
In that setting, the $\sqrt{D_J}$ upper bound is better than the upper bound of Eq.~\ref{prop:USPC}, and the projected Calvo \& Oller geodesic yields the best approximation of the Fisher-Rao distance (Figure~\ref{Fig:HanPark}) with an absolute error of $0.0062$ (about $0.2\%$ relative error).
When $T=10$, we have  $\tilde\rho_\calN^{\CO}(N_1,N_2)\approx 3.1530$, when $T=100$, we get  $\tilde\rho_\calN^{\CO}(N_1,N_2)\approx 3.1136$, and when $T=500$ we obtain $\tilde\rho_\calN^{\CO}(N_1,N_2)\approx 3.1362$ (which is better than the approximation obtained for $T=1000$).
Figure~\ref{Fig:HanParkRangeT} shows the fluctuations of the approximation of the Fisher-Rao distance by the projected C\&O curve when $T$ ranges from $3$ to $100$.

\begin{figure}
\centering
\begin{tabular}{cc}
\includegraphics[width=0.3\textwidth]{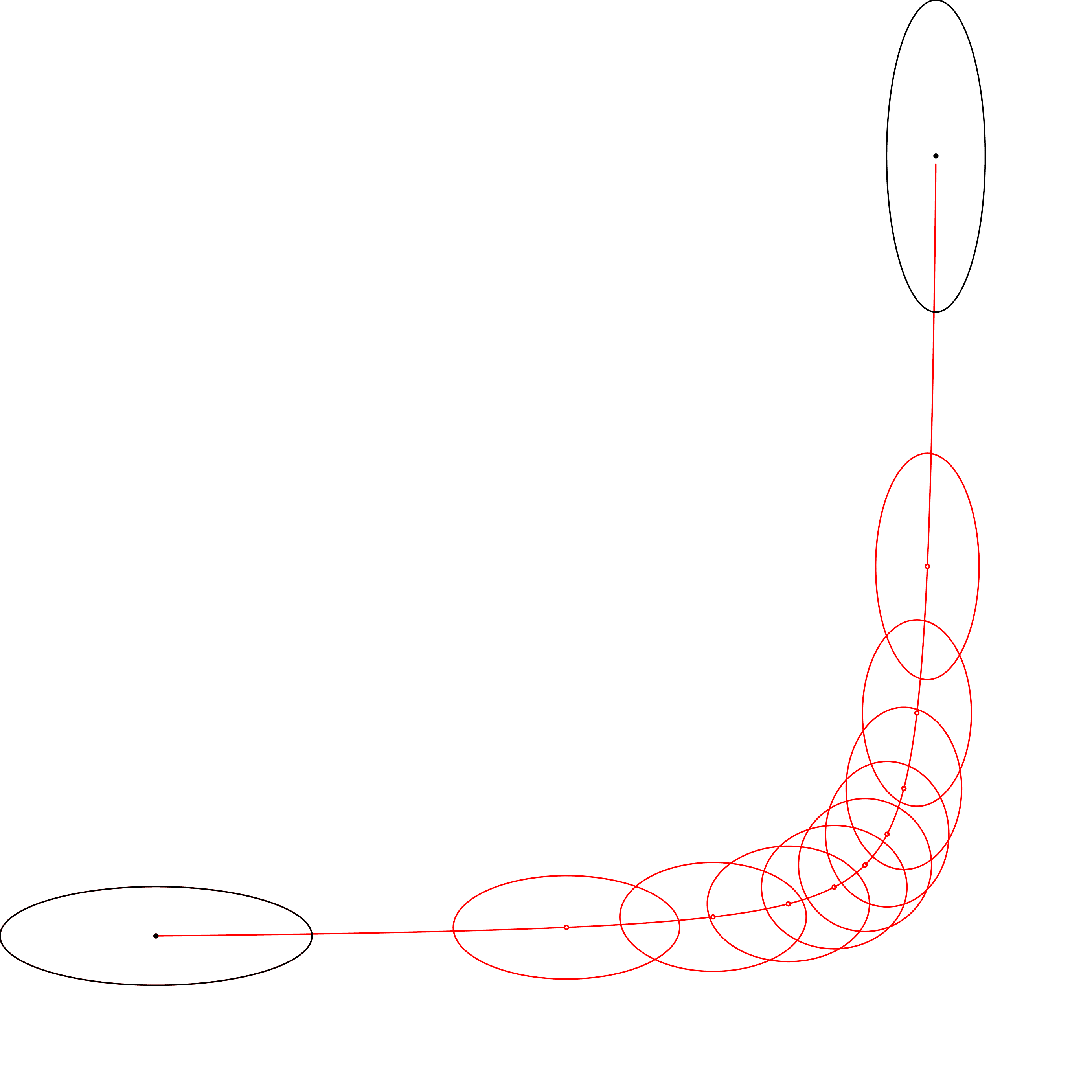} &
\includegraphics[width=0.3\textwidth]{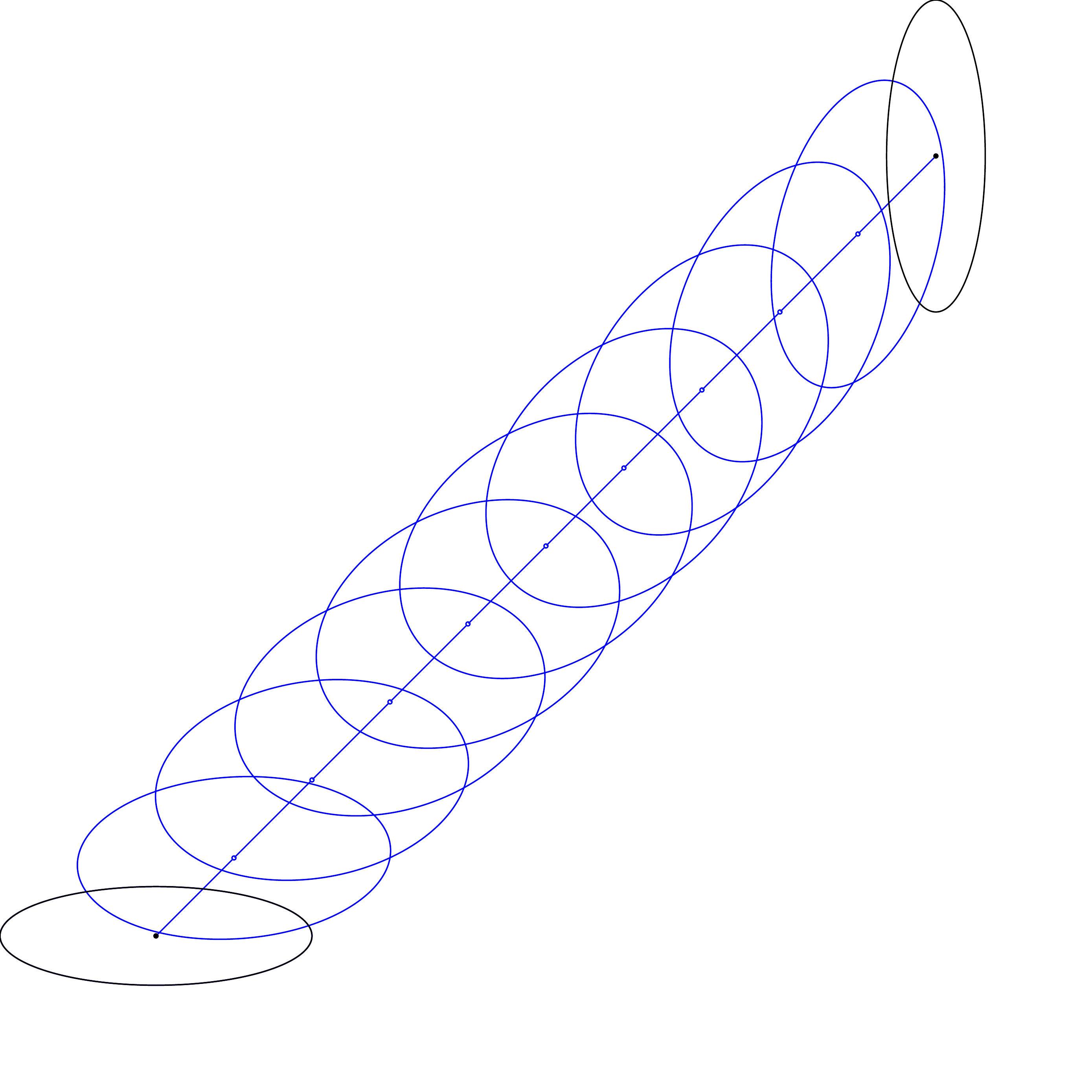} \\
(a) & (b) \\
\includegraphics[width=0.3\textwidth]{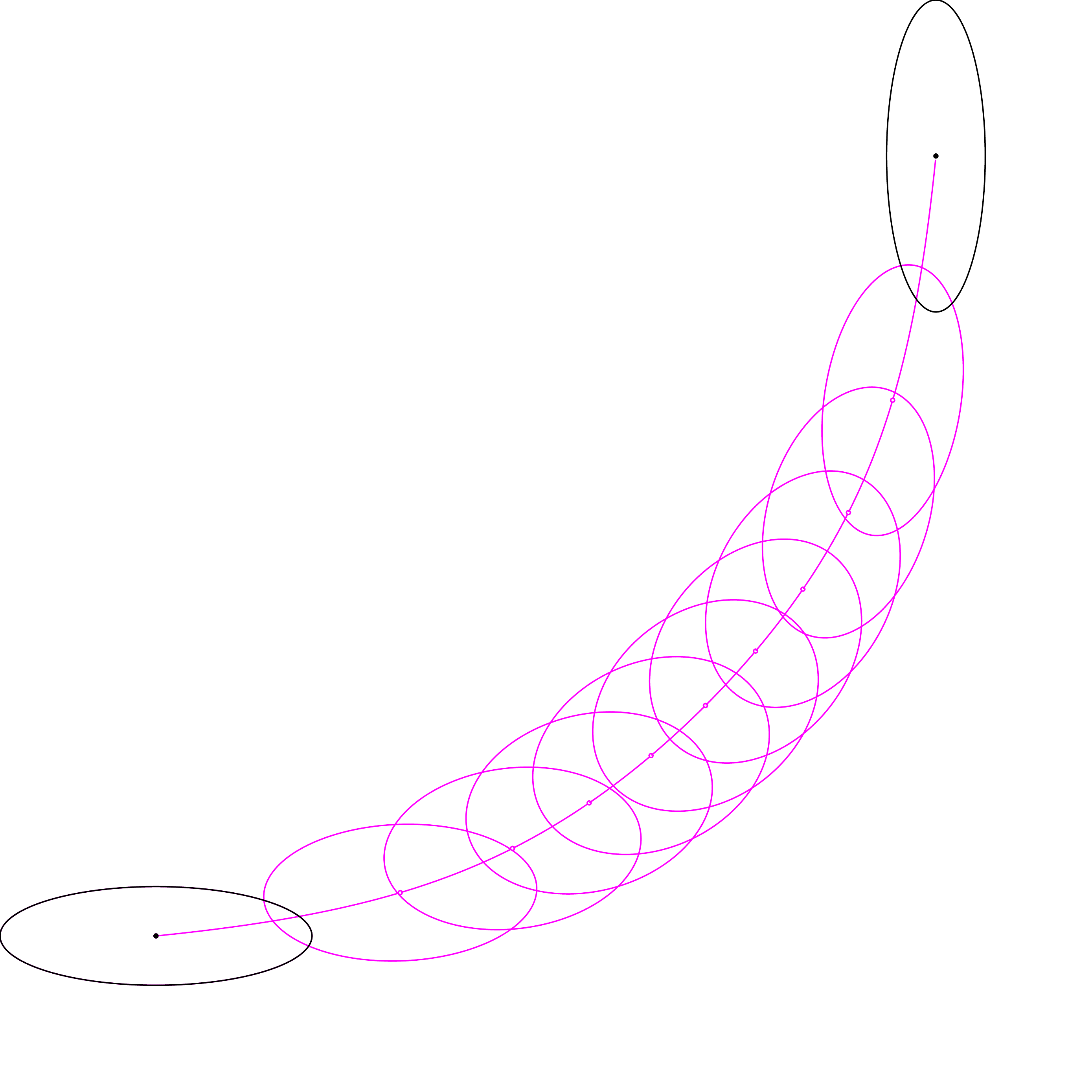} &
\includegraphics[width=0.3\textwidth]{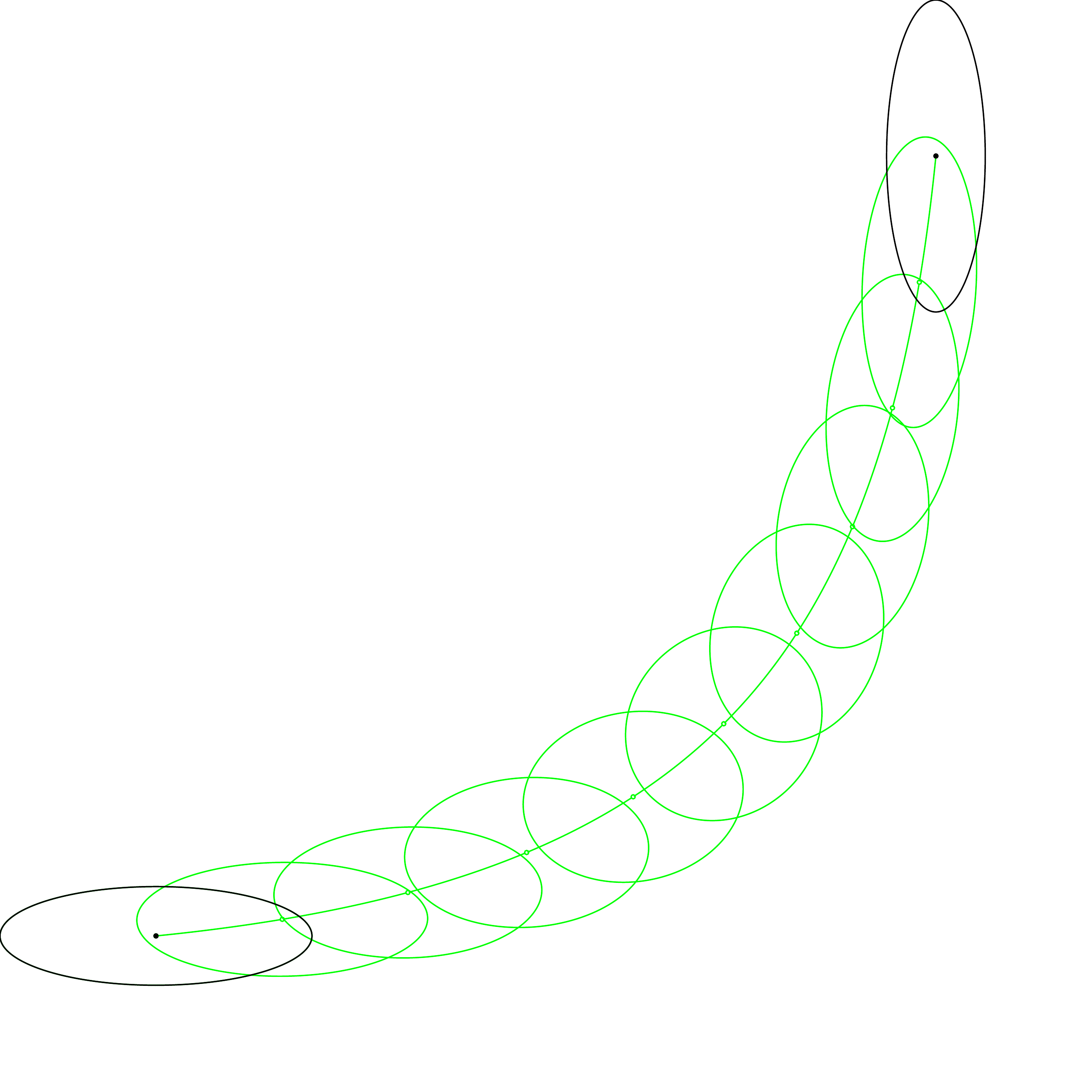} \\
(c) & (d) \\
\includegraphics[width=0.3\textwidth]{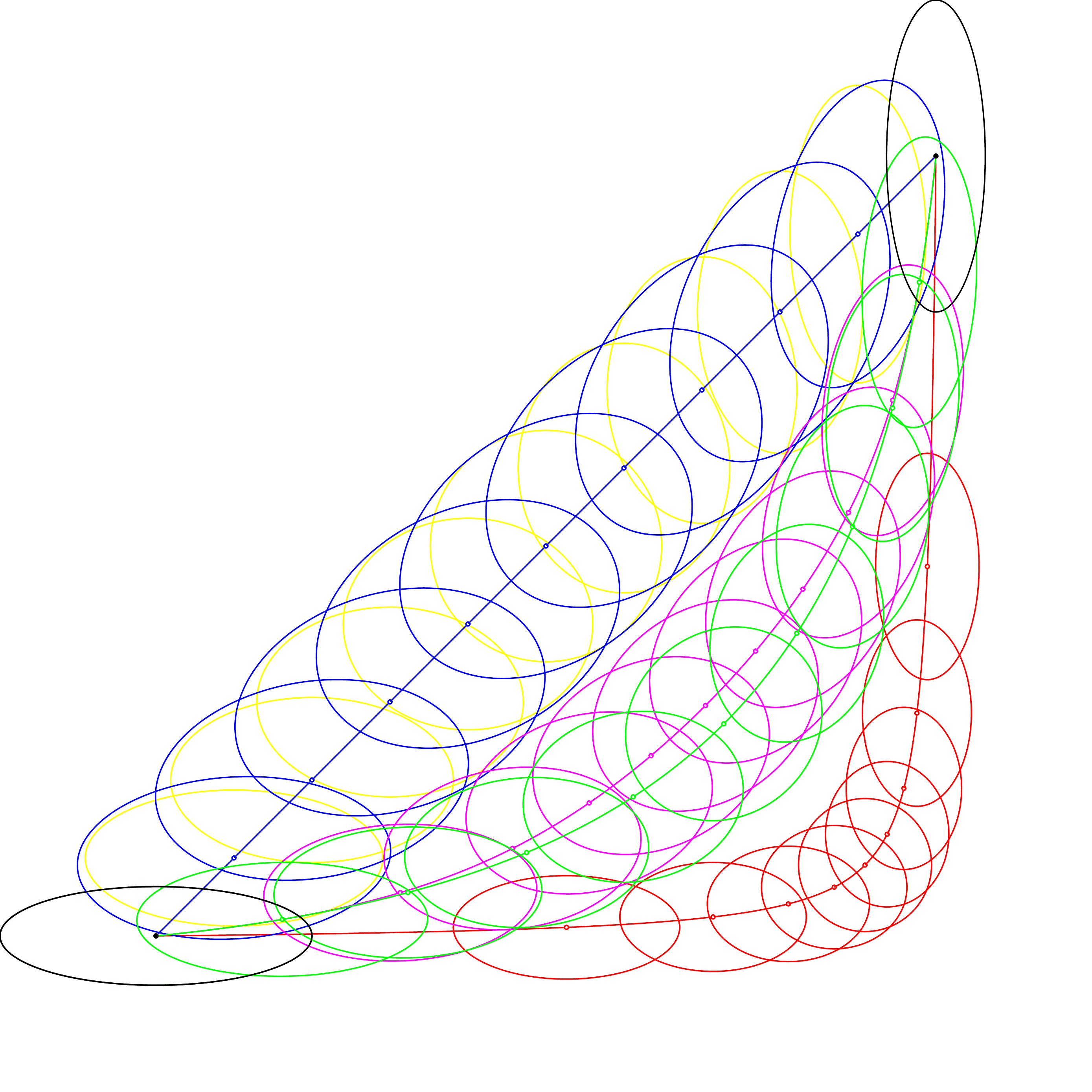} &
\includegraphics[width=0.3\textwidth]{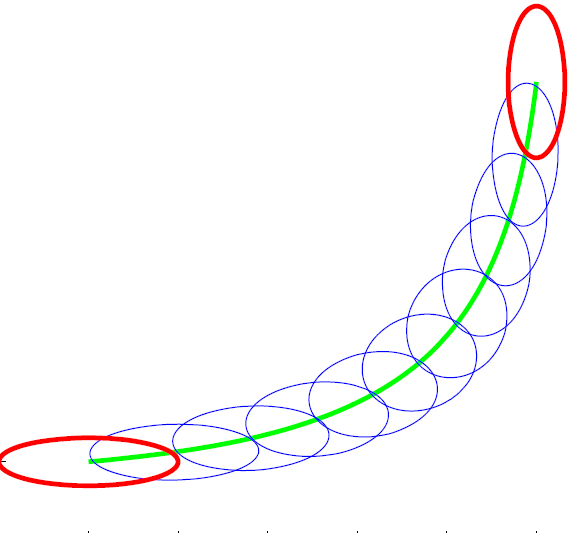} \\
(e) & (f) 
\end{tabular}
\caption{Comparison of our approximation curves with the Fisher-Rao geodesic (f) obtained by geodesic shooting (Figure~5 of~\cite{MVNGeodesicShooting-2014}). 
Exponential (a) and mixture (b) geodesics with the mid exponential-mixture curve (c), and the projected C\&O curve (d).
Superposed curves (e) and comparison with geodesic shooting (Figure~5 of~\cite{MVNGeodesicShooting-2014}).
Beware that color coding are not related between (a) and (b), and scale for depicting ellipsoids are different.}\label{Fig:HanPark}
\end{figure}

\begin{figure}
\centering
\includegraphics[width=0.65\textwidth]{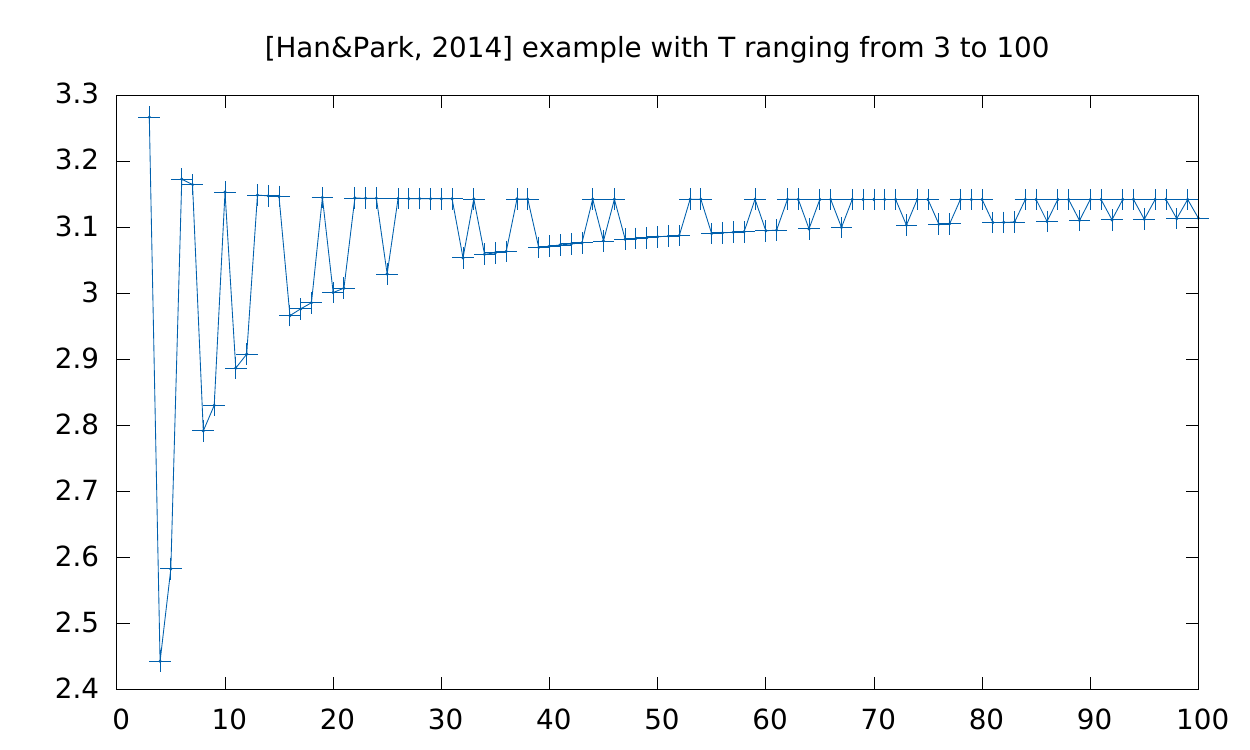} 
\caption{Approximating of the Fisher-Rao distance obtained by using the projected C\&O curve when $T$ ranges from $3$ to $100$.}
\label{Fig:HanParkRangeT}
\end{figure}
 

\item Bivariate normal $N_1=(0,I)$ and bivariate normal $N_2=(\mu_2,\Sigma_2)$ with $\mu_2=[1\ 0]^\top$ and 
$\Sigma_2=\mattwotwo{1}{-1}{-1}{2}$.
We get
\begin{itemize}
	\item Calvo \& Oller lower bound: {\bf $1.4498$}
	\item Upper bound~of Eq.~\ref{prop:USPC}: $2.6072$
	\item $\sqrt{D_J}$ upper bound: {\bf $1.5811$}
	\item $\tilde\rho^\lambda$: $1.5068$
	\item $\tilde\rho^m$: $1.5320$
	\item $\tilde\rho^e$: $1.5456$
	\item $\tilde\rho^{\mathrm{em}}$: $1.4681$
	\item $\tilde\rho^{\mathrm{co}}$: {\bf $1.4673$}
\end{itemize}
 \item Bivariate normal $N_1=(0,I)$ and bivariate normal $N_2=(\mu_2,\Sigma_2)$ with $\mu_2=[5\ 0]^\top$ and 
$\Sigma_2=\mattwotwo{1}{-1}{-1}{2}$.
We get:
\begin{itemize}
	\item Calvo \& Oller lower bound: {\bf $3.6852$}
	\item Upper bound of Eq.~\ref{prop:USPC}: {\bf $6.0392$}
	\item $\sqrt{D_J}$ upper bound: $6.2048$
	\item $\tilde\rho^\lambda$: $5.7319$
	\item $\tilde\rho^m$: $4.4039$
	\item $\tilde\rho^e$: $5.9205$
	\item $\tilde\rho^{\mathrm{em}}$: {\bf $4.2901$}
	\item $\tilde\rho^{\mathrm{co}}$: {$4.3786$}
\end{itemize}
\end{itemize}
\end{Example}



\section{Approximating the smallest enclosing Fisher-Rao ball of MVNs}\label{sec:MEBMVN}   

We may use these closed-form distance $\rho_\CO(N,N')$ between $N$ and $N'$ 
to compute an approximation (of the center) of the smallest enclosing Fisher-Rao ball $B^*=\ball(C^*,r^*)$ of a set 
$\calG=\{N_1=(\mu_1,\Sigma_1),\ldots,N_n=(\mu_n,\Sigma_n)\}$ of $n$ $d$-variate normal distributions:
$$
C^*=\arg\min_{C\in\calN} \max_{i\in\{1,\ldots,n\}} \rho_\calN(C,N_i)
$$
where $\ball(C,r)=\{N\in\calN\st \rho_\calN(C,N)\leq r\}$.

The method proceeds as follows:  
\begin{itemize}
\item First, we convert MVN set $\calG$ into the equivalent set of $(d+1)$-dimensional SPD matrices $\bar\calG=\{\barP_i=f(N_i)\}$ using the C\&O embedding. We relax the problem of approximating the circumcenter $C^*$ of the smallest enclosing Fisher-Rao ball by
$$
P^*=\arg\min_{P\in\bbP(d+1)} \max_{i\in\{1,\ldots,n\}} \rho_\CO(P,\barP_i).
$$

\item Second we approximate the center of the smallest enclosing Riemannian ball of $\bar\calG$ using the iterative smallest enclosing Riemannian  ball algorithm in~\cite{RieMinimax-2013} with say $T=1000$ iterations. 
Let $\tilde{P}\in\bbP(d+1)$ denote this approximation center: ${P}_T=\RieSEB_\SPD(\bar\calG,T)$.

\item Finally, we project back $P_T$ onto $\barN$: $\bar P_T=\proj_{\barN}({P}_T)$. 
We return $\bar P_T$ as the approximation of $C^*$.
\end{itemize}

Algorithm~\cite{RieMinimax-2013} $\RieSEB_\SPD(\{P_1,\ldots, P_n\},T)$ is described for a set of SPD matrices $\{P_1,\ldots, P_n\}$ as follows:

\begin{itemize}
	\item Let $C_1\leftarrow P_1$
	\item For $t=1$ to $T$
	\begin{itemize}
		\item Compute the index of the SPD matrix which is farthest to current circumcenter $C_t$:
		$$
		f_t=\arg\max_{i\in\{1,\ldots,n\}} \rho_\SPD(C_t,P_i)
		$$
		\item Update the circumcenter by walking along the geodesic linking $C_t$ to $P_{f_t}$:
		$$
		C_{t+1}=\gamma_\SPD\left(C_t,P_{f_t};\frac{1}{t+1}\right)
		= C_t^{\frac{1}{2}}(C_t^{-\frac{1}{2}}P_{f_t}C_t^{-\frac{1}{2}})^{\frac{1}{t+1}} C_t^{\frac{1}{2}}
		$$
	\end{itemize}
	\item Return $C_T$
\end{itemize}
Convergence of the algorithm $\RieSEB_\SPD$ follows from the fact that the SPD trace manifold is a Hadamard manifold (with negative sectional curvatures). See~\cite{RieMinimax-2013} for a proof of convergence (and~\cite{nielsen2015approximating} for a convergence proof of the similar algorithm in hyperbolic geometry).

The SPD distance $\rho_\calP(C_T,\bar C_T)$ provides an indication of the quality of the approximation.
Figure~\ref{fig:MiniBallCO} shows the result of implementing this heuristic.

Let us notice that when all MVNs share the same covariance matrix $\Sigma$, we have from Eq.~\ref{eq:FRsamecovar} or Eq.~\ref{eq:cosamecovar} 
that $\rho_\calN(\mu_1,\Sigma),N(\mu_2,\Sigma)$ and $\rho_{\CO}(N(\mu_1,\Sigma),N(\mu_2,\Sigma))$ are strictly increasing function of their Mahalanobis distance.
Using the the Cholesky decomposition $\Sigma^{-1}=LL^\top$, we deduce that the smallest Fisher-Rao enclosing ball coincides with the
smallest Calvo \& Oller enclosing ball, and  the  circumcenter of that ball can be found  as an ordinary Euclidean circumcenter~\cite{welzl2005smallest} 
(Figure~\ref{fig:MiniBallCO}(b)).
Note that in 1D, we can find the exact smallest enclosing Fisher-Rao ball as an equivalent smallest enclosing ball in hyperbolic geometry~\cite{nielsen2010hyperbolic}.

Furthermore, we may extend the computation of the approximated circumcenter to  $k$-center clustering~\cite{gonzalez1985clustering} of $n$ multivariate normal distributions. Since the circumcenter of the clusters are approximated and not exact, we extend straightforwardly the variational approach of $k$-means described in~\cite{acharyya2013bregman} to $k$-center clustering.
An application of $k$-center clustering of MVNs is to simplify a Gaussian mixture model~\cite{FRMVNReview-2020} (GMM).

Similarly, we can consider other Riemannian distances with closed form formula between MVNs like the Killing distance in the symmetric space~\cite{lovric2000multivariate} or the Siegel-based distance proposed in Appendix~\ref{sec:Siegel}.

\begin{figure}
\centering
\begin{tabular}{ccc}
\includegraphics[width=0.3\textwidth]{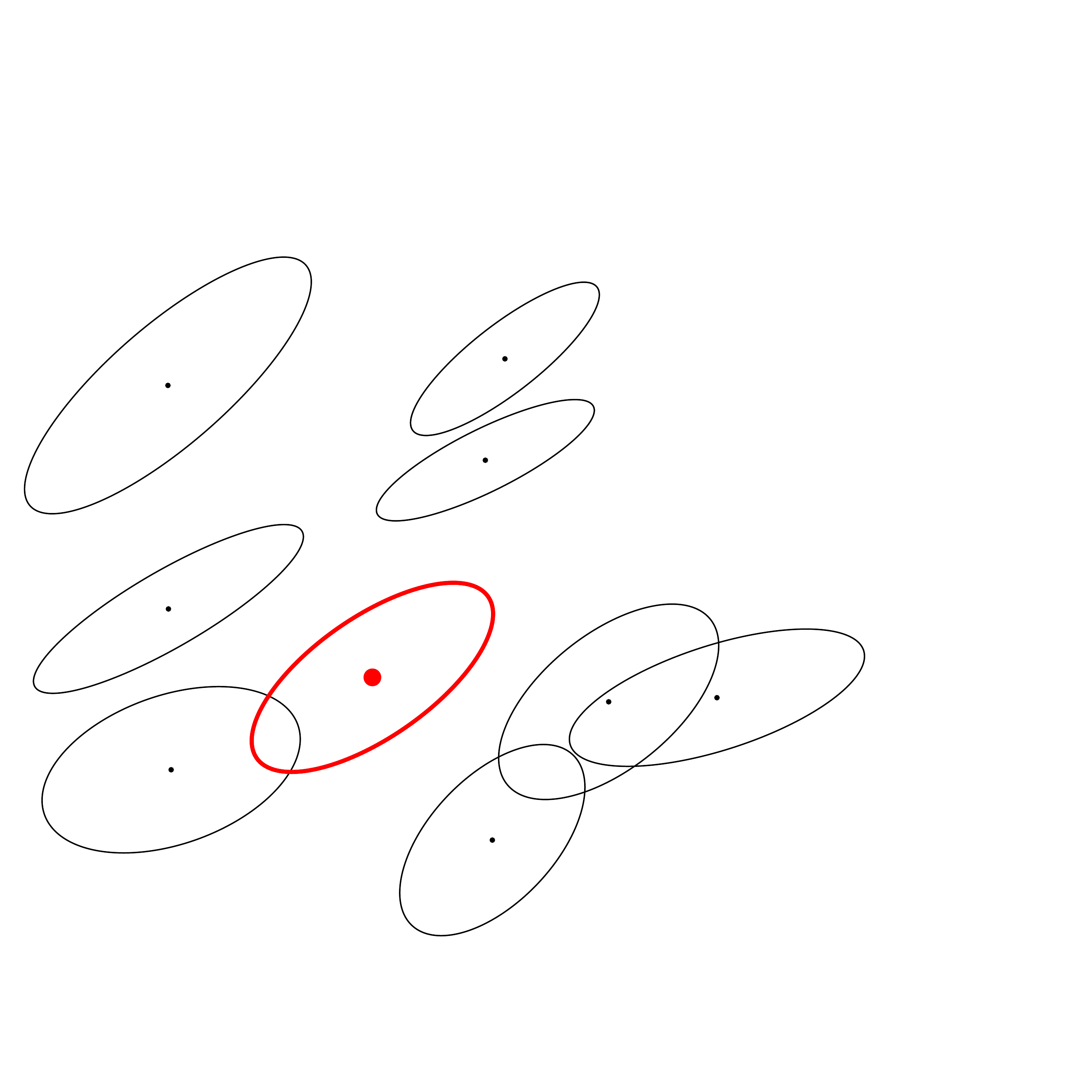} &
\includegraphics[width=0.3\textwidth]{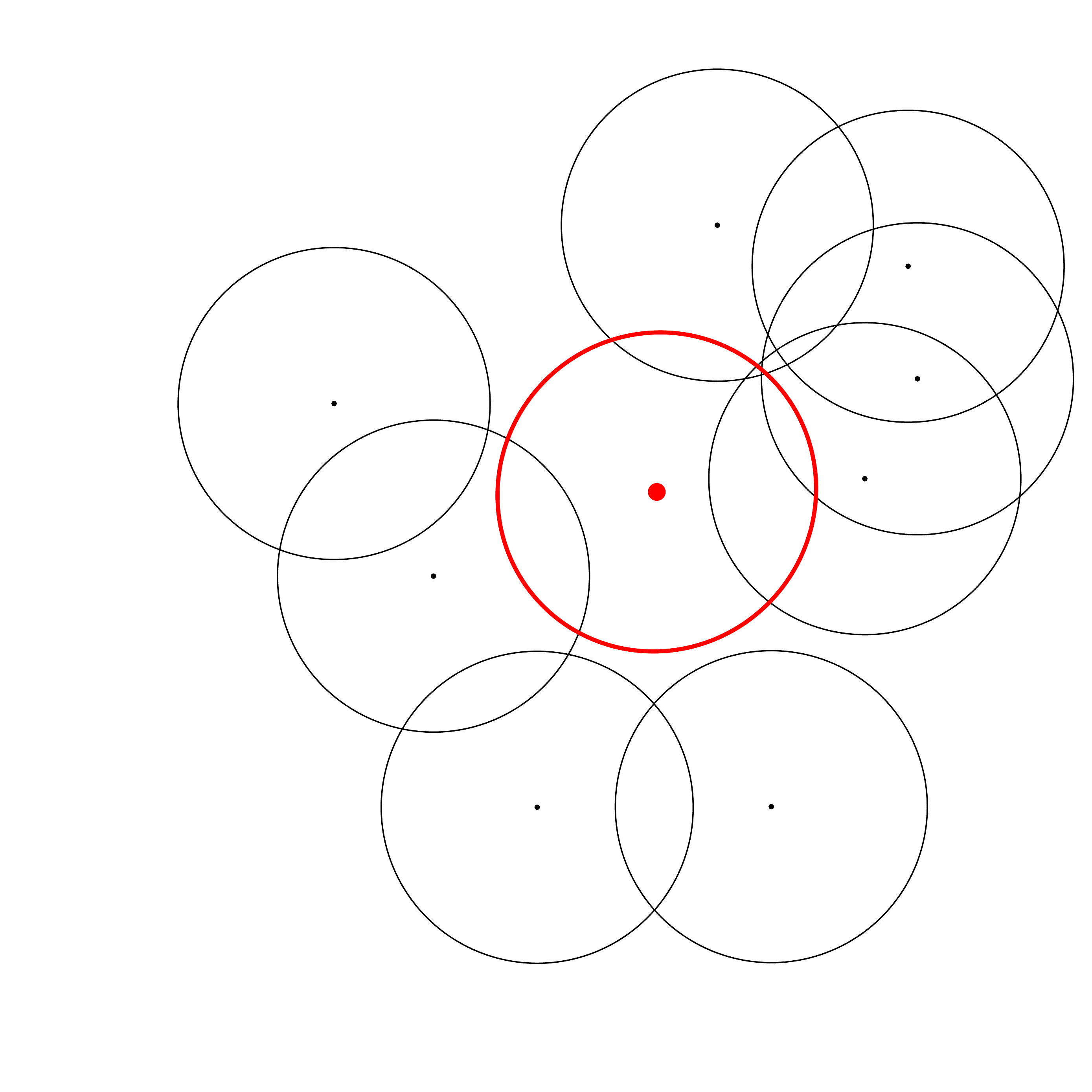} &
\includegraphics[width=0.3\textwidth]{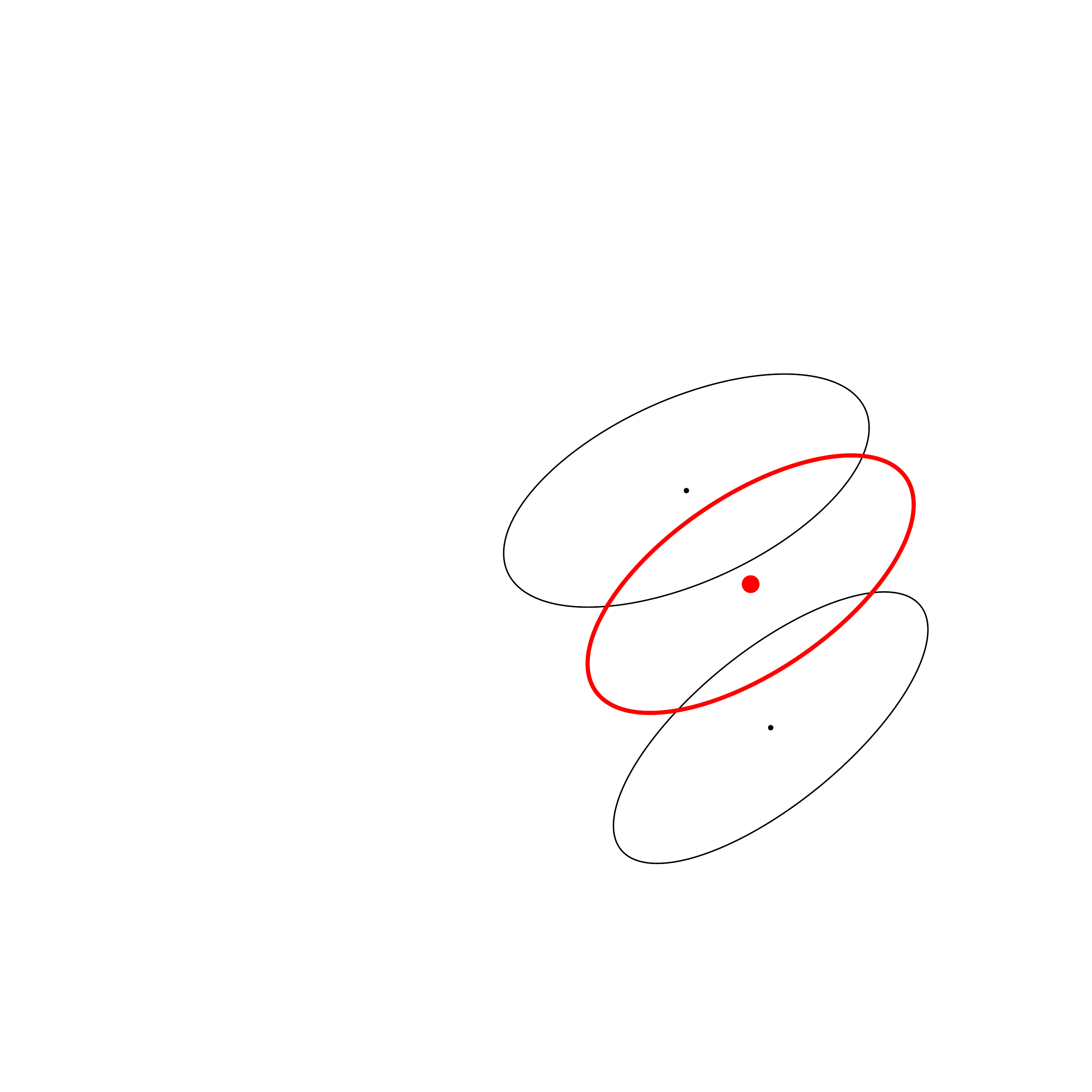}\\
(a) & (b) & (c)
\end{tabular}

\caption{Approximation of the smallest enclosing Riemannian ball of a set of $n$ bivariate normals $N_i=N(\mu_i,\Sigma_i)$ with respect to C\&O distance $\rho_\CO$ (the approximate circumcenter $\bar C_T$ is depicted as a red ellipse):
(a) $n=8$ with different covariance matrices,
(b) $n=8$ with identical covariance matrices amount to a smallest enclosing ball of a set of $n$ points $\{\mu_i\}$,
(c) $n=2$ displays the midpoint of the C\&O geodesic visualized as an equivalement bivariate normal distribution in the sample space.
 \label{fig:MiniBallCO}}
\end{figure}

\section{Some information-geometric properties of the C\&O embedding}\label{sec:prop}  

In information geometry~\cite{IG-2016}, the manifold $\calN$ admits a dual structure $(\calN,g^\Fisher_\calN,\nabla^e_\calN,\nabla^m_\calN)$ when equipped with the exponential connection $\nabla^e_\calN$ and the mixture connection $\nabla^m_\calN$. 
The connections $\nabla^e_\calN$ and $\nabla^m_\calN$ are said dual since $\frac{\nabla^e_\calN+\nabla^m_\calN}{2}=\bar\nabla_\calN$, the Levi-Civita connection induced by $g^\Fisher_\calN$.
Furthermore, by viewing $\calN$ as an exponential family $\{p_\theta\}$ with  natural parameter $\theta=(\theta_v,\theta_M)$ (using the sufficient statistics~\cite{nielsen2019jensen} $(x,-xx^\top)$), and taking the convex log-normalizer function $F_\calN(\theta)$ of the normals, we can build a dually flat space~\cite{IG-2016} where the canonical divergence amounts to a Bregman divergence which coincides with the reverse Kullback-Leibler divergence~\cite{ohara1996dualistic,IG-MVN-1999} (KLD).
The Legendre duality 
$$
F^*(\eta)=\inner{\nabla F(\theta)}{\eta}-F(\nabla F(\theta))
$$
 (with $\inner{(v_1,M_1)}{(v_2,M_2)}=\tr(v_1v_2^\top+M_1M_2^\top)=v_1\cdot v_2+\tr(M_1M_2^\top)$) yields:
$\theta =  (\theta_v,\theta_M)=\left(\Sigma^{-1}\mu,\frac{1}{2}\Sigma^{-1}\right)$,
$$
F_\calN(\theta) = \frac{1}{2}\left(d\log\pi-\log|\theta_M|+\frac{1}{2}\theta_v^\top\theta_M^{-1}\theta_v\right),$$
$\eta=(\eta_v,\eta_M)=\nabla F_\calN(\theta)=\left(\frac{1}{2}\theta_M^{-1}\theta_v,\theta_M^{-1}\right)$,
$$F^*_\calN(\eta)= -\frac{1}{2}\left(\log(1+\eta_v^\top\eta_M^{-1}\eta_v)+\log|-\eta_M|+d(\log 2\pi e)\right),$$
and we have
$$
B_{F_\calN}(\theta_1,\theta_2)=D_\KL^*(p_{\lambda_1}:p_{\lambda_2})=D_\KL(p_{\lambda_2}:p_{\lambda_1})=B_{F^*_\calN}(\eta_2:\eta_1),
$$
where $D_\KL^*[p:q]=D_\KL[q:p]$ is the reverse KLD.

In a dually flat space, we can express the canonical divergence as a Fenchel-Young divergence using the mixed coordinate systems $
B_{F_\calN}(\theta_1:\theta_2)=Y_{F_\calN}(\theta_1:\eta_2)$
where $\eta_i=\nabla F_\calN(\theta_i)$ and 
$$
Y_{F_\calN}(\theta_1:\eta_2):=F_\calN(\theta_1)+F^*_\calN(\eta_2)-\inner{\theta_1}{\eta_2}.
$$
The  moment $\eta$-parameterization of a normal is $(\eta=\mu,H=-\Sigma-\mu\mu^\top)$
with its reciprocal function $(\lambda=\eta,\Lambda=-H-\eta\eta^\top)$.

Let $F_\calP(P)=F_\calN(0,P)$, $\bar\theta=\frac{1}{2}\bar P^{-1}$, $\bar\eta=\nabla F_{\calP}(\bar\theta)$. 
Then we have the following proposition which proves that the Fenchel-Young divergences in $\calN$ and $\barN$ (as a submanifold of $\calP$)  coincide:

\begin{Proposition}\label{prop:embedpot}
We have
\begin{eqnarray*}
D_\KL[p_{\mu_1,\Sigma_1}:p_{\mu_2,\Sigma_2}]&=&
B_{F_\calN}(\theta_2:\theta_1)=Y_{F_\calN}(\theta_2:\eta_1)
=Y_{F_\calP}(\bar\theta_2:\bar\eta_1)\\
&=& 
B_{F_\calP}(\bar\theta_2:\bar\theta_1)=
D_\KL[p_{0,\bar P_1=f(\mu_1,\Sigma_2)}:p_{0,\bar P_2=f(\mu_2,\Sigma_2)}].
\end{eqnarray*}
\end{Proposition}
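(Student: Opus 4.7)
The plan is to reduce the chain of equalities to two standard facts from the theory of dually flat spaces plus one genuinely embedding-specific fact, namely Proposition~\ref{prop:KL}, which was already established.

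First, I would invoke the standard identity for regular exponential families: if $F$ is the cumulant function, $\theta$ the natural parameter, and $\eta=\nabla F(\theta)$ the dual (moment) parameter, then
$$D_\KL[p_{\theta_1}:p_{\theta_2}] = B_F(\theta_2:\theta_1), \qquad B_F(a:b) := F(a)-F(b)-\inner{a-b}{\nabla F(b)}.$$
Applied to $\calN(d)$ with $F=F_\calN$ this yields the leftmost equality $D_\KL[p_{\mu_1,\Sigma_1}:p_{\mu_2,\Sigma_2}] = B_{F_\calN}(\theta_2:\theta_1)$. Applied to the centered $(d+1)$-variate normal family with $F_\calP(P) := F_\calN(0,P)$ and natural parameter $\bar\theta=\tfrac{1}{2}\bar P^{-1}$ it yields the rightmost equality $B_{F_\calP}(\bar\theta_2:\bar\theta_1) = D_\KL[p_{0,\bar P_1}:p_{0,\bar P_2}]$.

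Second, I would use the Legendre--Fenchel identity $F(\theta) + F^*(\eta) = \inner{\theta}{\eta}$ valid when $\eta=\nabla F(\theta)$, which converts any Bregman divergence into a Fenchel--Young divergence:
$$B_F(\theta_2:\theta_1) = F(\theta_2) - F(\theta_1) - \inner{\theta_2-\theta_1}{\eta_1} = F(\theta_2) + F^*(\eta_1) - \inner{\theta_2}{\eta_1} = Y_F(\theta_2:\eta_1).$$
Specializing to $F=F_\calN$ gives the second equality of the chain, and to $F=F_\calP$ the fourth. Both are purely formal once the preceding setup is in place.

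Finally, the middle equality $Y_{F_\calN}(\theta_2:\eta_1) = Y_{F_\calP}(\bar\theta_2:\bar\eta_1)$ is the step that actually uses the Calvo--Oller embedding: combining the two outer Bregman identifications with the four intermediate Fenchel--Young conversions, it is logically equivalent to the assertion that $D_\KL[p_{\mu_1,\Sigma_1}:p_{\mu_2,\Sigma_2}] = D_\KL[p_{0,\bar P_1}:p_{0,\bar P_2}]$, which is exactly Proposition~\ref{prop:KL}. Hence the entire chain collapses to two blocks of well-known identities plus one previously established substantive computation. The only potential obstacle is book-keeping: one must verify that the dualities are compatible with the restriction $\theta_v=0$ for the centered submodel, i.e.\ that restricting $F_\calN$ to $\{\theta_v=0\}$ genuinely produces $F_\calP$ with the expected conjugate $F_\calP^*$, so that the Fenchel--Young inner products $\inner{\theta_2}{\eta_1}$ and $\inner{\bar\theta_2}{\bar\eta_1}$ line up under the embedding $f$. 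Since Proposition~\ref{prop:KL} already confirms that the two KL divergences agree and since Bregman/Fenchel--Young are intrinsic to each family, no further MVN-specific calculation beyond the block-matrix trace identity used in the proof of Proposition~\ref{prop:KL} is required.
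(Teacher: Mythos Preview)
Your proposal is correct and matches the paper's approach. The paper states Proposition~\ref{prop:embedpot} without an explicit proof environment, relying instead on the preceding paragraphs which set up precisely the ingredients you identify: the standard identities $D_\KL[p_{\theta_1}:p_{\theta_2}]=B_F(\theta_2:\theta_1)$ and $B_F(\theta_2:\theta_1)=Y_F(\theta_2:\eta_1)$ for exponential families, together with the earlier Proposition~\ref{prop:KL} supplying the embedding-specific equality of the two KL divergences.
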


Consider now the $\nabla^e$-geodesics and $\nabla^m$-geodesics on $\calN$ (linear interpolation with respect to natural and dual moment parameterizations, respectively):
$\gamma_\calN^e(N_1,N_2;t)=(\mu_t^e,\Sigma_t^e)$ and
$\gamma_\calN^m(N_1,N_2;t)=(\mu_t^m,\Sigma_t^m)$.

\begin{Proposition}[Mixture geodesics preserved]\label{prop:geo}
The mixture geodesics are preserved by the embedding $f$: 
$f(\gamma_\calN^m(N_1,N_2;t))=\gamma_\calP^m(f(N_1),f(N_2);t)$.
The exponential geodesics are preserved for subspace of $\calN$ with fixed mean $\mu$: $\calN_\mu$.
\end{Proposition}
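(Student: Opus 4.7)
The plan rests on the observation that $f$ is \emph{dual-affine}: it is affine in the moment parameter $\eta$ of the normal exponential family, and, at fixed mean $\mu$, its matrix inverse is affine in the natural parameter $\tfrac12\Sigma^{-1}$ of $\calN_\mu$. Since m-geodesics, resp.\ e-geodesics, in an exponential family are by definition straight lines in the moment, resp.\ natural, coordinates, these two observations immediately give the two halves of the proposition.

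For the mixture part, I would first identify the moment parameter of $N(\mu,\Sigma)$ for the sufficient statistic $t(x)=(x,xx^\top)$ as $\eta=(E[x],E[xx^\top])=(\mu,\Sigma+\mu\mu^\top)$, while the moment parameter of a centered $(d+1)$-variate normal in $\calN_0(d+1)\cong\calP(d+1)$ (with sufficient statistic $XX^\top$) is $P=E[XX^\top]$. Writing the augmented vector $X=(x^\top,1)^\top$ immediately gives
$$
f(\mu,\Sigma)=E[XX^\top]=\mattwotwo{\Sigma+\mu\mu^\top}{\mu}{\mu^\top}{1},
$$
exhibiting $f$ as an affine function of $\eta$. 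Because an m-geodesic is a straight line in $\eta$, its image under $f$ is a straight line in $P$, matching the m-geodesic of $\calP$ at the endpoints by construction. Reading off the blocks of the convex combination $t\,f(N_1)+(1-t)\,f(N_2)$ then recovers exactly the explicit $(\mu_t^m,\Sigma_t^m)$ from Section~\ref{sec:approxlength}.

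For the exponential part restricted to $\calN_\mu$, I would apply the block-inversion formula
$$
f(\mu,\Sigma)^{-1}=\mattwotwo{\Sigma^{-1}}{-\Sigma^{-1}\mu}{-\mu^\top\Sigma^{-1}}{1+\mu^\top\Sigma^{-1}\mu},
$$
which, at fixed $\mu$, depends \emph{linearly} on $\Sigma^{-1}$. The e-geodesic in $\calP(d+1)$ is by definition affine in $\tfrac12P^{-1}$, while the e-geodesic of $\calN_\mu$ is obtained by linear interpolation of $\theta=(\Sigma^{-1}\mu,\tfrac12\Sigma^{-1})$, which preserves the mean and produces the harmonic mean $\Sigma_t^e=\bigl((1-t)\Sigma_1^{-1}+t\Sigma_2^{-1}\bigr)^{-1}$. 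Forming $(1-t)f(\mu,\Sigma_1)^{-1}+t\,f(\mu,\Sigma_2)^{-1}$ and comparing with the displayed inverse evaluated at $\Sigma=\Sigma_t^e$ gives block-wise equality; inverting yields $\gamma_\calP^e(f(N_1),f(N_2);t)=f(\mu,\Sigma_t^e)=f(\gamma_{\calN_\mu}^e(N_1,N_2;t))$.

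The main obstacle, and the reason the exponential part does not extend to all of $\calN$, is the off-diagonal block $-\Sigma^{-1}\mu$ in the inverse: when $\mu$ also varies, convex combinations of $f(\mu_i,\Sigma_i)^{-1}$ can no longer be written as $f(\mu_t,\Sigma_t)^{-1}$, because the pair $(\mu,\Sigma^{-1})$ enters nonlinearly. I would close with a short remark recording this asymmetry, which is precisely the reason $\barN$ is m-autoparallel but not e-autoparallel inside $(\calP(d+1),g^\trace)$.
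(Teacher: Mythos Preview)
Your proposal is correct. For the mixture half you do exactly what the paper does---verify block by block that $f(\mu_t^m,\Sigma_t^m)=t\,f(N_1)+(1-t)\,f(N_2)$---only you wrap it in the cleaner observation that $f(\mu,\Sigma)=E[XX^\top]$ for $X=(x^\top,1)^\top$, which exhibits $f$ as an affine map of the moment parameter $\eta=(\mu,\Sigma+\mu\mu^\top)$ and makes the block identity immediate rather than something to be checked. The paper's own proof in fact stops after the mixture part and never addresses the exponential claim on $\calN_\mu$; your argument via the block-inverse formula, showing that $(\mu,\Sigma)\mapsto f(\mu,\Sigma)^{-1}$ is affine in $\Sigma^{-1}$ at fixed $\mu$, correctly supplies that missing half. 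Your closing remark identifying the off-diagonal block $-\Sigma^{-1}\mu$ as the obstruction in the general case is also on point and connects nicely with the paper's subsequent discussion of which autoparallelism $\barN$ enjoys inside $\calP(d+1)$.
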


\begin{proof}
For the $m$-geodesics, let us check that 
$$
f(\mu_t^m,\Sigma_t^m)=
\mattwotwo{\Sigma_t^m+\mu_t^m{\mu_t^m}^\top}{\mu_t^m}{(\mu_t^m)^\top}{1}=
t \underbrace{f(\mu_1,\Sigma_1)}_{\bar P_1}+(1-t)\underbrace{f(\mu_2,\Sigma_2)}_{\barP_2},
$$
since $\Sigma_t^m+\mu_t{\mu_t^m}^\top=
\bar\Sigma_t+t\mu_1\mu_1^\top+(1-t)\mu_2\mu_2^\top$ $= t (\Sigma_1+\mu_1\mu_1^\top)+(1-t)(\Sigma_2+\mu_2\mu_2^\top)$.
Thus we have $f(\gamma_\calN^m(N_1,N_2;t))=\gamma_\calP^m(\bar P_1,\bar P_2;t)$.
\end{proof}

Therefore all algorithms on $\calN$ which only require $m$-geodesics or $m$-projections~\cite{IG-2016} by minimizing the right-hand side of the KLD can be implemented by algorithms on $\calP$. See for example, the minimum enclosing ball approximation algorithm called BBC in~\cite{BBC-2005}. 
Notice that $\barN_\mu$ (fixed mean normal submanifolds) preserve both mixture and exponential geodesics: 
The submanifolds $\barN_\mu$ are said doubly auto-parallel~\cite{ohara2019doubly}.

\begin{Remark}
In~\cite{calin2014geometric} (p. 355), exercises 13.8 and 13.9 ask to prove the equivalence of the following statements for $\calS$ a submanifold of $\calM$:
\begin{itemize}
	\item $\calS$ is an exponential family $\Leftrightarrow$ $\calS$ is $\nabla^1$-autoparallel in $\calM$ (exercise 13.8),
	\item $\calS$ is a mixture family $\Leftrightarrow$ $\calS$ is $\nabla^{-1}$-autoparallel in $\calM$ (exercise 13.9).
\end{itemize}  
Let ${\bar P}=\mattwotwo{\Sigma+\mu\mu^\top}{\mu}{\mu^\top}{1}$ (with $|{\bar P}|=|\Sigma|$),
$
{\bar P}^{-1}=
\mattwotwo{\Sigma^{-1}}{-\Sigma^{-1}\mu}{-\mu^\top\Sigma^{-1}}{1+\mu^\top \Sigma^{-1}\mu}$, and $y=(x,1)$.
Then we have
\begin{eqnarray*}
q_{\bar P}(y) &=& \frac{1}{(2\pi)^\frac{d+1}{2} \sqrt{|\bar P|}} \exp\left(-\frac{1}{2}y^\top \bar P^{-1} y\right),\\
 &=& \frac{1}{(2\pi)^\frac{d+1}{2} \sqrt{|\Sigma|}} \exp\left(-\frac{1}{2}y^\top \bar P^{-1} y\right),\\
&=& \frac{1}{(2\pi)^\frac{d+1}{2} \sqrt{|\Sigma|}} \exp\left([x^\top\ 1] \mattwotwo{\Sigma^{-1}}{-\Sigma^{-1}\mu}{-\mu^\top\Sigma^{-1}}{1+\mu^\top \Sigma^{-1}\mu} \vectortwo{x}{1}\right).
\end{eqnarray*}
Thus $\barN=\{q_{\bar P}(x,1)\}$ is an exponential family.
Therefore we deduce that $\calP$ is $\nabla^e$-autoparallel in $\calP$.
However, $\barN$ is not a mixture family and thus $\calP$ is not $\nabla^{m}$-autoparallel in $\calP$.
\end{Remark}

\section{Conclusion and discussion}\label{sec:concl}
The Fisher-Rao distance between multivariate normals is not known in closed form:
It is thus usually approximated by costly geodesic shooting techniques~\cite{MVNGeodesicShooting-2014,GeodesicShooting-2016,barbaresco2019souriau} in practice which requires time-consuming computations of the Riemannian exponential map. In this work, we consider an alternative approach of approximating the Fisher-Rao distance by approximating the Riemannian lengths of closed-form curves. In particular, we considered the mixed exponential-mixture curved and the projected symmetric positive-definite matrix geodesic obtained from Calvo \& Oller  isometric SPD submanifold embedding~\cite{SDPMVN-1990}.
We also reported a fast to compute simplex square root of Jeffreys' divergence for the Fisher-Rao distance which beats the upper bound of~\cite{strapasson2015bounds} when normal distributions are not too far from each others. Finally, we shows that not only
Calvo \& Oller  SPD submanifold embedding~\cite{SDPMVN-1990} is isometric, it also preserves the Kullback-Leibler divergence, the Fenchel-Young divergence and the mixture geodesics. 
Our approximation technique extends to elliptical distributions~\cite{SDPElliptical-2002,chen2022multisensor} which generalize multivariate normal distributions.
We may also consider the Calvo \& Oller metric distance~\cite{SDPMVN-1990} (a lower bound on the Fisher-Rao distance) or the metric distance of the symmetric space~\cite{lovric2000multivariate} (which enjoys asymptotically Fisher-Rao geodesics~\cite{globke2021information}) which admits closed-form formula. 
The C\&O distance is well-suited for short Fisher-Rao distances while the symmetric space distance is well-tailored for large Fisher-Rao distances.
The calculations of these closed-form distances rely on eigenvalues. 

Yet another alternative distance is the Hilbert projective distance on the SPD cone~\cite{nielsen2019clustering} which only requires to calculate the minimal and maximal eigenvalues:
$$
\rho_H(P_1,P_2)= \log\frac{\lambda_{\mathrm{max}}(P_1^{-1}P_2)}{\lambda_{\mathrm{min}}(P_1^{-1}P_2)}.
$$
The dissimilarity is said projective on the SPD cone because $\rho_H(P_1,P_2)=0$ iff. $P_1=\lambda P_2$ for some $\lambda>0$.
However, it is a proper metric distance on $\barN$:
$$
\rho_H(N_1,N_2):=\rho_H(\barP_1,\barP_2),
$$
since $\barP_1=\lambda\barP_2$ iff. $\lambda=1$ (because the array element $P_1[d+1,d+1]=P_2[d+1,d+1]=1$), i.e., $\barP_1=\barP_2$ implying  $P_1=P_2$ by the isometric diffeomorphism $f$.
\vskip 0.5cm
\noindent Additional materials is available online at \url{https://franknielsen.github.io/FisherRaoMVN}

\vskip 0.5cm
\noindent {\bf Acknowledgments.}
I warmly thank Fr\'ed\'eric Barbaresco (Thales) and Mohammad Emtiyaz Khan (Riken AIP) for fruitful discussions about this work.

\vskip 0.5cm
\section*{Notations}

\begin{supertabular}{ll}
\underline{Entities} & \\
$N(\mu,\Sigma)$ & $d$-variate normal distribution (mean $\mu$, covariance matrix $\Sigma$)\\
$p_{(\mu,\Sigma)}(x)$ & Probability density function of $N(\mu,\Sigma)$\\
$q_\Sigma(y)=p_{(0,\Sigma)}(y)$ & Probability density function of $N(0,\Sigma)$\\
$P$ & Positive-definite matrix\\
& \\
\underline{Mappings} & \\
$\barP=f_1(N)$ & Calvo \& Oller mapping~\cite{SDPMVN-1990} (1990)\\
$\hatP=f_{-\frac{1}{d+1},1}(N)=\hat{f}(N)$ & Calvo \& Oller mapping~\cite{SDPElliptical-2002} (2002) or~\cite{lovric2000multivariate}\\
& \\
\underline{Sets} & \\
$\calN$ & Set of multivariate normal distributions $N(\mu,\Sigma)$ (MVNs)\\
$\bbP$ & Symmetric positive-definite matrix cone (SPD matrix cone)\\
$\bbP_c$ & Set of SPD matrices with fixed determinant $c$ ($\bbP=\bbR_{>0}\times \bbP_c$)\\
SSPD, $\bbP_1$ & Set of SPD matrices with unit determinant\\
$\Lambda$ & Parameter space of $N(\mu,\Sigma)$: $\bbR^d\times\bbP(d)$\\
$\calN_0$, $\calP$ & Set of zero-centered normal distributions $N(0,\Sigma)$\\
$\calN_\Sigma$ & Set of normal distributions $N(\mu,\Sigma)$ with fixed $\Sigma$\\
$\calN_\mu$ & Set of normal distributions $N(\mu,\Sigma)$ with fixed $\mu$\\
$\barN$ & Set of SPD matrices $f(N)$\\
& \\
\underline{Riemannian length elements} & \\
MVN Fisher & $\ds_{\Fisher,\calN}^2=\dmu^\top \Sigma^{-1} \dmu + \frac{1}{2}\tr\left(\left(\Sigma^{-1}\dSigma\right)^2\right)$\\
$0$-MVN Fisher & $\ds_{\Fisher,\calN_0}^2=\frac{1}{2}\tr\left(\left(\Sigma^{-1}\dSigma\right)^2\right)$\\
SPD trace & $\ds_{\beta,\trace}^2=\beta \tr((P\,\dP)^2)$  (when $\beta=\frac{1}{2}$, $\ds_\trace=\ds_{\Fisher,\calN_0}$)\\
SPD Calvo\& Oller metric & $\ds^2_\CO=\frac{1}{2}\left(\frac{\dbeta}{\beta}\right)^2 + \beta\dmu^\top \Sigma^{-1}\dmu+\frac{1}{2}\tr\left(\left(\Sigma^{-1}\dSigma\right)^2\right)$\\
 & (with $\ds_\CO=\ds_{\calP}(f(\mu,\Sigma))$) \\
 & when $\beta=1$, $\ds_\CO=\ds_{\Fisher,\calN}$ in $\barN$\\
SPD symmetric space & $\ds_\SS^2=\frac{1}{2}\dmu^\top \Sigma^{-1} \dmu +\tr\left(\left(\Sigma^{-1}\dSigma\right)^2\right)-\frac{1}{2}\tr^2\left(\Sigma^{-1}\dSigma\right)$ \\
Siegel upper space & $\ds_\SH^2 (Z) = 2 \tr\left(Y^{-1}\dZ\ Y^{-1}\dZbar\right)$ ($\ds_\SH(iY)=2\ds_{\Fisher,\calN_0}$)\\
& \\
\underline{Manifolds and submanifolds} & \\
$\calM$ ($=\calM_{\calN}$) & Manifold of multivariate normal distributions\\
$\calS_\mu\subset \calM$ & Submanifold  of MVNs with $\mu$ prescribed\\
$\calS_\Sigma\subset \calM$ & Submanifold  of MVNs with $\Sigma$ prescribed\\
$\calM_\Sigma$ & manifold of $\calN_\Sigma$ (non-embedded in $\calM$)\\
$\calM_\mu$ &  manifold of $\calN_\mu$ (non-embedded in $\calM$)\\
$\calS_{[v],\Sigma}$ & Submanifold of MVN set $\{N(\lambda v,\Sigma)\st \lambda>0\}$ \\
$\calP$ & manifold of symmetric positive-definite matrices\\
& \\
\underline{Distances} & \\
$\rho_N(N_1,N_2)$ & Fisher-Rao distance between normal distributions $N_1$ and $N_2$\\
$\rho_{\SPD}(P_1,P_2)$ & Riemannian SPD distance between  $P_1$ and $P_2$\\
$\rho_\CO(N_1,N_2)$ & Calvo \& Oller distance from embedding $N$ to $\barP=f(N)$\\
$\rho_\SS(N_1,N_2)$ & Symmetric space distance from embedding $N$ to $\hatP=\hat{f}(N)$\\
$D_\KL(N_1,N_2)$ & Kullback-Leibler divergence  between   MVNs $N_1$ and $N_2$\\
$D_J(N_1,N_2)$ & Jeffreys divergence  between   MVNs $N_1$ and $N_2$\\
& \\
\underline{Geodesics and curves} & \\
$\gamma_\calN^\FR(N_1,N_2;t)$ & Fisher-Rao geodesic between  MVNs $N_1$ and $N_2$\\
$\gamma_\calP^\FR(P_1,P_2;t)$ & Fisher-Rao geodesic between  SPD $P_1$ and $P_2$\\
$\gamma_\calN^e(N_1,N_2;t)$ & exponential geodesic between  MVNs $N_1$ and $N_2$\\
$\gamma_\calN^m(N_1,N_2;t)$ & mixture geodesic between MVNs $N_1$ and $N_2$\\
$\gamma_\calN^\CO(N_1,N_2;t)$ & projection curve (not geodesic) of $\gamma_\calP(\barP_1,\barP_2;t)$ onto $\barN$\\
\end{supertabular}

 \bibliographystyle{plain}
 \bibliography{RaoMVN6BIB}

\appendix

\section{Fisher-Rao distance between normal distributions sharing the same covariance matrix}\label{sec:BFRsamecovar}

The Rao distance between $N_1=N(\mu_1,\Sigma)$ and $N_2=N(\mu_2,\Sigma)$ has been reported in closed-form~\cite{FRMVNReview-2020} (Proposition~3). We shall explain the geometric method with full description as follows:
Let $(e_1,\ldots,e_d)$ be the standard frame of $\bbR^d$ (ordered basis): The $e_i$'s are the unit vectors of the axis $x_i$'s. 
Let $P$ be an orthogonal matrix such that $P\, (\mu_2-\mu_1)=\|\mu_2-\mu_1\|_2\, e_1$ 
(i.e., matrix $P$ aligns vector $\mu_2-\mu_1$ to the first axis $x_1$). Let $\Delta_{12}=\|\mu_2-\mu_1\|_2$ be the Euclidean distance between $\mu_1$ and $\mu_2$.
Further factorize matrix $P\Sigma P^\top$ using the LDL decomposition (a variant of the Cholesky decomposition) as 
$P\Sigma P^\top = LDL^\top$ where $L$ is an lower triangular matrix with all diagonal entries equal to one (lower unitriangular matrix of unit determinant) and $D$ a diagonal matrix. 
Let $\sigma_{12}=\sqrt{D_{11}}$.
Then we have~\cite{FRMVNReview-2020}:
\begin{equation}\label{eq:FRSigma}
\rho_\Sigma(\mu_1,\mu_2)= \rho_\calN(N(\mu_1,\Sigma),N(\mu_2,\Sigma)) = \rho_\calN(N(0,\sigma),N(\Delta_{12}e_1,\sigma_{12})).
\end{equation}
Note that the right-hand side term is the Fisher-Rao distance between univariate normal distributions of Eq.~\ref{eq:FR1D}.

To find matrix $P$, we proceed as follows: Let $u=\frac{\mu_2-\mu_1}{\|\mu_2-\mu_1\|_2}$ be the normalized vector to align on axis $x_1$.
Let $v=u-e_1$. Consider the  Householder reflection matrix~\cite{householder1958unitary} $M=I-\frac{2vv^\top}{\|v\|_2^2}$, where $vv^\top$ is a outer product matrix.
Since Householder reflection matrices have determinant $-1$, we let $P$ be a copy of $M$ with the last row multiplied by $-1$ 
so that we get $\det(P)=1$. By construction, we have $Pu=\|\mu_2-\mu_1\|_2\, e_1$.
We then use the affine-invariance property of the Fisher-Rao distance as follows:
\begin{eqnarray*}
\rho_\calN(N(\mu_1,\Sigma),N(\mu_2,\Sigma)) &=& \rho_\calN(N(0,\Sigma),N(\mu_2-\mu_1,\Sigma)),\\
&=& \rho_\calN(N(0,P\Sigma P^\top),N(P(\mu_2-\mu_1),P\Sigma P^\top)),\\
&=& \rho_\calN(N(0,P\Sigma P^\top),N(\Delta_{12}\, e_1,P\Sigma P^\top)),\\
&=& \rho_\calN(N(0,L D L^\top),N(\Delta_{12}\, e_1,L D L^\top)),\\
&=& \rho_\calN(N(0,D),N(\Delta_{12}\, e_1, D)).
\end{eqnarray*}
The last row follows from the fact that $L^{-1}e_1=e_1$ since $L^{-1}$ is an upper unitriangular matrix, and 
$L^\top (L^{-1})^\top=(L^{-1}L)^\top=I$.
The right-hand side Fisher-Rao distance is computed from Eq.~\ref{eq:FR1D}.

\section{Embedding multivariate normal distributions in the Siegel upper space}\label{sec:Siegel}
The Siegel upper space is the space of symmetric complex matrices $Z=X+iY=Z^\top$ with imaginary positive-definite matrices $Y\succ 0$~\cite{siegel2014symplectic,nielsen2020siegel} (so-called Riemann matrices~\cite{frauendiener2019efficient}):
\begin{equation}
\SH(d) := \left\{ Z = X + iY \st X\in\Sym(d), Y\in\calP(d)\right\},
\end{equation}
where $\Sym(d)$ is the space of symmetric real $d\times d$ matrices.
$\SH(1)$ corresponds to the Poincar\'e upper plane.
See Figure~\ref{fig:SiegelDistance} for an illustration.

The Siegel infinitesimal square line element is
\begin{equation}
\ds_\SH^2 (Z) = 2 \tr\left(Y^{-1}\dZ\ Y^{-1}\dZbar\right).
\end{equation}
When $X=0$ and $Z=iY$, we have $\dZ=i\dY$, $\dZbar=-i\dY$,  and it follows that
$$
\ds_\SH^2 (iY)=2 \tr\left( (Y^{-1}\dY)^2\right).
$$
That is, four times the square length of the Fisher matrix of centered normal distributions 
$\ds^2_{\calN_0}=\frac{1}{2}\tr \left( (P^{-1}\dP)^2\right)$.

The  Siegel  distance~\cite{siegel2014symplectic} between $Z_1$ and $Z_2\in\SH(d)$ is  
\begin{equation}\label{eq:SiegelDistance}
\rho_\SH(Z_1,Z_2) = \sqrt{\sum_{i=1}^d \log^2\left(\frac{1+\sqrt{r_i}}{1-\sqrt{r_i}}\right)},
\end{equation}
where
\begin{equation}
r_i=\lambda_i\left(R(Z_1,Z_2)\right),
\end{equation}
with $R(Z_1,Z_2)$ denoting  the matrix generalization of the {cross-ratio}
\begin{equation}\label{eq:matrixcr}
R(Z_1,Z_2)  := (Z_1-Z_2)(Z_1-\barZ_2)^{-1} (\barZ_1 -\barZ_2) (\barZ_1 -Z_2)^{-1},
\end{equation}
and $\lambda_i(M)$ denoting the $i$-th largest (real) eigenvalue of (complex) matrix $M$.
(In practice, we numerically have to round off the tiny imaginary parts to get proper real eigenvalues~\cite{nielsen2020siegel}.)
The Siegel upper half space is an homogeneous space where the Lie Group $\mathrm{SU}(d,d)/S(U(d)\times U(d))$  acts transitively on it.

We can embed a multivariate normal distribution $N=(\mu,\Sigma)$ into $\SH(d)$ as follows:
$$
N(\mu,\Sigma)\rightarrow Z(N):=\left(\mu\mu^\top+i\Sigma\right),
$$
and consider the Siegel distance on the embedded normal distributions as another potential metric distance between multivariate normal distributions: 
\begin{equation}
\rho_\SH(N_1,N_2)=\rho_\SH(Z(N_1),Z(N_2)).
\end{equation}
Notice that the real matrix part of the $Z(N)$'s are all of rank one by construction. 

\begin{figure}
\centering
\includegraphics[width=0.55\textwidth]{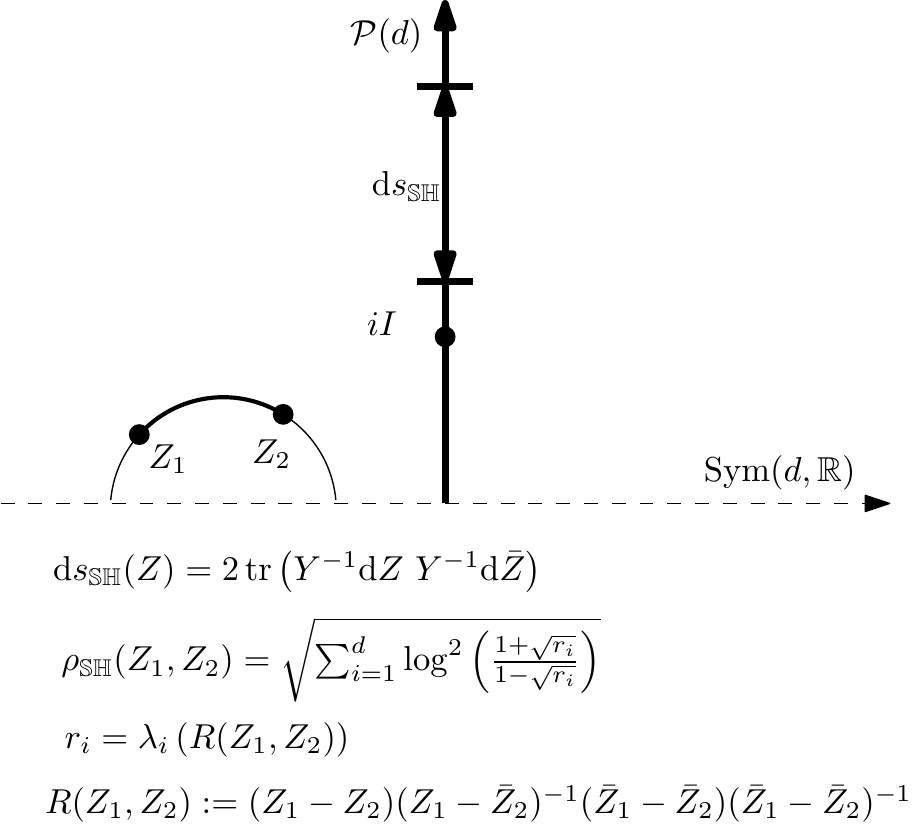}
\caption{Siegel upper space generalizes the Poincar\'e hyperbolic upper plane.
 \label{fig:SiegelDistance}}
\end{figure}

\end{document}